\newcommand{\tob}[1]{\todo[color=green!30,inline,caption={}]{\textbf{T:} #1}}
\newcommand{\fab}[1]{\todo[color=red!30,inline,caption={}]{\textbf{F:} #1}}
\newcommand{\dav}[1]{\todo[color=blue!30,inline,caption={}]{\textbf{D:} #1}}
\newcommand{\andrea}[1]{\todo[color=yellow!30,inline,caption={}]{\textbf{A:} #1}}
\newcommand{\comment}[1]{ }
\theoremstyle{definition}
\newtheorem*{notation*}{Notation}
\newcommand{\freccia}[3]{#2 \colon #1  \to #3}
\newcommand{\duefreccia}[3]{\xymatrix@C=0.5cm{#2 \colon #1  \ar@{=>}[r] &  #3}}
\newcommand{\duefreccianoname}[2]{#1\leq #2}
\newcommand{\comsquare}[8]{ \xymatrix@+1pc{ 
#1 \ar[r]^{#5} \ar[d]_{#6} & #2 \ar[d]^{#7} \\
#3 \ar[r]_{#8} & #4 
}}
\newcommand{\pullback}[8]{ \xymatrix@+1pc{ 
#1 \pullbackcorner \ar[r]^{#5} \ar[d]_{#6} & #2 \ar[d]^{#7} \\
#3 \ar[r]_{#8} & #4 
}}
\newcommand{\quadratocomm}[8]{ \xymatrix@+1pc{ 
#1 \ar[r]^{#5} \ar[d]_{#6} & #2 \ar[d]^{#7} \\
#3 \ar[r]_{#8} & #4 
}}
\newcommand{\comsquarelargo}[8]{ \xymatrix@+1pc{ 
#1 \ar[rr]^{#5} \ar[d]_{#6} && #2 \ar[d]^{#7} \\
#3 \ar[rr]_{#8} && #4 
}}
\newcommand{\parallelmorphisms}[4]{\xymatrix@+1pc{
#1 \ar @<+4pt>[r]^{#2} \ar @<-4pt>[r]_{#3} & #4
}}
\newcommand{\relation}[4]{\xymatrix@+1pc{
\angbr{#2}{#3}\colon #1 \ar @<+4pt>[r] \ar @<-4pt>[r] & #4
}}
\newcommand{\frecceparalleleopposte}[4]{\xymatrix@+1pc{
#1 \ar@<+4pt>[r]^{#2} \ar@<-4pt>@{<-}[r]_{#3} & #4
}}
\newcommand{\equalizer}[6]{\xymatrix@+1pc{
#1 \ar[r]^{#2} & #3 \ar @<+4pt>[r]^{#4} \ar @<-4pt>[r]_{#5} & #6
}}
\newcommand{\coequalizer}[6]{\xymatrix@+1pc{
 #1 \ar @<+4pt>[r]^{#2} \ar @<-4pt>[r]_{#3} & #4 \ar[r]^{#5} & #6
}}
\newcommand{\subobject}[3]{\xymatrix{
#1 \ar@{>->}[r]^{#2} & #3
}}
\newcommand{\pullbackcorner}[1][ul]{\save*!/#1+1.2pc/#1:(1,-1)@^{|-}\restore}
\def\mA{\mathcal{A}}
\def\mC{\mathcal{C}}
\def\mD{\mathcal{D}}
\def\FinSet{\mathbf{FinSet}}
\def\FinRel{\mathbf{FinRel}}
\def\Set{\mathbf{Set}}
\def\Rel{\mathbf{Rel}}
\def\Preord{\mathbf{Preord}}
\newcommand{\pfn}[1]{#1\mbox{-}\mathbf{Fun}}
\newcommand{\fn}[1]{#1\mbox{-}\mathbf{TFun}}
\newcommand{\total}[1]{#1\mbox{-}\mathbf{Total}}
\def\ox{\otimes}
\def\id{\operatorname{ id}}         %identity
\def\op{\operatorname{ op}}         %opposite
\def\dom{\operatorname{ dom}}      %domain
\def\I{I}                          %identity monoidale
\newcommand{\angbr}[2]{\langle #1,#2 \rangle}
\tikzstyle{nodonero}=[fill=black, draw=black, shape=circle]
\tikzstyle{box}=[fill=white, draw=black, shape=rectangle]
\theoremstyle{plain} %italico
\newtheorem{mytheorem}{Theorem}[section]
\newtheorem{mycorollary}[mytheorem]{Corollary}
\newtheorem{mylemma}[mytheorem]{Lemma}
\newtheorem{myproposition}[mytheorem]{Proposition}
\newtheorem{mydefinition}[mytheorem]{Definition}
\theoremstyle{definition} %stampatello
\newtheorem{myproblem}[mytheorem]{Problem}
\newtheorem{myremark}[mytheorem]{Remark}
\newtheorem{myexample}[mytheorem]{Example}
\newcommand{\stringdiagnabla}[1]{\begin{tikzpicture}[scale=0.60, transform shape]
	\begin{pgfonlayer}{nodelayer}
		\node [style=none] (0) at (-4.25, 0) {};
		\node [style=none] (1) at (-2.75, 0.5) {};
		\node [style=none] (2) at (-2.75, -0.5) {};
		\node [style=none] (3) at (-2, -0.5) {};
		\node [style=none] (4) at (-2, 0.5) {};
		\node [style=nodonero] (5) at (-3.25, 0) {};
		\node [style=none] (7) at (-3.75, 0.25) {#1};
	\end{pgfonlayer}
	\begin{pgfonlayer}{edgelayer}
		\draw (0.center) to (5);
		\draw [bend left, looseness=1.25] (5) to (1.center);
		\draw (1.center) to (4.center);
		\draw [bend right] (5) to (2.center);
		\draw (2.center) to (3.center);
	\end{pgfonlayer}
	\pgfsetbaseline{-0.4ex}
\end{tikzpicture}}
\newcommand{\stringdiagbang}[1]{\begin{tikzpicture}[scale=0.60, transform shape]
	\begin{pgfonlayer}{nodelayer}
		\node [style=none] (8) at (2, 0) {};
		\node [style=nodonero] (9) at (3, 0) {};
		\node [style=none] (10) at (2.5, 0.25) {#1};
	\end{pgfonlayer}
	\begin{pgfonlayer}{edgelayer}
		\draw (8.center) to (9);
	\end{pgfonlayer}
		\pgfsetbaseline{-.4ex}
\end{tikzpicture}}
\newcommand{\stringdiagfreccia}[3]{\begin{tikzpicture}[scale=0.60, transform shape]
	\begin{pgfonlayer}{nodelayer}
		\node [style=box] (5) at (1.75, 0) {#2};
		\node [style=none] (7) at (3, 0) {};
		\node [style=none] (8) at (2.5, 0.25) {#3};
		\node [style=none] (9) at (0.5, 0) {};
		\node [style=none] (10) at (1, 0.25) {#1};
	\end{pgfonlayer}
	\begin{pgfonlayer}{edgelayer}
		\draw (5) to (7);
		\draw (9.center) to (5);
	\end{pgfonlayer}
	\pgfsetbaseline{-.4ex}
\end{tikzpicture}
}
\begin{document}

\title[From gs-monoidal to oplax cartesian categories]{From gs-monoidal to oplax cartesian categories:\\ constructions and functorial completeness}

%%=============================================================%%
%% Prefix	-> \pfx{Dr}
%% GivenName	-> \fnm{Joergen W.}
%% Particle	-> \spfx{van der} -> surname prefix
%% FamilyName	-> \sur{Ploeg}
%% Suffix	-> \sfx{IV}
%% NatureName	-> \tanm{Poet Laureate} -> Title after name
%% Degrees	-> \dgr{MSc, PhD}
%% \author*[1,2]{\pfx{Dr} \fnm{Joergen W.} \spfx{van der} \sur{Ploeg} \sfx{IV} \tanm{Poet Laureate} 
%%                 \dgr{MSc, PhD}}\email{iauthor@gmail.com}
%%=============================================================%%

\author[1]{\fnm{Tobias} \sur{Fritz}}\email{tobias.fritz@uibk.ac.at}
\equalcont{These authors contributed equally to this work.}
\author[2]{\fnm{Fabio} \sur{Gadducci}}\email{fabio.gadducci@unipi.it}
%\equalcont{These authors contributed equally to this work.}
\equalcont{These authors contributed equally to this work.}
\author*[2]{\fnm{Davide} \sur{Trotta}}\email{trottadavide92@gmail.com}
%\equalcont{These authors contributed equally to this work.}
\equalcont{These authors contributed equally to this work.}
\author[2]{\fnm{Andrea} \sur{Corradini}}\email{andrea.corradini@unipi.it}
%\equalcont{These authors contributed equally to this work.}
\equalcont{These authors contributed equally to this work.}
\affil[1]{\orgdiv{Department of Mathematics}, \orgname{University of Innsbruck}, \orgaddress{ \city{Innsbruck}, \country{Austria}}}

\affil[2]{\orgdiv{Department of Computer Science}, \orgname{University of Pisa}, \orgaddress{\city{Pisa}, \country{Italy}}}

%%==================================%%
%% sample for unstructured abstract %%
%%==================================%%

\abstract{Originally introduced in the context of the algebraic approach to term graph rewriting, 
    the notion of gs-monoidal category has surfaced a few times under different monikers 
    in the last decades.
    They can be thought of as symmetric monoidal categories whose arrows are generalised relations, with enough structure
    to talk about domains and partial functions, but less structure than cartesian bicategories.
    %as such they have been recently investigated,
    %
    The aim of this paper is threefold. The first goal is to extend the original definition
    of gs-monoidality by enriching it with a preorder on arrows,
    giving rise to what we call oplax cartesian categories.
%    terms using the graphical formalisms of string diagrams, at the same time highlighting all the basic 
%    properties underlying its structure.
 %   \tob{I'm not convinced that mentioning string diagrams is notable. On the other hand, perhaps preorder enrichment and oplax cartesian cats would be?}
    %
    Second, we show that (preorder-enriched) gs-monoidal categories naturally arise both as Kleisli categories and as
    span categories, and the relation between the resulting formalisms is explored.
%     resulting in the introduction of the novel concept of weakly affine monad.
    %
    Finally, %after considering preorder enrichments on gs-monoidal categories, 
    we present two theorems concerning Yoneda embeddings on the one hand
    and functorial completeness on the other, the latter inducing a
    completeness result also for lax functors from oplax cartesian categories to $\Rel$.}

\keywords{gs-monoidal category, cartesian bicategory, Kleisli category, span category, Yoneda embedding, functorial completeness}

%%\pacs[JEL Classification]{D8, H51}

%%\pacs[MSC Classification]{35A01, 65L10, 65L12, 65L20, 65L70}

\maketitle
\setcounter{tocdepth}{1}
\tableofcontents

\section{Introduction}\label{sec1}

The notion of \emph{gs-monoidal category} was originally introduced in the context of the algebraic approach to term graph rewriting~\cite{gadducci1996,CorradiniGadducci97}.
Their study was pursued in a series of papers (see e.g. \cite{CorradiniGadducci99,CorradiniGadducci99b} among others), including their application to the functorial semantics of relational and partial algebras~\cite{CorradiniGadducci02}.
%
%Briefly, gs-monoidal categories are symmetric monoidal categories equipped with two families of arrows, a duplicator $\nabla_A$ and a
%discharger $!_A$ for each object $A$. 
%A few coherence axioms hold, but since no form of naturality is required for duplicators and dischargers, 
%their presence suggests to think of the arrows of such categories as abstract counterparts of 
%relations and partial functions.

% Changed by Andrea
Briefly, gs-monoidal categories are symmetric monoidal categories equipped with two families of arrows, a duplicator $\freccia A {\nabla_A} {A \otimes A}$ and a
discharger $\freccia A {!_A} I$ for each object $A$, subject to a few coherence axioms. 
These arrows are not required to be natural in $A$, and this fact is precisely what accounts for the difference between considering terms as trees or as graphs.
In fact, it is often observed that if naturality holds, then the monoidal product 
is the categorical product, and thus the category is cartesian monoidal~\cite{Fox:CACC}, a structure used in Lawvere theories~\cite{Lawvere869} to 
represent abstractly \emph{functional} algebraic operations and their compositions.
From this perspective, the lack of naturality of duplicators and dischargers in gs-monoidal categories leads one to think of their arrows as generalised \emph{relations} and \emph{partial functions} instead.

Conceptually, gs-monoidal categories are a weaker version of cartesian bicategories~\cite{CARBONI198711,cartesianbicatII},
lacking the dual arrows for duplicators and dischargers. The relevance of cartesian bicategories to mathematics and computer science increased in the last years, see e.g.~\cite{bonchi_seeber_sobocinski_18,Bonchi2017c,Fong19}, and the notion of gs-monoidality has surfaced a few times under different monikers.
The simplicity of this notion and its pervasiveness has led several 
authors to investigate such structures, in some cases independently developed.
This is the case for the work by Golubtsov~\cite{Gol99}, whose 
\emph{categories of information transformers} are very similar to gs-monoidal categories.
More recently, gs-monoidal categories appeared as
\emph{copy-discard categories}~\cite{cho_jacobs_2019}.
Their \emph{affine} variant is the basis for
a recent approach to categorical probability, where these categories are dubbed
 \emph{Markov categories}~\cite{Fritz_2020} based on the interpretation of the arrows
 as generalised Markov kernels.

There is a conceptual hierarchy of categories, from symmetric monoidal to cartesian ones, sketched in the following diagram, with forgetful functors going upwards
{\small
		\begin{equation}
			\label{cats_diagram}
			\begin{tikzcd}[row sep=9pt, column sep=0pt]
				& \mbox{symmetric monoidal} \ar[d,<-] & \\
				& \mbox{gs-monoidal} \ar[dl,<-] \ar[dr,<-]&\\ 
				\mbox{Markov}  \ar[dr,<-]& & \hspace{-1cm}\parbox{4cm}{\centering restriction\\ (with restriction products)} \ar[dl,<-]\\
				& \mbox{cartesian monoidal}  &				
			\end{tikzcd}
		\end{equation}
}
Indeed as already mentioned, a \emph{gs-monoidal} category is a symmetric monoidal category with 
 a duplicator ${\nabla_A}$ and a discharger ${!_A}$ arrow for each object $A$, subject to a few coherence axioms but not
 to naturality; \emph{Markov} categories are gs-monoidal categories where the discharger is natural; and dually
 \emph{restriction categories with restriction products}~\cite{Cockett07} are precisely those where the duplicator is natural,
as explained later in Section~\ref{sec:on gs}. Finally, if both dischargers and duplicators are natural, the monoidal product
is the categorical product and thus we get \emph{cartesian monoidal} categories.

%2- The previous explanation (and the large number of examples) should be enough to convince the reader that studying gs-monoidal categories is useful and interesting.

Motivated by the current interest in these categorical structures, in this work we aim to explore more in depth the original notion in several directions. 
We first provide an overview of the main characteristics of gs-monoidality and of its preorder-enriched version, which culminates in the introduction of \emph{oplax cartesian categories}.
Our presentation adopts the graphical formalism of string diagrams, and it highlights the main properties 
underlying gs-monoidal categories.
%
 %3- We start introducing the main contribuitions of our work:
  %3.1- in the first part we will provide a modern overview of the notion of gs-monoidal  categories and oplax cartesian categories and their categorical structures.;
%In detail, the first part of this work is devoted to recalling the main definitions regarding gs-monoidal categories and their enriched version, that we call \emph{oplax cartesian cateries}, and to establishing a complete and self-contained analysis of the categorical properties. 
%Given the large number of works written in the last few years about this structure, our first goal is to summarize and review the main structural properties of these categories and their connections with related notions, such as p-categories~\cite{Robinson88} and 
This also facilitates a simple treatment of the connection with a well-established proposal for the categorical modelling of partiality,
restriction categories \cite{Cockett02,Cockett03,Cockett07}. 

  %3.2- Then, we start exploring  two different setting in which the gs-monoidal structure as well as the oplax cartesian structure frequently arrise: Kleisli categories and Span categories. After that, we introduce a new family of monads, that we call weakly affine, in order to show how a Kleisli category can be embedded in a Span category. Such a notion can be proved to be a generalization of the ordinary notion of affine monad in the sense that every affine monad is weakly affine.
 
 %Then, we show that gs-monoidal categories naturally arise both in terms of Kleisli categories and of span categories, and the relation between these the re- sulting formalisms is thoroughly explored, resulting in the introduction of the novel concept of weakly affine monad.
  
 We then explore  two different settings in which the gs-monoidal and oplax cartesian structures naturally arise,
 namely, Kleisli categories and span categories.
Similarly to what was observed in \cite{Fritz_2020} in the context of Markov categories and affine monads, the Kleisli category
of a commutative monad on a gs-monoidal category $\mA$ is shown to be gs-monoidal.
Moreover, for a suitable notion of gs-monoidal monad $T$ on an oplax cartesian 
category $\mA$, we show that the canonical functor $\mA_T \to \mA$ preserves the oplax cartesian 
structure.

Moving on, we generalise an almost folklore result, namely, that the category $\mathbf{PSpan}(\mA)$, 
%obtained by taking the preorder reflection of the 2-cells of the bicategory of spans of a category with finite limits $\mA$ (and identifying two arrows whenever 
%they are isomorphic as spans)  is gs-monoidal and in fact admits a canonical oplax cartesian structure.
obtained from the bicategory of spans of a category with finite limits $\mA$ by identifying arrows whenever 
they are isomorphic as spans and by taking the preorder reflection of the 2-cells, is gs-monoidal and 
admits a canonical oplax cartesian structure. 
We thus obtain a gs-monoidal functor between Kleisli and span categories.
%
%More importantly, we also establish a ``bridge'' between the Kleisli and span formalisms via a suitable functor.
%\tob{Looks like a false advertisement to me, since as a reader I would expect such a ``bridge'' to preserve the embedding of $\mA$ into both categories, and in %particular to \emph{not} map all Kleisli arrows in $\mA_T$ to total functional arrows in $\mathbf{PSpan}$}
%This is of interest as this lax functor is often faithful, and thus embeds the Kleisli category $\mA_T$ into $\mathbf{PSpan}(\mA)$.
%For this purpose, we introduce a generalisation of the notion of \emph{affine monad}~\cite{Jacobs16}, which we call \emph{weakly affine}. We show that if $T$ is a weakly affine monad on a category with finite limits $\mA$, then the Kleisli category $\mA_T$ admits a Frobenius lax functor~\cite{Frobenius_functor_original,Frobenius_functor} to the category of spans $\mathbf{PSpan}(\mA)$ that preserves the oplax cartesian structure.
%This is of interest as this lax functor is often faithful, and thus embeds the Kleisli category $\mA_T$ into $\mathbf{PSpan}(\mA)$.
% spans associated to the starting category $\mA$.
  %3.2- Then we conclude the work by providing a completneess like result (for suitable bilax functors) for oplax cartesina categories. 
  %Finally we present gs-monoidal version of the usual Yoneda embedding.

Finally, we turn our attention to functorial completeness questions for preorder-enriched gs-monoidal categories: which types of functors are needed in order to distinguish any two distinct morphisms?
Our first step in this direction is to formulate and prove a gs-monoidal Yoneda embedding. Our second one
%, and arguably more relevant, step 
is to obtain a functorial completeness theorem with respect to suitable bilax functors~\cite{aguiar2010}, 
which induces a completeness theorem
also for functors of oplax cartesian categories into $\Rel$
and thus offers a tool for exploiting gs-monoidal and oplax cartesian categories in the setting 
of functorial semantics for relational and partial algebras, much in the spirit of~\cite{CorradiniGadducci02}.

%The paper has the following structure. 
\subsubsection*{Overview}

Section 2 recalls the basic notions of gs-monoidal  categories and overviews their main properties.
Section 3 introduces preorder-enriched gs-monoidal categories and oplax cartesian categories, setting the stage for
our later results.
Section 4 presents a characterisation of Kleisli and span categories having the gs-monoidal and oplax cartesian structure, respectively,
and provides a formal link between those categories via an oplax cartesian functor.
Section 5 shows a gs-monoidal Yoneda embedding and a completeness result for oplax cartesian categories. 
Section 6 discusses future work.

The reader may find some categorical background in Appendices~\ref{sec:lax_app}--\ref{sec: strong and commutative monad}.

\section{Background on gs-monoidal categories}
\label{sec:on gs}

%\tob{Why is there no spacing before and after the definitions/theorems/etc? I've been trying to figure this out unsuccessfully. (I've also made the definitions italico, as they standardly are, yes?)}

Originally introduced in the context of algebraic approaches to term graph rewriting~\cite{gadducci1996,CorradiniGadducci97}, the notion of 
\emph{gs-monoidal category} has been 
developed in a series of papers \cite{CorradiniGadducci99, CorradiniGadducci02, CorradiniGadducci99b}.
We recall here the basic definition adopting the graphical formalism of string diagrams, referring to \cite{Selinger2011} for an overview of various notions of monoidal categories and their associated diagrammatic calculus.

\iffalse
\textbf{Notation}: from now on we adopt the graphical representation of monoidal category.
\begin{itemize}
    \item an arrow $\freccia{A}{f}{B}$ is represented as $\stringdiagfreccia{A}{f}{B}$;
    \item the arrow $\freccia{A}{\nabla_A}{A\ox A}$ is represented  as  $\stringdiagnabla{A}$;
    \item the arrow $\freccia{A}{!_A}{I}$ is represented as $\stringdiagbang{A}$.
\end{itemize}
\fi
%\newpage

\begin{mydefinition}[gs-monoidal category]\label{def gs-monoidal cat}
A \textbf{gs-monoidal category} $\mC$ is a symmetric monoidal category, where we denote by $\otimes$ the tensor product and by $\I$ the unit, such that every object $A$ of $\mC$ is equipped with arrows 
%$\nabla:=\{\freccia{A}{\nabla_A}{A\otimes A}\}_{A\in \ob (\mC_0)}$ 
$$\stringdiagnabla{A}:A \to A\ox A \qquad \stringdiagbang{A}: A\to I$$
which satisfy the commutative comonoid axioms
%\tob{What does ACI stand for?}
%and $!:=\{\freccia{A}{!_A}{\I}\}_{A\in \ob (\mC_0)}$ are two families of morphisms such that $\nabla_{\I}=!_{\I}=\id_{\I}$ and satisfying the following axioms
%\[
%\xymatrix@+1.5pc{
%A\ar[r]^{\nabla_A}\ar[d]_{\nabla_A} & A\otimes A\ar[d]^{\id_A\ox \nabla_A} & A\ar[r]^{\nabla_A}\ar[dr]_{\id_A} & A\ox %A\ar[d]^{\id_A\ox !_A}& A\ar[r]^{\nabla_A}\ar[dr]_{\nabla_A} & A\ox A\ar[d]^{\rho_{A,A}}\\
%A\ox A \ar[r]_{\nabla_A\otimes \id_A}& A\ox A\ox A
%&&A=A\ox \I&& A\ox A}\]
\begin{equation}
\label{comm_comon}
\begin{split}
\begin{tikzpicture}[scale=0.60, transform shape]
	\begin{pgfonlayer}{nodelayer}
		\node [style=none] (0) at (-5.5, 1.25) {};
		\node [style=none] (1) at (-4, 1.75) {};
		\node [style=none] (2) at (-4, 0.75) {};
		\node [style=none] (3) at (-3.25, 0.75) {};
		\node [style=none] (4) at (-3.25, 1.75) {};
		\node [style=nodonero] (5) at (-4.5, 1.25) {};
		\node [style=none] (7) at (-5, 1.5) {A};
		\node [style=none] (8) at (-3.25, 1.75) {};
		\node [style=none] (9) at (-1.75, 2.25) {};
		\node [style=none] (10) at (-1.75, 1.25) {};
		\node [style=none] (11) at (-1, 1.25) {};
		\node [style=none] (12) at (-1, 2.25) {};
		\node [style=nodonero] (13) at (-2.25, 1.75) {};
		\node [style=none] (14) at (0, 1.25) {};
		\node [style=none] (15) at (1.5, 1.75) {};
		\node [style=none] (16) at (1.5, 0.75) {};
		\node [style=none] (17) at (2.25, 0.75) {};
		\node [style=none] (18) at (2.25, 1.75) {};
		\node [style=nodonero] (19) at (1, 1.25) {};
		\node [style=none] (20) at (0.5, 1.5) {A};
		\node [style=none] (21) at (-1, 0.75) {};
		\node [style=none] (23) at (-0.5, 1.25) {$=$};
		\node [style=none] (24) at (2.25, 0.75) {};
		\node [style=none] (25) at (3.75, 1.25) {};
		\node [style=none] (26) at (3.75, 0.25) {};
		\node [style=none] (27) at (4.5, 0.25) {};
		\node [style=none] (28) at (4.5, 1.25) {};
		\node [style=nodonero] (29) at (3.25, 0.75) {};
		\node [style=none] (30) at (4.5, 1.75) {};
		\node [style=none] (49) at (-7.25, -1.5) {};
		\node [style=none] (50) at (-7.25, -1.5) {};
		\node [style=none] (51) at (-5.75, -1) {};
		\node [style=none] (52) at (-5.75, -2) {};
		\node [style=nodonero] (55) at (-6.25, -1.5) {};
		\node [style=none] (56) at (-4.25, -1) {};
		\node [style=none] (57) at (-4.25, -2) {};
		\node [style=none] (58) at (-3.25, -1.5) {};
		\node [style=none] (59) at (-1.75, -1) {};
		\node [style=none] (60) at (-1.75, -2) {};
		\node [style=none] (61) at (-1, -2) {};
		\node [style=none] (62) at (-1, -1) {};
		\node [style=nodonero] (63) at (-2.25, -1.5) {};
		\node [style=none] (64) at (-2.75, -1.25) {A};
		\node [style=none] (65) at (-3.75, -1.5) {$=$};
		\node [style=none] (66) at (-1, -2) {};
		\node [style=none] (73) at (-6.75, -1.25) {A};
		\node [style=none] (87) at (1.25, -1.5) {};
		\node [style=none] (88) at (2.75, -1) {};
		\node [style=none] (89) at (2.75, -2) {};
		\node [style=none] (90) at (3.5, -2) {};
		\node [style=none] (91) at (3.5, -1) {};
		\node [style=nodonero] (92) at (2.25, -1.5) {};
		\node [style=none] (93) at (1.75, -1.25) {A};
		\node [style=none] (94) at (3.5, -2) {};
		\node [style=nodonero] (95) at (3.5, -1) {};
		\node [style=none] (96) at (4, -1.5) {$=$};
		\node [style=none] (97) at (4.5, -1.5) {};
		\node [style=none] (98) at (5.5, -1.5) {};
		\node [style=none] (99) at (5, -1.25) {A};
	\end{pgfonlayer}
	\begin{pgfonlayer}{edgelayer}
		\draw (0.center) to (5);
		\draw [bend left, looseness=1.25] (5) to (1.center);
		\draw (1.center) to (4.center);
		\draw [bend right] (5) to (2.center);
		\draw (2.center) to (3.center);
		\draw (8.center) to (13);
		\draw [bend left, looseness=1.25] (13) to (9.center);
		\draw (9.center) to (12.center);
		\draw [bend right] (13) to (10.center);
		\draw (10.center) to (11.center);
		\draw (14.center) to (19);
		\draw [bend left, looseness=1.25] (19) to (15.center);
		\draw (15.center) to (18.center);
		\draw [bend right] (19) to (16.center);
		\draw (16.center) to (17.center);
		\draw (3.center) to (21.center);
		\draw (24.center) to (29);
		\draw [bend left, looseness=1.25] (29) to (25.center);
		\draw (25.center) to (28.center);
		\draw [bend right] (29) to (26.center);
		\draw (26.center) to (27.center);
		\draw (18.center) to (30.center);
		\draw (50.center) to (55);
		\draw [bend left, looseness=1.25] (55) to (51.center);
		\draw [bend right] (55) to (52.center);
		\draw [in=-180, out=0] (52.center) to (56.center);
		\draw [in=180, out=0, looseness=0.75] (51.center) to (57.center);
		\draw (58.center) to (63);
		\draw [bend left, looseness=1.25] (63) to (59.center);
		\draw (59.center) to (62.center);
		\draw [bend right] (63) to (60.center);
		\draw (60.center) to (61.center);
		\draw (87.center) to (92);
		\draw [bend left, looseness=1.25] (92) to (88.center);
		\draw (88.center) to (91.center);
		\draw [bend right] (92) to (89.center);
		\draw (89.center) to (90.center);
		\draw (97.center) to (98.center);
	\end{pgfonlayer}
\end{tikzpicture}
\end{split}
\end{equation}
and the monoidal multiplicativity axioms
\begin{equation}
\begin{split}
\label{monoidal_mult}
\begin{tikzpicture}[scale=0.60, transform shape]
	\begin{pgfonlayer}{nodelayer}
		\node [style=none] (0) at (-7, 0) {};
		\node [style=none] (1) at (-5.5, 0.5) {};
		\node [style=none] (2) at (-5.5, -0.5) {};
		\node [style=none] (3) at (-4.75, -0.5) {};
		\node [style=none] (4) at (-4.75, 0.5) {};
		\node [style=nodonero] (5) at (-6, 0) {};
		\node [style=none] (6) at (-6.75, 0.25) {$A\otimes B$};
		\node [style=none] (9) at (-4, 0) {=};
		\node [style=none] (10) at (-3.25, 1) {};
		\node [style=none] (11) at (-1.75, 1.5) {};
		\node [style=none] (12) at (-1.75, 0.5) {};
		\node [style=none] (13) at (0, -0.5) {};
		\node [style=none] (14) at (0, 1.5) {};
		\node [style=nodonero] (15) at (-2.25, 1) {};
		\node [style=none] (16) at (-2.75, 1.25) {A};
		\node [style=none] (18) at (-3.25, -1) {};
		\node [style=none] (19) at (-1.75, -0.5) {};
		\node [style=none] (20) at (-1.75, -1.5) {};
		\node [style=none] (21) at (0, -1.5) {};
		\node [style=none] (22) at (0, 0.5) {};
		\node [style=nodonero] (23) at (-2.25, -1) {};
		\node [style=none] (24) at (-2.75, -0.75) {B};
		\node [style=none] (25) at (-6, -4) {};
		\node [style=nodonero] (26) at (-5, -4) {};
		\node [style=none] (27) at (-5.75, -3.75) {$A\otimes B$};
		\node [style=none] (29) at (-4, -4) {=};
		\node [style=none] (30) at (-3.25, -3.5) {};
		\node [style=nodonero] (31) at (-2.25, -3.5) {};
		\node [style=none] (32) at (-2.75, -3.25) {A};
		\node [style=none] (33) at (-3.25, -4.5) {};
		\node [style=nodonero] (34) at (-2.25, -4.5) {};
		\node [style=none] (35) at (-2.75, -4.25) {B};
		\node [style=none] (37) at (0, -1.25) {B};
		\node [style=none] (38) at (0, 0.75) {B};
		\node [style=none] (39) at (0, -0.25) {A};
		\node [style=none] (40) at (0, 1.75) {A};
		\node [style=none] (41) at (5.5, -4) {$=$};
		\node [style=none] (42) at (6, -4) {};
		\node [style=none] (43) at (7, -4) {};
		\node [style=none] (44) at (6.5, -3.75) {I};
		\node [style=none] (45) at (3.75, -4) {};
		\node [style=nodonero] (46) at (4.75, -4) {};
		\node [style=none] (47) at (4.25, -3.75) {I};
		\node [style=none] (48) at (2.75, 0) {};
		\node [style=none] (49) at (4.25, 0.5) {};
		\node [style=none] (50) at (4.25, -0.5) {};
		\node [style=none] (51) at (5, -0.5) {};
		\node [style=none] (52) at (5, 0.5) {};
		\node [style=nodonero] (53) at (3.75, 0) {};
		\node [style=none] (54) at (3, 0.25) {$I$};
		\node [style=none] (55) at (5.5, 0) {=};
		\node [style=none] (56) at (6, 0) {};
		\node [style=none] (57) at (7, 0) {};
		\node [style=none] (58) at (6.5, 0.25) {I};
	\end{pgfonlayer}
	\begin{pgfonlayer}{edgelayer}
		\draw (0.center) to (5);
		\draw [bend left, looseness=1.25] (5) to (1.center);
		\draw (1.center) to (4.center);
		\draw [bend right] (5) to (2.center);
		\draw (2.center) to (3.center);
		\draw (10.center) to (15);
		\draw [bend left, looseness=1.25] (15) to (11.center);
		\draw (11.center) to (14.center);
		\draw [bend right] (15) to (12.center);
		\draw [in=180, out=0, looseness=1.25] (12.center) to (13.center);
		\draw (18.center) to (23);
		\draw [bend left, looseness=1.25] (23) to (19.center);
		\draw [in=-180, out=0] (19.center) to (22.center);
		\draw [bend right] (23) to (20.center);
		\draw (20.center) to (21.center);
		\draw (25.center) to (26);
		\draw (30.center) to (31);
		\draw (33.center) to (34);
		\draw (42.center) to (43.center);
		\draw (45.center) to (46);
		\draw (48.center) to (53);
		\draw [bend left, looseness=1.25] (53) to (49.center);
		\draw (49.center) to (52.center);
		\draw [bend right] (53) to (50.center);
		\draw (50.center) to (51.center);
		\draw (56.center) to (57.center);
	\end{pgfonlayer}
\end{tikzpicture}
\end{split}
\end{equation}
%\[\xymatrix@+1.5pc{
%A\ox B\ar[rd]_{\nabla_A\ox \nabla_B}\ar[r]^{\nabla_{A\ox B}} & A \ox B\ox A\ox B\ar[d]^{\id_A\ox \rho_{B,A}\ox \id_B}& A\ox B\ar[r]^{!_{A\ox B} }\ar[dr]_{!_A\ox !_B}& \I\ar[d]^{\id_{\I}}\\
%& A\ox A\ox B\ox B&& \I=\I\ox \I
%}\]
%\end{mydefinition}
%\begin{mydefinition}\label{def gs-monoidal cat}
\end{mydefinition}
Symbolically, we also write $\nabla_A : A \to A \otimes A$ for the first structure arrow above and call it \textbf{duplicator}, and similarly $!_A : A \to I$ for the 
\textbf{discharger}.
\begin{myremark}
	The monoidal multiplicativity equations come in two pairs, one pair specifying the duplicator and discharger on a tensor product object and one for doing so on the monoidal unit $I$. However, the two equations for $I$ imply each other upon using the counitality axiom (bottom-right of \eqref{comm_comon}), thus one of them could be omitted.
	% from the definition.
\end{myremark}
\begin{myremark}
Following the style of presentation of string diagrams, the axioms are given for \emph{strict} monoidal categories, i.e. where 
the coherence isomorphisms for associativity and unitality are identities. This is without loss of generality, either by considering
the strictification
of the categories at hand~\cite[Proposition~3.28]{fongspivak2019} or by adding the required coherence isomorphisms to the axioms.
\end{myremark}

\begin{myexample}
	\label{comon_is_gs}
	If $\mC$ is the category of commutative comonoids in a symmetric monoidal category $\mA$, with arrows given by the arrows of $\mA$ without any further conditions, then $\mC$ is a gs-monoidal category in a canonical way: duplicators and dischargers are given by the comultiplications and counits of the comonoids, respectively.
	Then the commutative comonoid equations~\eqref{comm_comon} hold by definition of commutative comonoid, and the monoidal multiplicativity equations~\eqref{monoidal_mult} hold by definition of the monoidal product of commutative comonoids.
\end{myexample}

%===================
%This section recalls the definitions of lax, colax, bilax and Frobenius monoidal functors, see e.g. \cite{aguiar2010}.
%Throughout, $\mC$ and $\mD$ are symmetric monoidal categories with tensor functor $\otimes$ and monoidal unit $I$, and we assume that 
%$\otimes$ strictly associates without loss of generality in order to keep the diagrams simple.
%Left and right unitors are denoted by $\lambda$ and $\rho$, respectively\footnote{Strict unitality could also be assumed, but this would make some diagrams potentially confusing.}, and braidings by $\gamma$.
%
%\begin{mydefinition}\label{def:lax monoidal functor}
%A functor $\freccia{\mC}{F}{\mD}$ is \textbf{lax monoidal} if it is equipped with a natural transformation 
%\[\freccia{\otimes \circ \, (F\times F)}{\psi}{F\circ \otimes}\]
%and an arrow $\freccia{I}{\psi_0}{F(I)}$ such that the associativity diagrams
%==============
As for functors between symmetric monoidal categories, also functors between gs-monoidal categories come in several variants. The definitions that follow refer to lax, oplax, strong and strict monoidal functors (see e.g.~\cite{aguiar2010}) recalled in Appendix~\ref{sec:lax_app}. 
\begin{mydefinition}\label{def gs monoidal functor}
	For gs-monoidal categories $\mC$ and $\mD$, a functor $\freccia{\mC}{F}{\mD}$ is
	\begin{enumerate}
		\item \textbf{lax gs-monoidal} if it is equipped with a lax symmetric monoidal structure  
			\[
				\freccia{\otimes \circ \, (F\times F)}{\psi}{F\circ \otimes}, \qquad \freccia{I}{\psi_0}{F(I)} 
			\]
such that the following diagrams commute %for all $A$ in $\mC$
        \[\begin{tikzcd}[column sep=tiny]
            {F(A)} && {F(A\otimes A)}   &&& FA && {F(I)} \\
            & {F(A)\otimes F(A)} &&&&& I
%            FA && {F(I)} \\
%            & I
            \arrow["{F(\nabla_A)}", from=1-1, to=1-3]
            \arrow["{\nabla_{FA}}"', from=1-1, to=2-2]
            \arrow["{\psi_{A,A}}"', from=2-2, to=1-3]
            \arrow[ "{F(!_A)}", from=1-6, to=1-8]
             \arrow["{!_{FA}}"', from=1-6, to=2-7]
            \arrow["{\psi_0}"', from=2-7, to=1-8]
%            \arrow["{!_{FA}}"', from=3-1, to=4-2]
%            \arrow["{\psi_0}"', from=4-2, to=3-3]
       \end{tikzcd}\]
		\item \textbf{oplax gs-monoidal} if it is equipped with an oplax symmetric monoidal structure 
		\[
			\freccia{F\circ \otimes }{\phi}{ \otimes\circ (F\times F)}, \qquad \freccia{F(I)}{\phi_0}{I}
		\]
% $\phi, \phi_0$ 
such that the following diagrams commute %for all $A$ in $\mC$
		\[\begin{tikzcd}[column sep=tiny]
            {F(A)} && {F(A\otimes A)}   &&& FA && {F(I)} \\
            & {F(A)\otimes F(A)} &&&&& I
%            FA && {F(I)} \\
%            & I
%            {F(A)} && {F(A\otimes A)} \\
%            & {F(A)\otimes F(A)} \\
%            FA && {F(I)} \\
%            & I
            \arrow["{F(\nabla_A)}", from=1-1, to=1-3]
            \arrow["{\nabla_{FA}}"', from=1-1, to=2-2]
            \arrow["{\phi_{A,A}}", from=1-3, to=2-2]
            \arrow["{F(!_A)}", from=1-6, to=1-8]
            \arrow["{!_{FA}}"', from=1-6, to=2-7]
            \arrow["{\phi_0}", from=1-8, to=2-7]
          \end{tikzcd}\]
		\item \textbf{strong gs-monoidal} if it is strong symmetric monoidal and the above diagrams commute;
		\item \textbf{strict gs-monoidal} if it is strict symmetric monoidal, thus in particular 
		$F(A \otimes B) = F(A) \otimes F(B)$ and $F(I) = I$, and satisfies
			\[
				F(\nabla_A) = \nabla_{F(A)}, \qquad F(!_A) = {}!_{F(A)}.
			\]
	\end{enumerate}
\end{mydefinition}

As in the purely monoidal case, ``gs-monoidal functor'' without further qualification refers to the strong version.

Recall also the notion of \emph{bilax symmetric monoidal} functor from Definition~\ref{def:bilax monoidalfunctor}. We then obtain the definition of \textbf{bilax gs-monoidal functor} upon adding the four commutative triangles above to that definition.

\subsection{Total and functional arrows, domains}

In this section, we investigate the categorical structure of functional and total arrows. 
This allows us to explore the connection between gs-monoidal categories, Markov categories, restriction categories \cite{Cockett02}, 
p-categories \cite{Robinson88}, and cartesian monoidal categories, as depicted in~\eqref{cats_diagram}.

\begin{mydefinition}\label{def: C-fun and C-tot}
Let $\mC$ be a gs-monoidal category. An arrow $\freccia{A}{f}{B}$ is
%\begin{itemize}
%\item  
$\mC$-\textbf{total} if
\[
\begin{tikzpicture}[scale=0.60, transform shape]
	\begin{pgfonlayer}{nodelayer}
		\node [style=none] (0) at (3.75, 0) {=};
		\node [style=none] (1) at (4.5, 0) {};
		\node [style=nodonero] (2) at (5.5, 0) {};
		\node [style=none] (3) at (5, 0.25) {A};
		\node [style=box] (5) at (1.75, 0) {$f$};
		\node [style=nodonero] (7) at (3, 0) {};
		\node [style=none] (8) at (2.5, 0.25) {B};
		\node [style=none] (9) at (0.5, 0) {};
		\node [style=none] (10) at (1, 0.25) {A};
	\end{pgfonlayer}
	\begin{pgfonlayer}{edgelayer}
		\draw (1.center) to (2);
		\draw (5) to (7);
		\draw (9.center) to (5);
	\end{pgfonlayer}
\end{tikzpicture}
\]
and $\mC$-\textbf{functional} if
\[
\begin{tikzpicture}[scale=0.60, transform shape]
	\begin{pgfonlayer}{nodelayer}
		\node [style=none] (0) at (-5.75, 0) {=};
		\node [style=box] (11) at (-8.75, 0) {$f$};
		\node [style=none] (13) at (-8, 0.25) {B};
		\node [style=none] (14) at (-10, 0) {};
		\node [style=none] (15) at (-9.5, 0.25) {A};
		\node [style=none] (17) at (-7, 0.5) {};
		\node [style=none] (18) at (-7, -0.5) {};
		\node [style=none] (19) at (-6.25, -0.5) {};
		\node [style=none] (20) at (-6.25, 0.5) {};
		\node [style=nodonero] (21) at (-7.5, 0) {};
		\node [style=none] (23) at (-6.25, -0.5) {};
		\node [style=none] (24) at (-5.25, 0) {};
		\node [style=none] (25) at (-3.75, 0.5) {};
		\node [style=none] (26) at (-3.75, -0.5) {};
		\node [style=nodonero] (29) at (-4.25, 0) {};
		\node [style=none] (30) at (-4.75, 0.25) {A};
		\node [style=box] (32) at (-2.5, -0.5) {$f$};
		\node [style=none] (34) at (-1.75, -0.75) {B};
		\node [style=none] (35) at (-3.75, -0.5) {};
		\node [style=none] (36) at (-3.25, -0.75) {};
		\node [style=none] (37) at (-1.25, -0.5) {};
		\node [style=box] (39) at (-2.5, 0.5) {$f$};
		\node [style=none] (40) at (-1.75, 0.75) {B};
		\node [style=none] (42) at (-3.25, 0.75) {};
		\node [style=none] (43) at (-1.25, 0.5) {};
	\end{pgfonlayer}
	\begin{pgfonlayer}{edgelayer}
		\draw (14.center) to (11);
		\draw [bend left, looseness=1.25] (21) to (17.center);
		\draw (17.center) to (20.center);
		\draw [bend right] (21) to (18.center);
		\draw (18.center) to (19.center);
		\draw (11) to (21);
		\draw (24.center) to (29);
		\draw [bend left, looseness=1.25] (29) to (25.center);
		\draw [bend right] (29) to (26.center);
		\draw (35.center) to (32);
		\draw (32) to (37.center);
		\draw (39) to (43.center);
		\draw (25.center) to (39);
	\end{pgfonlayer}
\end{tikzpicture}
\]
We denote the subcategory of $\mC$-functional arrows by $\pfn{\mC}$, the one of $\mC$-total arrows by 
$\total{\mC}$, and the one of $\mC$-total and $\mC$-functional arrows by $\fn{\mC}$.
\end{mydefinition}

\begin{myexample}\label{comon2}
%	\tob{new}
	If $\mC$ is the gs-monoidal category of commutative comonoids in a symmetric monoidal category $\mA$, i.e. the full subcategory of $\mA$ whose objects are the commutative comonoids of $\mA$ (see Example~\ref{comon_is_gs}), then the total arrows are the counital ones, the functional arrows the comultiplicative ones, and therefore $\fn{\mC}$ is the category of commutative comonoids and comonoid homomorphisms in $\mA$.
\end{myexample}

\begin{myexample}\label{ex: rel}
The category $\Rel$ of sets and relations with the monoidal structure $\freccia{\Rel\times \Rel}{\otimes}{\Rel}$ given by the cartesian product of sets is gs-monoidal~\cite{CorradiniGadducci02}. In particular, $\Rel$-functional arrows are precisely partial functions, $\Rel$-total arrows are precisely total relations, and $\Rel$-total, $\Rel$-functional arrows are precisely functions.
\end{myexample}

\begin{myexample}\label{ex: Markov categories}
	Every Markov category $\mathcal{M}$ in the sense of~\cite{Fritz_2020,FritzGPR23} is gs-monoidal. In fact, Markov categories are exactly those gs-monoidal categories whose monoidal unit $I$ is terminal, or equivalently those for which every arrow is $\mathcal{M}$-total.
	The $\mathcal{M}$-functional arrows in the sense of Definition \ref{def: C-fun and C-tot} are precisely those called \textit{deterministic} there. 
%We refer to \cite{fritz_2020_B,Fritz_2020} for a complete presentation of Markov categories.
\end{myexample}

We note a useful property of $\mC$-total and $\mC$-functional arrows, 
which generalise corresponding observations for Markov 
categories~\cite[Lemma~10.12]{Fritz_2020}.

\begin{myproposition}\label{prop: ! and nabla are total and functional}
Let $\mC$ be a gs-monoidal category. Then
%\begin{enumerate}
%    \item For every object $A$, the arrows $!_A$ and $\nabla_A$ are $\mC$-total and $\mC$-functional;
%%    \item every arrow $\nabla_A$  is $\mC$-total and $\mC$-functional;
%%    \item $\mC$-total arrows are closed under the monoidal operation and composition;
%%    \item $\mC$-functional  and $\mC$-total arrows are closed under the monoidal operation and composition.
   %\item 
   $\pfn{\mC}$ and $\total{\mC}$ are gs-monoidal subcategories of $\mC$. 
   %\end{enumerate}
\end{myproposition}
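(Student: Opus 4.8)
The plan is to verify, for each of the two claimed subcategories, the three ingredients of a gs-monoidal subcategory: closure under composition and identities; closure under the monoidal product $\otimes$ on arrows; and containment of the structural arrows, namely the symmetries $\sigma_{A,B}$ together with every duplicator $\nabla_A$ and every discharger $!_A$ (associators and unitors being identities by strictness). Once these hold, the inclusion into $\mC$ is automatically a strict gs-monoidal functor and the two subcategories are gs-monoidal in their own right, so this suffices.

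First I would dispatch $\total{\mC}$, which is entirely routine, recalling that $f$ is $\mC$-total iff $!_B\circ f = {}!_A$. For identities this is immediate; for a composite $g\circ f$ of total arrows, $!_C\circ g\circ f = {}!_B\circ f = {}!_A$ by associativity and totality of $g$ then of $f$; for a tensor $f\otimes f'$ of total arrows one rewrites $!_{B\otimes B'} = {}!_B\otimes{}!_{B'}$ using the monoidal multiplicativity axiom~\eqref{monoidal_mult}, applies functoriality of $\otimes$ to reach $({}!_B\circ f)\otimes({}!_{B'}\circ f')$, and invokes totality of $f$ and $f'$. Finally $\nabla_A$ is total by the counitality axiom (bottom row of~\eqref{comm_comon}) together with $!_{A\otimes A} = {}!_A\otimes{}!_A$; the discharger $!_A$ is total because $!_I = \id_I$ (the unit case of~\eqref{monoidal_mult}); and $\sigma_{A,B}$ is total by naturality of $\sigma$ and $\sigma_{I,I} = \id_I$.

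For $\pfn{\mC}$, recalling that $f$ is $\mC$-functional iff $\nabla_B\circ f = (f\otimes f)\circ\nabla_A$, closure under identities and composition is again a short diagram chase: $\nabla_C\circ(g\circ f) = (g\otimes g)\circ\nabla_B\circ f = (g\otimes g)\circ(f\otimes f)\circ\nabla_A = ((g\circ f)\otimes(g\circ f))\circ\nabla_A$. The points needing a little more care are closure under $\otimes$ and the fact that $\nabla_A$, $!_A$ and the symmetries are themselves functional. Each reduces to a string-diagram identity obtained by combining the monoidal multiplicativity axiom for the duplicator --- which presents $\nabla_{A\otimes A'}$ as $\nabla_A\otimes\nabla_{A'}$ followed by a middle-swap symmetry --- with coassociativity, cocommutativity and naturality of $\sigma$; drawing the pictures makes each one transparent. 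Concretely: for $\nabla_A$ one expands both sides of $\nabla_{A\otimes A}\circ\nabla_A = (\nabla_A\otimes\nabla_A)\circ\nabla_A$ into four-fold copies and matches them using cocommutativity; for $!_A$ one uses $\nabla_I = \id_I$ and counitality; for $\sigma_{A,B}$ one uses that the middle-swap symmetry intertwines $\nabla_{A\otimes B}$ and $\nabla_{B\otimes A}$; and for $f\otimes f'$ one slides the two copiers past the symmetry and then applies functionality of $f$ and of $f'$ separately.

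I expect the only genuine bookkeeping to be in the $\otimes$-closure of functional arrows, where one must track where the symmetry introduced by the monoidal multiplicativity axiom lands; nothing here is conceptually hard, but it is the place where a string-diagram picture rather than a one-line equation really earns its keep. Once all these facts are assembled, $\pfn{\mC}$ and $\total{\mC}$ are symmetric monoidal subcategories of $\mC$ containing every duplicator and discharger, hence gs-monoidal subcategories, with strict gs-monoidal inclusions.
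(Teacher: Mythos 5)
Your proposal is correct and follows essentially the same route as the paper: check closure under composition and $\otimes$ (which the paper dismisses as straightforward), then verify that every $\nabla_A$ and $!_A$ is both total and functional using $!_I=\id_I$, the monoidal multiplicativity axioms, and the comonoid equations. You are somewhat more thorough in explicitly treating the symmetries $\sigma_{A,B}$ and in separating the computations that the paper combines (its single diagram for ``$!_A$ is functional'' simultaneously establishes ``$\nabla_A$ is total''), but this is added detail rather than a different argument.
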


\begin{proof}
Closure under composition and $\otimes$ is straightforward.
To finish the proof, we show that both $!$ and $\nabla$ are $\mC$-total and $\mC$-functional: every arrow $!_A$ is $\mC$-total because $!_{\I}!_A=\id_I!_A$ since $!_{\I}=\id_{\I}$ by  Definition \ref{def gs-monoidal cat}. Moreover $!_A$ is $\mC$-functional because
%\[
%	(!_A\ox {}!_A)\nabla_A=(!_A\ox \id_I)(\id_A\ox \,!_A)\nabla_A = {}!_A = \nabla_{\I} !_A
%\]

\[
\begin{tikzpicture}[scale=0.60, transform shape]
	\begin{pgfonlayer}{nodelayer}
		\node [style=none] (0) at (-5.75, 0) {=};
		\node [style=none] (14) at (-9.5, 0) {};
		\node [style=none] (15) at (-8.5, 0.25) {A};
		\node [style=none] (17) at (-7, 0.5) {};
		\node [style=none] (18) at (-7, -0.5) {};
		\node [style=none] (19) at (-6.25, -0.5) {};
		\node [style=none] (20) at (-6.25, 0.5) {};
		\node [style=nodonero] (21) at (-7.5, 0) {};
		\node [style=none] (23) at (-6.25, -0.5) {};
		\node [style=none] (24) at (-5.25, 0) {};
		\node [style=nodonero] (29) at (-4, 0) {};
		\node [style=none] (30) at (-4.75, 0.25) {A};
		\node [style=nodonero] (44) at (-6.25, 0.5) {};
		\node [style=nodonero] (45) at (-6.25, -0.5) {};
	\end{pgfonlayer}
	\begin{pgfonlayer}{edgelayer}
		\draw [bend left, looseness=1.25] (21) to (17.center);
		\draw (17.center) to (20.center);
		\draw [bend right] (21) to (18.center);
		\draw (18.center) to (19.center);
		\draw (24.center) to (29);
		\draw (14.center) to (21);
	\end{pgfonlayer}
\end{tikzpicture}
\]
by strictness of $\mC$ and by the axioms of Definition \ref{def gs-monoidal cat}.
The same calculation also shows that every duplicator $\nabla_A$ is $\mC$-total.
It is $\mC$-functional by the first monoidal multiplicativity axiom combined with the commutative comonoid equations.
\end{proof}

We recall a few almost folklore properties (see e.g.\cite{Fox:CACC, Gol99, Bruni2003}) needed later on.

\begin{myproposition}\label{thm C gs-cat implies C-total maps have weak cart prod}\label{thm total functions have products}
Let $\mC$ be a gs-monoidal category. 
Then
\begin{enumerate}
	\item the subcategory $\total{\mC}$ of $\mC$-total arrows has weak binary products given by $\otimes$;
	\item the subcategory $\fn{\mC}$ of $\mC$-functional and $\mC$-total arrows is cartesian monoidal.
\end{enumerate}
\end{myproposition}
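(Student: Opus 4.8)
The plan is to exhibit, for objects $A$ and $B$, the object $A \otimes B$ together with the projections
\[
\pi_A := \id_A \otimes \, !_B \,\colon\, A \otimes B \to A, \qquad \pi_B := \, !_A \otimes \id_B \,\colon\, A \otimes B \to B
\]
(reading $A \otimes \I$ as $A$ and $\I \otimes B$ as $B$ via the unitors) and, for any $\freccia{C}{f}{A}$ and $\freccia{C}{g}{B}$, the candidate mediating arrow $\langle f,g\rangle := (f \otimes g)\circ\nabla_C \colon C \to A\otimes B$. For~(1) I would verify that these data form a weak binary product inside $\total{\mC}$, and for~(2) that they form a genuine product inside $\fn{\mC}$, with $\I$ as terminal object; since the monoidal structure then literally coincides with the product structure, $\fn{\mC}$ is cartesian monoidal. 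Every closure statement used below is covered by Proposition~\ref{prop: ! and nabla are total and functional}: $\total{\mC}$, $\pfn{\mC}$, and hence $\fn{\mC} = \total{\mC}\cap\pfn{\mC}$, are gs-monoidal subcategories, and the dischargers and duplicators lie in all of them.

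For part~(1) I would first check $\mC$-totality of the structure arrows. For the projections this is immediate from the monoidal multiplicativity equation $!_{A\otimes B} = \, !_A \otimes \, !_B$ in~\eqref{monoidal_mult} (together with $!_\I = \id_\I$); for $\langle f,g\rangle$ one computes
\[
!_{A\otimes B}\circ\langle f,g\rangle \;=\; (!_A f \otimes \, !_B g)\circ\nabla_C \;=\; (!_C \otimes !_C)\circ\nabla_C \;=\; !_{C\otimes C}\circ\nabla_C \;=\; !_C,
\]
using $\mC$-totality of $f$ and $g$ and then $\mC$-totality of $\nabla_C$. Next I would verify the projection equations
\[
\pi_A\circ\langle f,g\rangle \;=\; (\id_A\otimes\, !_B)\circ(f\otimes g)\circ\nabla_C \;=\; (f\otimes\, !_C)\circ\nabla_C \;=\; f\circ(\id_C\otimes\, !_C)\circ\nabla_C \;=\; f,
\]
where the third step uses $\mC$-totality of $g$ and the last uses the counitality axiom (bottom-right of~\eqref{comm_comon}); the identity $\pi_B\circ\langle f,g\rangle = g$ is symmetric. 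This establishes the weak universal property, and I would remark that uniqueness of the mediating arrow genuinely fails in general, precisely because $\nabla$ need not be natural.

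For part~(2), the unit $\I$ is terminal already in $\total{\mC}$: any $\mC$-total arrow $\freccia{C}{g}{\I}$ satisfies $g = \, !_\I\circ g = \, !_C$ since $!_\I = \id_\I$, so $!_C$ is the unique such arrow, and it lies in $\fn{\mC}$ by Proposition~\ref{prop: ! and nabla are total and functional}. The projections $\pi_A,\pi_B$ are $\mC$-functional because $\pfn{\mC}$ is closed under $\otimes$ and contains the dischargers, and $\langle f,g\rangle$ is $\mC$-functional whenever $f,g$ are, since $\pfn{\mC}$ is closed under $\otimes$ and composition and contains $\nabla_C$; thus all the arrows above live in $\fn{\mC}$. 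It remains to prove uniqueness of the mediating arrow, for which the key ingredient is the identity
\[
(\pi_A\otimes\pi_B)\circ\nabla_{A\otimes B} \;=\; \id_{A\otimes B}.
\]
I would prove this by expanding $\nabla_{A\otimes B} = (\id_A\otimes\sigma_{A,B}\otimes\id_B)\circ(\nabla_A\otimes\nabla_B)$ (first monoidal multiplicativity equation of~\eqref{monoidal_mult}): the symmetry $\sigma_{A,B}$ merely reorders the four wires, $\pi_A$ keeps the first copy of $A$ and discards the first copy of $B$, $\pi_B$ discards the second copy of $A$ and keeps the second copy of $B$, and the two resulting composites $(\id_A\otimes\, !_A)\circ\nabla_A$ and $(!_B\otimes\id_B)\circ\nabla_B$ collapse to identities by counitality; this is most transparent drawn as a string diagram. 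Granting the identity, for any $\mC$-functional $\freccia{C}{h}{A\otimes B}$ with $\pi_A\circ h = f$ and $\pi_B\circ h = g$ we get
\[
h \;=\; (\pi_A\otimes\pi_B)\circ\nabla_{A\otimes B}\circ h \;=\; (\pi_A\otimes\pi_B)\circ(h\otimes h)\circ\nabla_C \;=\; (\pi_A h\otimes\pi_B h)\circ\nabla_C \;=\; (f\otimes g)\circ\nabla_C \;=\; \langle f,g\rangle,
\]
the second equality being $\mC$-functionality of $h$. Hence $A\otimes B$ with $\pi_A,\pi_B$ is the categorical product in $\fn{\mC}$, and together with terminality of $\I$ this shows that $\fn{\mC}$ is cartesian monoidal.

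Most of this is routine bookkeeping with unitors together with the two closure statements; the one step that deserves care is the identity $(\pi_A\otimes\pi_B)\circ\nabla_{A\otimes B} = \id_{A\otimes B}$, where one must keep track of the symmetry $\sigma_{A,B}$ hidden in the multiplicativity axiom for $\nabla_{A\otimes B}$ — immediate in string-diagram form, but a purely equational proof also needs naturality of the symmetry and the coherence isomorphisms. Conceptually, the real content is the contrast between~(1) and~(2): it is exactly $\mC$-functionality of the mediating arrow (and of its competitors) that upgrades the weak product to a genuine one, which is the categorical reason why Markov categories and restriction categories with restriction products sit strictly between gs-monoidal and cartesian monoidal categories in~\eqref{cats_diagram}.
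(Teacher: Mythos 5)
Your proof is correct and follows essentially the same route as the paper: the same projections $\id_A\otimes\,!_B$ and $!_A\otimes\,\id_B$, the same mediating arrow $(f\otimes g)\nabla_C$, totality/functionality closure via Proposition~\ref{prop: ! and nabla are total and functional}, and the same uniqueness computation in $\fn{\mC}$ via functionality of $h$ together with the identity $(\pi_A\otimes\pi_B)\nabla_{A\otimes B}=\id_{A\otimes B}$ (which the paper uses implicitly in its final step and you usefully isolate and prove). Your explicit verification that $\I$ is terminal in $\fn{\mC}$ is a small but welcome addition that the paper's proof leaves tacit.
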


\begin{proof}
 	\begin{enumerate}
		\item Let us consider two objects $A$ and $B$ of $\mC$ and show that $A\ox B$ is a weak product in $\total{\mC}$ with respect to the projections
			$\freccia{A\ox B}{\id_A\ox \,!_B}{A}$ and $ \freccia{A\ox B}{!_A \ox{} \id_B}{B}$, i.e.
	\[
	\begin{tikzpicture}[scale=0.60, transform shape]
	\begin{pgfonlayer}{nodelayer}
		\node [style=none] (14) at (-9.25, 0) {};
		\node [style=none] (15) at (-8.25, -0.5) {B};
		\node [style=nodonero] (21) at (-7.75, 0) {};
		\node [style=none] (22) at (-9.25, 1) {};
		\node [style=none] (23) at (-7.25, 1) {};
		\node [style=none] (24) at (-8.25, 1.5) {A};
		\node [style=none] (25) at (-3.75, 0) {};
		\node [style=none] (26) at (-2.75, -0.5) {B};
		\node [style=none] (28) at (-3.75, 1) {};
		\node [style=none] (29) at (-2.25, 1) {};
		\node [style=none] (30) at (-2.75, 1.5) {A};
		\node [style=nodonero] (31) at (-2.25, 1) {};
		\node [style=none] (32) at (-1.75, 0) {};
	\end{pgfonlayer}
	\begin{pgfonlayer}{edgelayer}
		\draw (14.center) to (21);
		\draw (22.center) to (23.center);
		\draw (28.center) to (29.center);
		\draw (25.center) to (32.center);
	\end{pgfonlayer}
\end{tikzpicture}
\]
By Proposition \ref{prop: ! and nabla are total and functional}, these projections are $\mC$-total arrows, since they are monoidal products of $\mC$-total arrows. 

Now let us consider two $\mC$-total arrows $\freccia{C}{h}{A}$ and $\freccia{C}{g}{B}$. First, observe that the arrow
\[
\begin{tikzpicture}[scale=0.60, transform shape]
	\begin{pgfonlayer}{nodelayer}
		\node [style=none] (14) at (-9.25, 0) {};
		\node [style=none] (15) at (-8.25, 0.25) {C};
		\node [style=none] (17) at (-6.75, 0.5) {};
		\node [style=none] (18) at (-6.75, -0.5) {};
		\node [style=none] (19) at (-6, -0.5) {};
		\node [style=none] (20) at (-6, 0.5) {};
		\node [style=nodonero] (21) at (-7.25, 0) {};
		\node [style=none] (23) at (-6, -0.5) {};
		\node [style=box] (33) at (-6, 0.5) {$h$};
		\node [style=box] (34) at (-6, -0.5) {$g$};
		\node [style=none] (36) at (-4.75, 0.5) {};
		\node [style=none] (37) at (-5.5, 1) {};
		\node [style=none] (38) at (-5.5, -1) {};
		\node [style=none] (39) at (-5.5, -1) {B};
		\node [style=none] (40) at (-5.5, 1) {A};
		\node [style=none] (60) at (-4.75, -0.5) {};
	\end{pgfonlayer}
	\begin{pgfonlayer}{edgelayer}
		\draw [bend left, looseness=1.25] (21) to (17.center);
		\draw (17.center) to (20.center);
		\draw [bend right] (21) to (18.center);
		\draw (18.center) to (19.center);
		\draw (14.center) to (21);
		\draw (33) to (36.center);
		\draw (34) to (60.center);
	\end{pgfonlayer}
\end{tikzpicture}
\]
is $\mC$-total by Proposition \ref{prop: ! and nabla are total and functional}. To conclude the statement, it is enough to show that
\[
\begin{tikzpicture}[scale=0.60, transform shape]
	\begin{pgfonlayer}{nodelayer}
		\node [style=none] (0) at (-4, 0) {=};
		\node [style=none] (14) at (-9.25, 0) {};
		\node [style=none] (15) at (-8.25, 0.25) {C};
		\node [style=none] (17) at (-6.75, 0.5) {};
		\node [style=none] (18) at (-6.75, -0.5) {};
		\node [style=none] (19) at (-6, -0.5) {};
		\node [style=none] (20) at (-6, 0.5) {};
		\node [style=nodonero] (21) at (-7.25, 0) {};
		\node [style=none] (23) at (-6, -0.5) {};
		\node [style=none] (24) at (-3.25, 0) {};
		\node [style=none] (30) at (-2.75, 0.25) {C};
		\node [style=box] (31) at (-1.75, 0) {$h$};
		\node [style=none] (32) at (0, 0) {};
		\node [style=box] (33) at (-6, 0.5) {$h$};
		\node [style=box] (34) at (-6, -0.5) {$g$};
		\node [style=nodonero] (35) at (-5.25, -0.5) {};
		\node [style=none] (36) at (-4.75, 0.5) {};
		\node [style=none] (37) at (-5.5, 1) {};
		\node [style=none] (38) at (-5.5, -1) {};
		\node [style=none] (39) at (-5.5, -1) {B};
		\node [style=none] (40) at (-5.5, 1) {A};
		\node [style=none] (41) at (-0.75, 0.25) {};
		\node [style=none] (42) at (-0.75, 0.25) {A};
		\node [style=none] (43) at (7.25, 0) {=};
		\node [style=none] (44) at (1.75, 0) {};
		\node [style=none] (45) at (3, 0.25) {C};
		\node [style=none] (46) at (4.5, 0.5) {};
		\node [style=none] (47) at (4.5, -0.5) {};
		\node [style=none] (48) at (5.25, -0.5) {};
		\node [style=none] (49) at (5.25, 0.5) {};
		\node [style=nodonero] (50) at (4, 0) {};
		\node [style=none] (51) at (5.25, -0.5) {};
		\node [style=none] (52) at (8, 0) {};
		\node [style=none] (53) at (8.5, 0.25) {C};
		\node [style=box] (54) at (9.5, 0) {$h$};
		\node [style=none] (55) at (11.25, 0) {};
		\node [style=box] (56) at (5.25, 0.5) {$h$};
		\node [style=box] (57) at (5.25, -0.5) {$g$};
		\node [style=none] (59) at (6, 0.5) {};
		\node [style=none] (60) at (5.75, 1) {};
		\node [style=none] (61) at (5.75, -1) {};
		\node [style=none] (62) at (5.75, -1) {B};
		\node [style=none] (63) at (5.75, 1) {A};
		\node [style=none] (64) at (10.5, 0.25) {};
		\node [style=none] (65) at (10.5, 0.25) {B};
		\node [style=nodonero] (66) at (6, 0.5) {};
		\node [style=none] (67) at (6.5, -0.5) {};
	\end{pgfonlayer}
	\begin{pgfonlayer}{edgelayer}
		\draw [bend left, looseness=1.25] (21) to (17.center);
		\draw (17.center) to (20.center);
		\draw [bend right] (21) to (18.center);
		\draw (18.center) to (19.center);
		\draw (14.center) to (21);
		\draw (24.center) to (31);
		\draw (31) to (32.center);
		\draw (33) to (36.center);
		\draw (34) to (35);
		\draw [bend left, looseness=1.25] (50) to (46.center);
		\draw (46.center) to (49.center);
		\draw [bend right] (50) to (47.center);
		\draw (47.center) to (48.center);
		\draw (44.center) to (50);
		\draw (52.center) to (54);
		\draw (54) to (55.center);
		\draw (56) to (59.center);
		\draw (57) to (67.center);
	\end{pgfonlayer}
\end{tikzpicture}
\]
Now notice that the first equation holds because, since $g$ is total, we have that
\[
\begin{tikzpicture}[scale=0.60, transform shape]
	\begin{pgfonlayer}{nodelayer}
		\node [style=none] (0) at (-4, 0) {=};
		\node [style=none] (14) at (-9.25, 0) {};
		\node [style=none] (15) at (-8.25, 0.25) {C};
		\node [style=none] (17) at (-6.75, 0.5) {};
		\node [style=none] (18) at (-6.75, -0.5) {};
		\node [style=none] (19) at (-6, -0.5) {};
		\node [style=none] (20) at (-6, 0.5) {};
		\node [style=nodonero] (21) at (-7.25, 0) {};
		\node [style=none] (23) at (-6, -0.5) {};
		\node [style=none] (24) at (3, 0) {};
		\node [style=none] (30) at (3.5, 0.25) {C};
		\node [style=box] (31) at (4.5, 0) {$h$};
		\node [style=none] (32) at (6.25, 0) {};
		\node [style=box] (33) at (-6, 0.5) {$h$};
		\node [style=box] (34) at (-6, -0.5) {$g$};
		\node [style=nodonero] (35) at (-5.25, -0.5) {};
		\node [style=none] (36) at (-4.75, 0.5) {};
		\node [style=none] (37) at (-5.5, 1) {};
		\node [style=none] (38) at (-5.5, -1) {};
		\node [style=none] (39) at (-5.5, -1) {B};
		\node [style=none] (40) at (-5.5, 1) {A};
		\node [style=none] (41) at (5.5, 0.25) {};
		\node [style=none] (42) at (5.5, 0.25) {A};
		\node [style=none] (43) at (-3.25, 0) {};
		\node [style=none] (44) at (-2.25, 0.25) {C};
		\node [style=none] (45) at (-0.75, 0.5) {};
		\node [style=none] (46) at (-0.75, -0.5) {};
		\node [style=none] (47) at (0, -0.5) {};
		\node [style=none] (48) at (0, 0.5) {};
		\node [style=nodonero] (49) at (-1.25, 0) {};
		\node [style=box] (51) at (0, 0.5) {$h$};
		\node [style=nodonero] (53) at (0.75, -0.5) {};
		\node [style=none] (54) at (1.25, 0.5) {};
		\node [style=none] (55) at (0.5, 1) {};
		\node [style=none] (56) at (0.5, -1) {};
		\node [style=none] (57) at (0.5, -1) {C};
		\node [style=none] (58) at (0.5, 1) {A};
		\node [style=none] (59) at (2, 0) {=};
	\end{pgfonlayer}
	\begin{pgfonlayer}{edgelayer}
		\draw [bend left, looseness=1.25] (21) to (17.center);
		\draw (17.center) to (20.center);
		\draw [bend right] (21) to (18.center);
		\draw (18.center) to (19.center);
		\draw (14.center) to (21);
		\draw (24.center) to (31);
		\draw (31) to (32.center);
		\draw (33) to (36.center);
		\draw (34) to (35);
		\draw [bend left, looseness=1.25] (49) to (45.center);
		\draw (45.center) to (48.center);
		\draw [bend right] (49) to (46.center);
		\draw (46.center) to (47.center);
		\draw (43.center) to (49);
		\draw (51) to (54.center);
		\draw (47.center) to (53);
	\end{pgfonlayer}
\end{tikzpicture}
\]
%\[
%	(\id_A\ox \,!_B) (f\ox g) \nabla_C = (f\ox \,!_Bg) \nabla_C,
%\]
%and since $g$ is total, that is $!_B g = {}!_C$, we get
%\[
%	(\id_A\ox \,!_B)(f\ox g) \nabla_C=(f\ox \,!_C)\nabla_C=f.
%\]
%
Similarly one checks that the second equality holds, and hence conclude the proof.
%$g=(!_A\ox \id_B)(f\ox g)\nabla_C$. 

\item For any two objects $A$ and $B$ of $\mC$, we show that $A\ox B$ is the categorical product of $A$ and $B$ in $\fn{\mC}$ with projections $\freccia{A\ox B}{\id_A\ox\,!_B}{A}$ and $\freccia{A\ox B}{!_A\ox {}\id_B}{B}$. First, by the previous point, we know that this defines a weak product in $\total{\mC}$, and thus in particular in $\fn{\mC}$, since if $f$ and $g$ are $\mC$-functional and $\mC$-total, then again so is $(f \otimes g)\nabla_C$ by Proposition~\ref{prop: ! and nabla are total and functional}.

We still have to prove the uniqueness part of the universal property. So suppose that there exists another $\mC$-total, $\mC$-functional arrow $\freccia{C}{h}{A\ox B}$ such that $(\id_A\ox \,!_B)h=f$ and $(!_A\ox {}\id_B)h=g$. Then we have
\begin{align*}
	(f\ox g)\nabla_C & = \left[(\id_A\ox \,!_B)h\ox (!_A\ox \id_B)h\right]\nabla_C	\\[2pt]
		& = (\id_A\ox \,!_B\,\ox \,!_A\ox {}\id_B)(h\ox h)\nabla_C.
\end{align*}
Since $h$ is $\mC$-functional, we can further evaluate to
\[
 	(f\ox g)\nabla_C=(\id_A\ox \,!_B\, \ox \,!_A\ox {}\id_B)\nabla_{A\ox B}h=h.
	\qedhere
\]
\end{enumerate}

\end{proof}

\comment{
First, on the subcategory of total arrows $\total{\mC}$, we get the structure of weak cartesian 
products. 

\begin{myproposition}\label{thm C gs-cat implies C-total maps have weak cart prod}
Let us consider a gs-monoidal category $\mC$. Then for any $A,B \in \mC$, the monoidal product $A \ox B$ is a weak product in the subcategory $\total{\mC}$.
\end{myproposition}
\begin{proof}
See Appendix \ref{proof: thm C gs-cat implies C-total maps have weak cart prod}.
\end{proof}
}

\begin{myremark}
Let us consider a gs-monoidal category $\mC$ and a $\mC$-total arrow $\freccia{A}{f}{B}$. Notice that the diagram
\[\xymatrix@+1pc{
B & B\ox B \ar[l]_{\id_B\ox \,!_B} \ar[r]^{!_B\ox {}\id_B} & B\\
& A\ar[u]^h\ar[ru]_f\ar[lu]^f
}\]
commutes with both $h:=\nabla_Bf$ and $h:=(f\otimes f)\nabla_A$. If $f$ is non-functional, then these arrows are different. So the subcategory $\total{\mC}$ has weak binary products by Proposition~\ref{thm C gs-cat implies C-total maps have weak cart prod}, but in general these are not categorical products.
\end{myremark}

\comment{
In order to obtain the universal property of products, we have to restrict not only to $\mC$-total arrows as in Theorem~\ref{thm C gs-cat implies C-total maps have weak cart prod}, but also to $\mC$-functional arrows.
\begin{myproposition}\label{thm total functions have products}
Let us consider a gs-monoidal category $\mC$. Then for any $A,B \in \mC$, the monoidal product $A \ox B$ is a categorical product in the subcategory $\fn{\mC}$.
\end{myproposition}
\begin{proof}
See Appendix \ref{proof: thm total functions have products}.
\end{proof}
}

Let us also note that gs-monoidal categories have enough structure to properly express a notion of \emph{domain} of arrows.
\begin{mydefinition}\label{def: domain}
Let $\mC$ be a gs-monoidal category and $\freccia{A}{f}{B}$ in $\mC$. The \textbf{domain} of $f$ is the arrow 
$\freccia{A}{\dom(f):= (\id_A\otimes \,!_Bf)\nabla_A}{A}$, graphically
\[
\begin{tikzpicture}[scale=0.60, transform shape]
	\begin{pgfonlayer}{nodelayer}
		\node [style=none] (0) at (-3.25, 1.5) {};
		\node [style=none] (1) at (-1.75, 2) {};
		\node [style=none] (2) at (-1.75, 1) {};
		\node [style=none] (3) at (-1, 1) {};
		\node [style=none] (4) at (-1, 2) {};
		\node [style=nodonero] (5) at (-2.25, 1.5) {};
		\node [style=none] (6) at (-2.75, 1.75) {A};
		\node [style=none] (7) at (-1, 2) {};
		\node [style=box] (14) at (-1, 1) {$f$};
		\node [style=nodonero] (15) at (0, 1) {};
		\node [style=none] (16) at (0.25, 2) {};
		\node [style=none] (19) at (0.25, 2.25) {A};
		\node [style=none] (20) at (-0.5, 0.75) {B};
	\end{pgfonlayer}
	\begin{pgfonlayer}{edgelayer}
		\draw (0.center) to (5);
		\draw [bend left, looseness=1.25] (5) to (1.center);
		\draw (1.center) to (4.center);
		\draw [bend right] (5) to (2.center);
		\draw (2.center) to (3.center);
		\draw (3.center) to (14.center);
		\draw (14) to (15);
		\draw (7.center) to (16.center);
	\end{pgfonlayer}
\end{tikzpicture}
\]
\end{mydefinition}

The motivation behind this particular definition is that in $\Rel$, for a relation $R$ from $A$ to $B$, the arrow $\dom (R)=(\id_A\times \,!_B R)\nabla_A$ is the relation representing the domain of definition of $R$, i.e.
\[
	a\dom (R)a' \quad\iff\quad a=a' \;\mbox{ and }\; \exists b \in B : \; aRb.
\]

As we will show, the arrow $\dom (f)$ in a gs-monoidal category enjoys several algebraic properties that generalise those 
of the usual notion of domain of a relation.
 A first property we expect to hold from a morphism that abstracts the notion of domain of a relation is that the domain of a total arrow has to be the identity.
% This is indeed the case.

\begin{myproposition}\label{prop:dom is functionional and dom(f)=id iff f total}
Let $\mC$ be a gs-monoidal category and $\freccia{A}{f}{B}$. Then
\begin{enumerate}
\item if $f$ is $\mC$-functional, then ${\dom (f)}$ is $\mC$-functional;
\item $f$ is $\mC$-total if and only if $\dom (f)=\id_A$.
\end{enumerate}
\end{myproposition}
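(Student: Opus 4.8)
I would treat the two parts separately: part~(1) reduces immediately to the closure properties already proved, while part~(2) is a short computation with the comonoid axioms.

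For part~(1), the observation is that $\dom(f) = (\id_A\otimes\,!_Bf)\nabla_A$ is assembled entirely out of arrows already known to be $\mC$-functional. By Proposition~\ref{prop: ! and nabla are total and functional}, $\pfn{\mC}$ is a gs-monoidal subcategory of $\mC$: it contains every identity, every discharger $!_B$ and every duplicator $\nabla_A$, and it is closed under composition and under $\otimes$. Hence, if $f$ is $\mC$-functional, then $!_B\circ f$ is $\mC$-functional (composition), therefore so is $\id_A\otimes(!_B\circ f)$ (monoidal product with an identity), and finally the composite $\dom(f) = (\id_A\otimes\,!_Bf)\nabla_A$ is $\mC$-functional. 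So part~(1) is essentially immediate once one reads off $\dom(f)$ as a composite of functional building blocks.

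For part~(2), the forward implication uses that $\mC$-totality of $f$ says exactly $!_Bf = !_A$, whence $\dom(f) = (\id_A\otimes\,!_A)\nabla_A$, and this equals $\id_A$ by the (right) counitality equation at the bottom of~\eqref{comm_comon}. For the converse I would post-compose the hypothesis $\dom(f)=\id_A$ with a discharger: rewriting $!_A = !_{A\otimes I} = !_A\otimes\,!_I = !_A\otimes\id_I$ by the monoidal multiplicativity equations and $!_I=\id_I$, then using functoriality of $\otimes$ and the other counitality equation $(!_A\otimes\id_A)\nabla_A=\id_A$, one gets
\[
!_Bf \;=\; (\id_I\otimes\,!_Bf)\,\id_A \;=\; (\id_I\otimes\,!_Bf)(!_A\otimes\id_A)\nabla_A \;=\; (!_A\otimes\id_I)(\id_A\otimes\,!_Bf)\nabla_A \;=\; !_A\circ\dom(f) \;=\; !_A ,
\]
which is precisely the statement that $f$ is $\mC$-total.

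The proof uses only results already in hand, so I do not expect a genuine obstacle; the one point requiring care is the unit bookkeeping in part~(2), namely keeping track of the strict identifications $A\otimes I = A$ and $I\otimes A = A$ and invoking the correct one of the two counitality equations at each step.
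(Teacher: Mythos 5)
Your proof is correct and follows essentially the same route as the paper's: part (1) is the same reduction to Proposition~\ref{prop: ! and nabla are total and functional}, and in part (2) both arguments hinge on counitality together with the identity $!_A\circ\dom(f)={}!_Bf$ (which the paper establishes graphically and you establish by the interchange law). No gaps.
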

\begin{proof}
	\begin{enumerate}
	\item By Proposition \ref{prop: ! and nabla are total and functional} applied to the definition of $\dom(f)$.

	\item If $f$ is $\mC$-total, then $!_Bf={}!_A$, hence
\[\begin{tikzpicture}[scale=0.60, transform shape]
	\begin{pgfonlayer}{nodelayer}
		\node [style=none] (0) at (-3.5, 1.5) {};
		\node [style=none] (1) at (-2, 2) {};
		\node [style=none] (2) at (-2, 1) {};
		\node [style=none] (3) at (-1.25, 1) {};
		\node [style=nodonero] (5) at (-2.5, 1.5) {};
		\node [style=none] (6) at (-3, 1.75) {A};
		\node [style=box] (14) at (-1.25, 1) {$f$};
		\node [style=nodonero] (15) at (-0.25, 1) {};
		\node [style=none] (16) at (0, 2) {};
		\node [style=none] (19) at (0, 2.25) {A};
		\node [style=none] (20) at (-0.75, 0.75) {B};
		\node [style=none] (25) at (1, 1.5) {=};
		\node [style=none] (30) at (2, 1.5) {};
		\node [style=none] (31) at (3.5, 2) {};
		\node [style=none] (32) at (3.5, 1) {};
		\node [style=none] (33) at (4.25, 1) {};
		\node [style=nodonero] (34) at (3, 1.5) {};
		\node [style=none] (35) at (2.5, 1.75) {A};
		\node [style=nodonero] (37) at (5, 1) {};
		\node [style=none] (38) at (5.5, 2) {};
		\node [style=none] (39) at (5, 2.25) {A};
		\node [style=none] (40) at (4.5, 0.75) {A};
		\node [style=none] (41) at (7.5, 1.5) {};
		\node [style=none] (42) at (8.25, 1.75) {A};
		\node [style=none] (43) at (6.5, 1.5) {=};
		\node [style=none] (44) at (9, 1.5) {};
	\end{pgfonlayer}
	\begin{pgfonlayer}{edgelayer}
		\draw (0.center) to (5);
		\draw [bend left, looseness=1.25] (5) to (1.center);
		\draw [bend right] (5) to (2.center);
		\draw (2.center) to (3.center);
		\draw (14) to (15);
		\draw (1.center) to (16.center);
		\draw (30.center) to (34);
		\draw [bend left, looseness=1.25] (34) to (31.center);
		\draw [bend right] (34) to (32.center);
		\draw (32.center) to (33.center);
		\draw (31.center) to (38.center);
		\draw (41.center) to (44.center);
		\draw (33.center) to (37);
	\end{pgfonlayer}
\end{tikzpicture}
\]
by the second axiom of gs-monoidal categories. Conversely, if $\dom (f)=\id_A$, then $!_A\dom(f)= {}!_A$, hence $!_A(\id_A\otimes \,!_Bf)\nabla_A={}!_A$. 
Since
%\[
%	!_A(\id_A\otimes \,!_Bf)\nabla_A={}!_Bf(!_A\otimes \id_A)\nabla_A={}!_Bf,
%\]
\[\begin{tikzpicture}[scale=0.60, transform shape]
	\begin{pgfonlayer}{nodelayer}
		\node [style=none] (0) at (-3.5, 1.5) {};
		\node [style=none] (1) at (-2, 2) {};
		\node [style=none] (2) at (-2, 1) {};
		\node [style=none] (3) at (-1.25, 1) {};
		\node [style=nodonero] (5) at (-2.5, 1.5) {};
		\node [style=none] (6) at (-3, 1.75) {A};
		\node [style=box] (14) at (-1.25, 1) {$f$};
		\node [style=nodonero] (15) at (0, 1) {};
		\node [style=none] (16) at (0, 2) {};
		\node [style=none] (19) at (-0.5, 2.25) {A};
		\node [style=none] (20) at (-0.5, 0.75) {B};
		\node [style=none] (25) at (1, 1.5) {=};
		\node [style=none] (26) at (2, 1.5) {};
		\node [style=none] (27) at (3.25, 1.5) {};
		\node [style=box] (28) at (3.25, 1.5) {$f$};
		\node [style=nodonero] (29) at (4.5, 1.5) {};
		\node [style=none] (30) at (4, 1.25) {B};
		\node [style=none] (31) at (2.5, 1.25) {A};
		\node [style=nodonero] (32) at (0, 2) {};
	\end{pgfonlayer}
	\begin{pgfonlayer}{edgelayer}
		\draw (0.center) to (5);
		\draw [bend left, looseness=1.25] (5) to (1.center);
		\draw [bend right] (5) to (2.center);
		\draw (2.center) to (3.center);
		\draw (14) to (15);
		\draw (1.center) to (16.center);
		\draw (26.center) to (27.center);
		\draw (28) to (29);
	\end{pgfonlayer}
\end{tikzpicture}
\]
we can conclude that $!_Bf={}!_A$, i.e. that $f$ is $\mC$-total.
\qedhere
\end{enumerate}
\end{proof}

\comment{
By the naturality of braiding, checking the following property is immediate.
\begin{mylemma}\label{prop: dom(g x f)=dom(g)xdom (f)}
	Let $\mC$ be a gs-monoidal category. If $\freccia{A}{f}{B}$ and $\freccia{C}{g}{D}$ are $\mC$-functional, then 
	$\dom (f \otimes g) = dom (f) \otimes \dom(g)$.

\end{mylemma}
}
\begin{myremark}
	The notion of domain allows us to state the precise relation between gs-monoidal categories, restriction categories~\cite{Cockett02} and p-categories~\cite{Robinson88}. 
	Given the remarks on commutative comonoids provided in Examples~\ref{comon_is_gs} and~\ref{comon2}, we can exploit a relevant result of Cockett and Lack~\cite[Thm~5.2]{Cockett07} to conclude that when $\mC$ is a gs-monoidal category, the subcategory $\pfn{\mC}$ of $\mC$-functional arrows is a restriction category, with the restriction structure given by $\dom(-)$. Moreover, this category has restriction products, which in particular implies the equation $\dom(f \otimes g) = \dom(f) \otimes \dom(g)$. Furthermore, $\pfn{\mC}$ also is a p-category with one-element object (given by $\I$), where the diagonal is given by $\nabla$ and the two projections by the families of arrows of the form $\id\ox\,!$ and $!\ox\id$, respectively.
	
This fact, combined with the remark in~\cite[p. 29]{Cockett07} that explains how in a restriction category with restriction products every object has a canonical
cocommutative comonoid structure (in the symmetric monoidal category), the duplicator is natural and total maps are precisely the counit-preserving ones (this fact is due to an observation of Carboni~\cite{Carboni_87}), allows to conclude that restriction categories with restriction products correspond exactly to gs-monoidal categories whose duplicator is natural.
\end{myremark}

\begin{myremark}\label{rem fdom(f)=f in rel but not in any gs}
 In $\Rel$, domains have an additional property given by the equation $f\dom (f)=f$, namely
\begin{equation}
\begin{split}
\begin{tikzpicture}[scale=0.60, transform shape]
	\begin{pgfonlayer}{nodelayer}
		\node [style=none] (31) at (4.75, 1.5) {};
		\node [style=none] (32) at (6.25, 2) {};
		\node [style=none] (33) at (6.25, 1) {};
		\node [style=none] (34) at (7, 1) {};
		\node [style=nodonero] (35) at (5.75, 1.5) {};
		\node [style=none] (36) at (5.25, 1.75) {A};
		\node [style=box] (37) at (7, 1) {$f$};
		\node [style=nodonero] (38) at (8, 1) {};
		\node [style=none] (39) at (8.25, 2) {};
		\node [style=none] (40) at (8.25, 2.25) {B};
		\node [style=none] (41) at (7.5, 0.75) {B};
		\node [style=none] (42) at (9.75, 1.5) {};
		\node [style=none] (43) at (10.25, 1.75) {A};
		\node [style=none] (44) at (9, 1.5) {$=$};
		\node [style=box] (45) at (7, 2) {$f$};
		\node [style=box] (46) at (10.75, 1.5) {$f$};
		\node [style=none] (47) at (11.75, 1.5) {};
		\node [style=none] (48) at (11.25, 1.75) {B};
	\end{pgfonlayer}
	\begin{pgfonlayer}{edgelayer}
		\draw (31.center) to (35);
		\draw [bend left, looseness=1.25] (35) to (32.center);
		\draw [bend right] (35) to (33.center);
		\draw (33.center) to (34.center);
		\draw (34.center) to (37.center);
		\draw (37) to (38);
		\draw (32.center) to (45);
		\draw (45) to (39.center);
		\draw (42.center) to (46);
		\draw (46) to (47.center);
	\end{pgfonlayer}
\end{tikzpicture}
\label{eq:fdomf}
\end{split}
\end{equation}
which holds even for non-functional $f$.
However, this equation does not hold for arrows in gs-monoidal categories in general.
For example, the Kleisli category of the multiset monad on $\FinSet$ is a gs-monoidal category in a canonical way (by Proposition~\ref{prop: Kleisli su cartesiane sono gs}), and its arrows $A \to B$ can be identified with functions $A \times B \to \mathbb{N}$ that compose via convolution.
The above~\eqref{eq:fdomf} does not hold there: already with $A = B = I$ a one-element set, where $f$ is determined by a natural number, we have the number itself on the right but its square on the left.
We will return to this issue in Section~\ref{sec:oplax_cart} after introducing preorder-enriched gs-monoidal categories.
\iffalse
An inspection of the axioms for restriction categories (as recalled in Appendix~\ref{sec: restricats}) easily shows that this is the only 
axiom that holds just for the subcategory of functional arrows.
\fab{check}
\fi
\end{myremark}

\comment{

\begin{myproposition}
	\label{prop:gs_vs_restriction}
	Let $\mC$ be a gs-monoidal category. Then the subcategory $\pfn{\mC}$ of $\mC$-functional arrows is a monoidal restriction category, with the restriction structure given by $\dom (-)$.
\end{myproposition}
\tob{Hadn't we concluded that this is already covered by result on restriction products in~\cite{Cockett07}, in particular the equivalence of (i) and (iv) in their Theorem 5.2? The equation $\dom(f \otimes g) = \dom(f) \otimes \dom(g)$, or for them $\overline{f \times g} = \overline{f} \times \overline{g}$, is encoded in their definition of restriction product, which requires in particular $\times$ to be a restriction functor. Our form of $\dom$ appears in the context of p-categories as dom in their Section 4.2. In other words, I think that the description as a restriction category with restriction products is actually stronger than saying it's a monoidal restriction category, and we don't need the latter concept at all}
\fab{proof in the appendix}
The result can be seen as a generalisation of~\cite{xxx}, in order to account for the notion of monoidal restriction category in the sense of~\cite{Heunen2021}, namely, a restriction and monoidal category such that restriction structure satisfies the equation $\dom(f\otimes g)=\dom(f)\otimes \dom(g)$
shown in Lemma~\ref{prop: dom(g x f)=dom(g)xdom (f)}.

The result above allows to establish a link between gs-monoidal categories and \emph{p-categories with one-element object}, as done for
restriction categories (see \cite[Thm.~5.2]{Cockett07} for details). For an introduction to p-categories we refer to \cite{Robinson88}, while 
its definition is recalled in Appendix~\ref{xxx}.

\tob{Given that apparently neither Prop \ref{prop:gs_vs_restriction} nor Prop \ref{thm C-functionals form p-cat} is new, and we don't use p-categories or restriction categories otherwise, how about replacing both of these propositions by a single remark with reference to~\cite[Thm~5.2]{Cockett07}? Then we don't need an additional appendix with their definitions}

\begin{myproposition}\label{thm C-functionals form p-cat}
Let $\mC$ be a gs-monoidal category. 
Then the subcategory $\pfn{\mC}$ of $\mC$-functional arrows is a p-category with one-element object, 
with the one-element object given by $\I$, the diagonal given by $\nabla$, and the two projections given by the 
families of arrows of the form $\id\ox\,!$ and $!\ox\id$, respectively.
\end{myproposition}

If we consider arrows other than functional ones, then the above does not provide the structure of p-category, since $\nabla$ then fails to be natural.

\comment{
The rest of this section is devoted to analysing what structure the tensor product of a gs-monoidal category induces in the three subcategories $\pfn{\mC}$, $\total{\mC}$ and $\fn{\mC}$. We start by considering the subcategory of $\mC$-functional arrows $\pfn{\mC}$.

%Recall that that t
The notion of \emph{p-category} was introduced by Rosolini and Robinson in \cite{Robinson88} to present categories of partial maps. Such a notion can be considered an improvement on the \emph{dominical categories} by Di Paola and Heller \cite{DiPaola87}.
A p-category is a category $\mC$ endowed with a bifunctor $\freccia{\mC\times \mC}{\times}{\mC}$, called \emph{product}, 
a natural transformation $\freccia{\id_{\mC}}{\nabla}{\id_{\mC}\times \id_{\mC}} $, called \emph{diagonal}, and two families of natural 
transformations
\[
	p_{(-),Y}: (-)\times Y\to (-), \qquad q_{X,(-)}: X\times (-)\to (-)
\]
indexed by objects $X, Y \in \mC$ and called \emph{projections}, satisfying some axioms (see \cite{Robinson88}).
In a p-category $\mC$, a \emph{one-element object} (see \cite[Sec.~1]{Robinson88}) is an object $T$ of $\mC$ together with a family $t_X\to T$ of arrows of $\mC$ for which each $p_{X,T}$ is invertible, and whose inverse is $\id_X\times t_X \nabla$. 

In the following theorem we recall, using the language and the notation of gs-monoidal categories, a known result that establishes the link between gs-monoidal categories in which every arrow is functional and p-categories with one-element object, and we refer to \cite[Thm.~5.2]{Cockett07} for all the details. 

% the link between gs-see for example shows that the tensor product of a gs-monoidal category induces the structure of p-category in the subcategory of $\mC$-functional arrows. 
 
\begin{mytheorem}\label{thm C-functionals form p-cat}
Let us consider a gs-monoidal category $\mC$. Then the monoidal operation $\ox$, the natural transformation $\nabla$ and the projections given by arrows of the form $\id\ox\,!$ and $!\ox\id$ equip $\pfn{\mC}$ with the structure of p-category with one-element object.
\end{mytheorem}
%\tob{What about the converse? Does a p-category satisfy $p_{I,X} = q_{X,I}$ (modulo coherences)? If so, I would think that every p-category is itself gs-monoidal, and the p-categories can be identified with those gs-monoidal categories in which all arrows are functional, and to me this looks like a more comprehensible axiomatization than the original one of Robinson and Rosolini. If this is correct and hasn't been spelled out anywhere before, it would probably be worth doing so?}
%\dav{I think we can not conclude that every p-category is a gs-monoidal category because p-categories have no tensor unit I, right?}
%\tob{Right. I've done a bit more digging in the literature, and I'm afraid that I've found an existing equivalence between p-categories with ``one-element object'' and gs-monoidal cats in which every arrow as functional as part of \cite[Theorem~5.2]{Cockett07}. Of course this also means that Theorem \ref{thm C-functionals form p-cat} is not really new\ldots}
%\dav{I added the reference and the definition of one-element object. However, I think we cannot say that p-categories with ``one-element object'' are equivalent to gs-monoidal cats in which every arrow is functional since  gs-monoidal categories are also symmetric, while p-categories  are not symmetric in general, right? For this reason I maintained the statement with one direction. Of course, we could }

If we consider arrows other than functional ones, then the above does not provide the structure of p-category, since $\nabla$ then fails to be natural.
}
}

\section{Oplax cartesian categories}
\label{sec:oplax_cart}

%\tob{I think that ``lax cartesian'' without the dash would be more standard, like ``lax monoidal''. Also, given that inequalities $\le$ are usually interpreted as forward arrows $\to$, perhaps ``oplax'' would actually be more accurate}

We have seen that gs-monoidal categories enjoy some features of $\Rel$ with respect to total and functional arrows. 
Our next step is to show how to build on the notion of gs-monoidality in order to
account for the usual preorder-enrichment of $\Rel$.
This extension will be pivotal later on, e.g. for our functorial completeness theorem.
And while the notion of domain we discussed in the previous section lacks some of the properties of domains in $\Rel$,
in particular the one noted in Remark \ref{rem fdom(f)=f in rel but not in any gs}, 
we will see that this property can be recovered in terms of preorder enrichment.

%We introduced the notion of \emph{domain} and explained that
%of $\mC$-functional arrows 
%$\pfn{\mC}$ is a restriction category for every gs-monoidal category $\mC$.
%%when we consider the restriction structure given by such a notion of domain.
%%
%However, as observed in Remark \ref{rem fdom(f)=f in rel but not in any gs}, this notion of domain does not satisfy the same properties 
%as in $\Rel$.
%%%
%%This is due to the fact that $\Rel$ has more specific structure compared to an arbitrary gs-monoidal category, so the next step 
%%consists in finding additional structure in terms of which the further properties of domains hold.
%%
%Therefore, what we aim for in an arbitrary gs-monoidal category is a relation which identifies every arrow $f$ with $f\dom (f)$. For this purpose it turns out to be useful to consider preorder enrichments.
%% for gs-monoidal categories.

\begin{mydefinition}\label{def preorder-enriched gs-monoidal category}
	A \textbf{preorder-enriched gs-monoidal category} $\mC$ is a gs-monoidal category $\mC$ that is at the same time a preorder-enriched monoidal category.
\end{mydefinition}

Recall that a preorder-enriched monoidal category consists of a preorder-enriched category $\mC$, an object $I$ of $\mC$, a preorder-enriched functor $\freccia{\mC\times \mC}{\otimes}{\mC}$,  and enriched natural monoidal structure
isomorphisms
\[
	\freccia{I\otimes -}{\lambda}{\id_{\mC}}, \qquad \freccia{- {}\otimes{} I}{\rho}{\id_{\mC}}, \qquad \freccia{(-\otimes -)\otimes -}{\alpha}{-\otimes (-\otimes -)}
\]
such that the underlying category equipped with the underlying functor $\otimes$, the object $I$, and the natural isomorphisms $\lambda,\rho$ and $\alpha$ is a monoidal category (see \cite{Kelly05} for details).
%\tob{That reference doesn't actually consider enriched monoidal categories (as far as I know). But maybe we can ignore that}
Since a preorder-enriched functor is just an ordinary functor that is in addition monotone, the preorder structure and the monoidal structure are required to interact by the monotonicity of the tensor product $\otimes$; the preorder-enrichment of the structure isomorphisms $\lambda$, $\rho$ and $\alpha$ does not add any additional conditions since preorder-enrichment for natural transformations between preorder-enriched functors is trivial.

In a general preorder-enriched gs-monoidal category, no further compatibility with the gs-monoidal structure is required.
However, in many (but not all) examples, also the following compatibility holds.

%\tob{add ``other than monotonicity of $\otimes$''?}
\begin{mydefinition}\label{def oplax cartesian cat}
An \textbf{oplax cartesian category} $\mC$ is a preorder-enriched gs-monoidal category $\mC$ such that the following inequalities 
hold\footnote{Viewing a preorder-enriched category as a $2$-category, the inequalities state that the families of arrows $\nabla_A$ 
and $!_A$ are the components of an \emph{oplax natural transformation}. The connection between these inequalities and rewriting is explored in~\cite{CorradiniGadducci99b}.}
for every arrow $\freccia{A}{f}{B}$
\begin{figure}[H]
\centering
\begin{tikzpicture}[scale=0.60, transform shape]
    \begin{pgfonlayer}{nodelayer}
		\node [style=none] (0) at (-5, 0) {$\leq$};
		\node [style=none] (1) at (-4, -2) {};
		\node [style=nodonero] (2) at (-3, -2) {};
		\node [style=none] (3) at (-3.5, -1.75) {A};
		\node [style=box] (5) at (-7.5, -2) {$f$};
		\node [style=nodonero] (7) at (-6.25, -2) {};
		\node [style=none] (8) at (-6.75, -1.75) {B};
		\node [style=none] (9) at (-9, -2) {};
		\node [style=none] (10) at (-8.25, -1.75) {A};
		\node [style=box] (11) at (-8.5, 0) {$f$};
		\node [style=none] (13) at (-7.75, 0.25) {B};
		\node [style=none] (14) at (-9.75, 0) {};
		\node [style=none] (15) at (-9.25, 0.25) {A};
		\node [style=none] (17) at (-6.75, 0.5) {};
		\node [style=none] (18) at (-6.75, -0.5) {};
		\node [style=none] (19) at (-6, -0.5) {};
		\node [style=none] (20) at (-6, 0.5) {};
		\node [style=nodonero] (21) at (-7.25, 0) {};
		\node [style=none] (23) at (-6, -0.5) {};
		\node [style=none] (24) at (-4.25, 0) {};
		\node [style=none] (25) at (-2.75, 0.5) {};
		\node [style=none] (26) at (-2.75, -0.5) {};
		\node [style=nodonero] (29) at (-3.25, 0) {};
		\node [style=none] (30) at (-3.75, 0.25) {A};
		\node [style=box] (32) at (-1.5, -0.5) {$f$};
		\node [style=none] (34) at (-0.75, -0.75) {B};
		\node [style=none] (35) at (-2.75, -0.5) {};
		\node [style=none] (36) at (-2.25, -0.75) {};
		\node [style=none] (37) at (-0.25, -0.5) {};
		\node [style=box] (39) at (-1.5, 0.5) {$f$};
		\node [style=none] (40) at (-0.75, 0.75) {B};
		\node [style=none] (42) at (-2.25, 0.75) {};
		\node [style=none] (43) at (-0.25, 0.5) {};
		\node [style=none] (44) at (-5, -2) {$\leq$};
	\end{pgfonlayer}
	\begin{pgfonlayer}{edgelayer}
		\draw (1.center) to (2);
		\draw (5) to (7);
		\draw (9.center) to (5);
		\draw (14.center) to (11);
		\draw [bend left, looseness=1.25] (21) to (17.center);
		\draw (17.center) to (20.center);
		\draw [bend right] (21) to (18.center);
		\draw (18.center) to (19.center);
		\draw (11) to (21);
		\draw (24.center) to (29);
		\draw [bend left, looseness=1.25] (29) to (25.center);
		\draw [bend right] (29) to (26.center);
		\draw (35.center) to (32);
		\draw (32) to (37.center);
		\draw (39) to (43.center);
		\draw (25.center) to (39);
	\end{pgfonlayer}
\end{tikzpicture}
\end{figure}
%\noindent
%are satisfied.
\end{mydefinition}
The notion of oplax cartesian category is reminiscent of cartesian bicategories in the sense of~\cite[Definition~1.2]{CARBONI198711}. 
In particular, there are two differences between these two notions:  the first one is that the existence of right adjoints for $\nabla$ and $!$ is not required in Definition \ref{def oplax cartesian cat}, while it is a crucial part of the definition of a cartesian bicategory (see point $(M)$ of ~\cite[Definition~1.2]{CARBONI198711}). The second one is that the hom-categories of an oplax cartesian category are required to be preorders, while cartesian bicategories are originally required to be poset-enriched.

\begin{myexample}
$\Rel$ has a natural preorder-enriched structure with the preorder given by the set-theoretic inclusions between relations. Moreover, for every relation $\freccia{A}{R}{B}$ we trivially have the inequalities discussed in Definition~\ref{def oplax cartesian cat}. Hence $\Rel$ is an oplax cartesian category.
On the other hand, reversing the preorder on every hom-set of $\Rel$ gives a preorder-enriched gs-monoidal category that is clearly not oplax cartesian.

For a more trivial example, any gs-monoidal category $\mC$ is preorder-enriched with the trivial preorder. Then $\mC$ is oplax cartesian if and only if it is cartesian monoidal.
\end{myexample}

\begin{mydefinition}\label{def_preorder_equivalence}
Let $\mC$ be a preorder-enriched category and $\freccia{A}{f,g}{B}$. Then $f$ and $g$ are \textbf{preorder equivalent}, denoted by $f\approx g$,
if $f\leq g$ and $g\leq f$. 
%We will use the notation $f\approx g$.
\end{mydefinition}
%\dav{new definition}
Definition~\ref{def: C-fun and C-tot} can be generalised straightforwardly.
\begin{mydefinition}\label{def_weakly_total_and_function}
	Let $\mC$ be an oplax cartesian category. An arrow $\stringdiagfreccia{A}{f}{B}$ is
	%\begin{itemize}
	%\item  
	\textbf{weakly} $\mC$-\textbf{total} if
	\[
	\begin{tikzpicture}[scale=0.60, transform shape]
		\begin{pgfonlayer}{nodelayer}
			\node [style=none] (0) at (3.75, 0) {$\approx$};
			\node [style=none] (1) at (4.5, 0) {};
			\node [style=nodonero] (2) at (5.5, 0) {};
			\node [style=none] (3) at (5, 0.25) {A};
			\node [style=box] (5) at (1.75, 0) {$f$};
			\node [style=nodonero] (7) at (3, 0) {};
			\node [style=none] (8) at (2.5, 0.25) {B};
			\node [style=none] (9) at (0.5, 0) {};
			\node [style=none] (10) at (1, 0.25) {A};
		\end{pgfonlayer}
		\begin{pgfonlayer}{edgelayer}
			\draw (1.center) to (2);
			\draw (5) to (7);
			\draw (9.center) to (5);
		\end{pgfonlayer}
	\end{tikzpicture}
	\]
	and \textbf{weakly} $\mC$-\textbf{functional} if
	\[
	\begin{tikzpicture}[scale=0.60, transform shape]
		\begin{pgfonlayer}{nodelayer}
			\node [style=none] (0) at (-5.75, 0) {$\approx$};
			\node [style=box] (11) at (-8.75, 0) {$f$};
			\node [style=none] (13) at (-8, 0.25) {B};
			\node [style=none] (14) at (-10, 0) {};
			\node [style=none] (15) at (-9.5, 0.25) {A};
			\node [style=none] (17) at (-7, 0.5) {};
			\node [style=none] (18) at (-7, -0.5) {};
			\node [style=none] (19) at (-6.25, -0.5) {};
			\node [style=none] (20) at (-6.25, 0.5) {};
			\node [style=nodonero] (21) at (-7.5, 0) {};
			\node [style=none] (23) at (-6.25, -0.5) {};
			\node [style=none] (24) at (-5.25, 0) {};
			\node [style=none] (25) at (-3.75, 0.5) {};
			\node [style=none] (26) at (-3.75, -0.5) {};
			\node [style=nodonero] (29) at (-4.25, 0) {};
			\node [style=none] (30) at (-4.75, 0.25) {A};
			\node [style=box] (32) at (-2.5, -0.5) {$f$};
			\node [style=none] (34) at (-1.75, -0.75) {B};
			\node [style=none] (35) at (-3.75, -0.5) {};
			\node [style=none] (36) at (-3.25, -0.75) {};
			\node [style=none] (37) at (-1.25, -0.5) {};
			\node [style=box] (39) at (-2.5, 0.5) {$f$};
			\node [style=none] (40) at (-1.75, 0.75) {B};
			\node [style=none] (42) at (-3.25, 0.75) {};
			\node [style=none] (43) at (-1.25, 0.5) {};
		\end{pgfonlayer}
		\begin{pgfonlayer}{edgelayer}
			\draw (14.center) to (11);
			\draw [bend left, looseness=1.25] (21) to (17.center);
			\draw (17.center) to (20.center);
			\draw [bend right] (21) to (18.center);
			\draw (18.center) to (19.center);
			\draw (11) to (21);
			\draw (24.center) to (29);
			\draw [bend left, looseness=1.25] (29) to (25.center);
			\draw [bend right] (29) to (26.center);
			\draw (35.center) to (32);
			\draw (32) to (37.center);
			\draw (39) to (43.center);
			\draw (25.center) to (39);
		\end{pgfonlayer}
	\end{tikzpicture}
	\]
	\end{mydefinition}
Definition~\ref{def_preorder_equivalence} lets us prove that, even if $f\dom (f)$ is different from $f$ in general, these two arrows are preorder equivalent in any  oplax cartesian category.

\begin{myproposition}\label{prop: fdom(f)=gs f}
Let $\mC$ be an oplax cartesian category. For every arrow $\freccia{A}{f}{B}$ we have that $\dom(f) \leq \id_A$ and $f\dom (f)\approx f$, graphically
\begin{figure}[H]
\centering
\begin{tikzpicture}[scale=0.60, transform shape]

	\begin{pgfonlayer}{nodelayer}
		\node [style=none] (0) at (-3.5, 1.5) {};
		\node [style=none] (1) at (-2, 2) {};
		\node [style=none] (2) at (-2, 1) {};
		\node [style=none] (3) at (-1.25, 1) {};
		\node [style=nodonero] (5) at (-2.5, 1.5) {};
		\node [style=none] (6) at (-3, 1.75) {A};
		\node [style=box] (14) at (-1.25, 1) {$f$};
		\node [style=nodonero] (15) at (-0.25, 1) {};
		\node [style=none] (16) at (0, 2) {};
		\node [style=none] (19) at (0, 2.25) {A};
		\node [style=none] (20) at (-0.75, 0.75) {B};
		\node [style=none] (21) at (2, 1.5) {};
		\node [style=none] (23) at (2.75, 1.75) {A};
		\node [style=none] (25) at (1, 1.5) {$\leq$};
		\node [style=none] (29) at (3.5, 1.5) {};
		\node [style=none] (31) at (-3.5, -1) {};
		\node [style=none] (32) at (-2, -0.5) {};
		\node [style=none] (33) at (-2, -1.5) {};
		\node [style=none] (34) at (-1.25, -1.5) {};
		\node [style=nodonero] (35) at (-2.5, -1) {};
		\node [style=none] (36) at (-3, -0.75) {A};
		\node [style=box] (37) at (-1.25, -1.5) {$f$};
		\node [style=nodonero] (38) at (-0.25, -1.5) {};
		\node [style=none] (39) at (0, -0.5) {};
		\node [style=none] (40) at (0, -0.25) {B};
		\node [style=none] (41) at (-0.75, -1.75) {B};
		\node [style=none] (42) at (2, -1) {};
		\node [style=none] (43) at (2.5, -0.75) {A};
		\node [style=none] (44) at (1, -1) {$\approx $};
		\node [style=box] (45) at (-1.25, -0.5) {$f$};
		\node [style=box] (46) at (3, -1) {$f$};
		\node [style=none] (47) at (4, -1) {};
		\node [style=none] (48) at (3.5, -0.75) {B};
	\end{pgfonlayer}
	\begin{pgfonlayer}{edgelayer}
		\draw (0.center) to (5);
		\draw [bend left, looseness=1.25] (5) to (1.center);
		\draw [bend right] (5) to (2.center);
		\draw (2.center) to (3.center);
		%\draw (3.center) to (14);
		\draw (14) to (15);
		\draw (31.center) to (35);
		\draw [bend left, looseness=1.25] (35) to (32.center);
		\draw [bend right] (35) to (33.center);
		\draw (33.center) to (34.center);
		%\draw (34.center) to (37);
		\draw (37) to (38);
		\draw (32.center) to (45);
		\draw (45) to (39.center);
		\draw (42.center) to (46);
		\draw (46) to (47.center);
		\draw (1.center) to (16.center);
		\draw (21.center) to (29.center);
	\end{pgfonlayer}
\end{tikzpicture}

\end{figure}
\end{myproposition}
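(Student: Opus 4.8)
The plan is to obtain both inequalities directly from the two defining order-relations of an oplax cartesian category, namely the oplax naturality of the discharger, $!_B f \leq \,!_A$, and of the duplicator, $\nabla_B f \leq (f\otimes f)\nabla_A$, combined with the commutative comonoid axioms \eqref{comm_comon} and the monotonicity of composition and of $\otimes$ that comes with the preorder-enrichment.

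For the inequality $\dom(f) \leq \id_A$: unfolding the definition, $\dom(f) = (\id_A\otimes \,!_B f)\nabla_A$. Since $!_B f \leq \,!_A$ and both $\otimes$ and pre-composition with $\nabla_A$ are monotone, we get $\dom(f) \leq (\id_A\otimes \,!_A)\nabla_A$, and the right-hand side equals $\id_A$ by counitality (bottom-right of \eqref{comm_comon}). Post-composing this inequality with $f$ and using monotonicity then already yields one half of the second claim, namely $f\dom(f) \leq f\,\id_A = f$.

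For the converse inequality $f \leq f\dom(f)$, I would post-compose the oplax naturality of the duplicator, $\nabla_B f \leq (f\otimes f)\nabla_A$, with $\id_B\otimes \,!_B$. By counitality the left-hand side collapses to $(\id_B\otimes \,!_B)\nabla_B f = f$, while on the right, functoriality of $\otimes$ gives $(\id_B\otimes \,!_B)(f\otimes f)\nabla_A = (f\otimes \,!_B f)\nabla_A$, which is precisely $f(\id_A\otimes \,!_B f)\nabla_A = f\dom(f)$ (after the usual, here trivial, identification $B\otimes I = B$). Thus $f \leq f\dom(f)$, and together with the previous paragraph $f\dom(f) \approx f$.

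I do not anticipate any genuine obstacle: the argument is a short diagram chase using only monotonicity and the comonoid equations, and it can equally be presented as two brief chains of (in)equalities of string diagrams in the style of the rest of the section. The only points requiring a little care are applying monotonicity on the correct side of the composite and recognising $(f\otimes \,!_B f)\nabla_A$ as $f\dom(f)$.
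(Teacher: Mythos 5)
Your proof is correct and follows essentially the same route as the paper's: $\dom(f)\leq\id_A$ from $!_Bf\leq{}!_A$ plus counitality, the inequality $f\dom(f)\leq f$ by post-composition, and $f\leq f\dom(f)$ by whiskering the oplax naturality of $\nabla$ with $\id_B\otimes{}!_B$ and identifying $(f\otimes{}!_Bf)\nabla_A$ with $f\dom(f)$. No gaps.
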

\begin{proof}
\begin{enumerate}
	\item By definition of oplax cartesian category we have the inequality $\duefreccianoname{!_Bf}{\,!_A}$, and thus
	
	\[\begin{tikzpicture}[scale=0.60, transform shape]
	\begin{pgfonlayer}{nodelayer}
		\node [style=none] (0) at (-3.5, 1.5) {};
		\node [style=none] (1) at (-2, 2) {};
		\node [style=none] (2) at (-2, 1) {};
		\node [style=none] (3) at (-1.25, 1) {};
		\node [style=nodonero] (5) at (-2.5, 1.5) {};
		\node [style=none] (6) at (-3, 1.75) {A};
		\node [style=box] (14) at (-1.25, 1) {$f$};
		\node [style=nodonero] (15) at (-0.25, 1) {};
		\node [style=none] (16) at (0, 2) {};
		\node [style=none] (19) at (0, 2.25) {A};
		\node [style=none] (20) at (-0.75, 0.75) {B};
		\node [style=none] (21) at (7, 1.5) {};
		\node [style=none] (23) at (7.75, 1.75) {A};
		\node [style=none] (25) at (1, 1.5) {$\leq$};
		\node [style=none] (29) at (8.5, 1.5) {};
		\node [style=none] (30) at (2, 1.5) {};
		\node [style=none] (31) at (3.5, 2) {};
		\node [style=none] (32) at (3.5, 1) {};
		\node [style=nodonero] (34) at (3, 1.5) {};
		\node [style=none] (35) at (2.5, 1.75) {A};
		\node [style=nodonero] (37) at (4.5, 1) {};
		\node [style=none] (38) at (5, 2) {};
		\node [style=none] (39) at (4.25, 2.25) {A};
		\node [style=none] (40) at (4, 0.75) {A};
		\node [style=none] (41) at (6, 1.5) {=};
	\end{pgfonlayer}
	\begin{pgfonlayer}{edgelayer}
		\draw (0.center) to (5);
		\draw [bend left, looseness=1.25] (5) to (1.center);
		\draw [bend right] (5) to (2.center);
		\draw (2.center) to (3.center);
		\draw (14) to (15);
		\draw (1.center) to (16.center);
		\draw (21.center) to (29.center);
		\draw (30.center) to (34);
		\draw [bend left, looseness=1.25] (34) to (31.center);
		\draw [bend right] (34) to (32.center);
		\draw (31.center) to (38.center);
		\draw (32.center) to (37);
	\end{pgfonlayer}
\end{tikzpicture}
	\]
	
%	
%		\[
%			\duefreccianoname{\dom (f) = (\id_A \ox \,!_Bf)\nabla_A }{(\id_A \ox \,!_A)\nabla_A=\id_A}.
%		\]
	\item The inequality $\duefreccianoname{f\dom (f) }{f}$ follows by the previous point.
We can prove the reverse $\duefreccianoname{f}{f\dom (f)}$ by observing that

\[\begin{tikzpicture}[scale=0.60, transform shape]
	\begin{pgfonlayer}{nodelayer}
		\node [style=none] (26) at (2, 1.5) {};
		\node [style=none] (27) at (3.25, 1.5) {};
		\node [style=box] (28) at (3.25, 1.5) {$f$};
		\node [style=none] (30) at (4, 1.25) {B};
		\node [style=none] (31) at (2.5, 1.25) {A};
		\node [style=none] (33) at (4.5, 1.5) {};
		\node [style=none] (34) at (5.5, 1.5) {=};
		\node [style=none] (35) at (6.5, 1.5) {};
		\node [style=none] (36) at (7.75, 1.5) {};
		\node [style=box] (37) at (7.75, 1.5) {$f$};
		\node [style=none] (38) at (8.5, 1.25) {B};
		\node [style=none] (39) at (7, 1.25) {A};
		\node [style=none] (40) at (9, 1.5) {};
		\node [style=none] (42) at (9.5, 2) {};
		\node [style=none] (43) at (9.5, 1) {};
		\node [style=none] (44) at (10, 1) {};
		\node [style=nodonero] (45) at (9, 1.5) {};
		\node [style=nodonero] (48) at (10, 1) {};
		\node [style=none] (49) at (10.25, 2) {};
		\node [style=none] (50) at (12.25, 1.5) {};
		\node [style=none] (51) at (13.75, 2) {};
		\node [style=none] (52) at (13.75, 1) {};
		\node [style=none] (53) at (14.5, 1) {};
		\node [style=nodonero] (54) at (13.25, 1.5) {};
		\node [style=none] (55) at (12.75, 1.75) {A};
		\node [style=box] (56) at (14.5, 1) {$f$};
		\node [style=nodonero] (57) at (15.5, 1) {};
		\node [style=none] (58) at (15.75, 2) {};
		\node [style=none] (59) at (15.25, 2.25) {B};
		\node [style=none] (60) at (15, 0.75) {B};
		\node [style=box] (62) at (14.5, 2) {$f$};
		\node [style=none] (64) at (11.25, 1.5) {$\leq$};
	\end{pgfonlayer}
	\begin{pgfonlayer}{edgelayer}
		\draw (26.center) to (27.center);
		\draw (28) to (33.center);
		\draw (35.center) to (36.center);
		\draw (37) to (40.center);
		\draw [bend left, looseness=1.25] (45) to (42.center);
		\draw [bend right] (45) to (43.center);
		\draw (43.center) to (44.center);
		\draw (42.center) to (49.center);
		\draw (50.center) to (54);
		\draw [bend left, looseness=1.25] (54) to (51.center);
		\draw [bend right] (54) to (52.center);
		\draw (52.center) to (53.center);
		\draw (56) to (57);
		\draw (51.center) to (58.center);
	\end{pgfonlayer}
\end{tikzpicture}
\]
% https://q.uiver.app/?q=WzAsOCxbMCwxLCJBIl0sWzEsMSwiQVxcb3RpbWVzIEEiXSxbMiwxLCJBXFxvdGltZXMgQiJdLFsyLDAsIkJcXG90aW1lcyBCIl0sWzAsMCwiQiJdLFsxLDAsIkJcXG90aW1lcyBCIl0sWzMsMSwiQS4iXSxbMywwLCJCIl0sWzAsMSwiXFxuYWJsYV9BIiwyXSxbMSwyLCJpZF9BXFxvdGltZXMgZiIsMl0sWzIsMywiZlxcb3RpbWVzIGlkX0IiLDJdLFswLDQsImYiXSxbNCw1LCJcXG5hYmxhX0IiXSxbNSwzLCJpZF97Qlxcb3RpbWVzIEJ9Il0sWzIsNiwiaWRfQVxcb3RpbWVzICFfQiIsMl0sWzYsNywiZiIsMl0sWzMsNywiaWRfQlxcb3RpbWVzICFfQiJdLFsxLDUsImZcXG90aW1lcyBmIiwyXSxbMTEsMTcsIlxcbGVxIiwxLHsic2hvcnRlbiI6eyJzb3VyY2UiOjIwLCJ0YXJnZXQiOjIwfSwic3R5bGUiOnsiYm9keSI6eyJuYW1lIjoibm9uZSJ9LCJoZWFkIjp7Im5hbWUiOiJub25lIn19fV1d
%\[\begin{tikzcd}
%	B & {B\otimes B} & {B\otimes B} & B \\
%	A & {A\otimes A} & {A\otimes B} & {A}
%	\arrow["{\nabla_A}"', from=2-1, to=2-2]
%	\arrow["{\id_A\otimes f}"', from=2-2, to=2-3]
%	\arrow["{f\otimes \id_B}"', from=2-3, to=1-3]
%	\arrow[""{name=0, anchor=center, inner sep=0}, "f", from=2-1, to=1-1]
%	\arrow["{\nabla_B}", from=1-1, to=1-2]
%	\arrow["{\id_{B\otimes B}}", from=1-2, to=1-3]
%	\arrow["{\id_A\otimes !_B}"', from=2-3, to=2-4]
%	\arrow["f"', from=2-4, to=1-4]
%	\arrow["{\id_B\otimes !_B}", from=1-3, to=1-4]
%	\arrow[""{name=1, anchor=center, inner sep=0}, "{f\otimes f}"', from=2-2, to=1-2]
%	\arrow["\leq"{description}, draw=none, from=0, to=1]
%\end{tikzcd}\]
The first equality holds by the last axiom in \eqref{comm_comon}, while the inequality follows by definition of oplax cartesian category. 
%The second and the third squares commute, while in the first one we have the first defining inequality of oplax cartesian category.
%Since $(\id_B\ox \,!_B) \nabla_B=\id_B$ by the definition of gs-monoidal category, the upper composite is just $f$ and we obtain the required $\duefreccianoname{f}{f\dom (f)}$. 
\qedhere
\end{enumerate}
%See Appendix \ref{proof: prop: fdom(f)=gs f}.
\end{proof}

We next provide a uniqueness result for duplicators and dischargers.

\begin{myproposition}\label{prop:unicity oplax cartesian cat}
	Let $\mC$ be an oplax cartesian category with structure arrows $\nabla$ and $!$. If $\mC$ admits the structure of oplax cartesian category given by the same monoidal structure and the same preorder, but with structure arrows $\nabla'$ and $!'$, then $\nabla_A\approx \nabla_A'$ and $!_A\approx  {}!_A'$ for every object $A$ of $\mC$.
\end{myproposition}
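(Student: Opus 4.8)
The plan is to first establish $!_A\approx{}!_A'$ for every object $A$, and then to bootstrap this into $\nabla_A\approx\nabla_A'$. In both cases the idea is the same: feed a structure arrow of one of the two oplax cartesian structures into the defining oplax inequality of the \emph{other} structure, and then collapse the resulting composite using the gs-monoidal axioms (and, in the second case, the equivalence $!_A\approx{}!_A'$ already proved in the first).

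For the dischargers, I would apply the oplax inequality $!_B\circ f\leq{}!_A$ of the structure $(\nabla,!)$ to the arrow $f:={}!_A'\colon A\to I$. This gives $!_I\circ{}!_A'\leq{}!_A$, and since $!_I=\id_I$ by Definition~\ref{def gs-monoidal cat}, we obtain ${}!_A'\leq{}!_A$. Exchanging the roles of the two structures and using $!_I'=\id_I$ yields ${}!_A\leq{}!_A'$, hence ${}!_A\approx{}!_A'$.

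For the duplicators, the key move is to apply the oplax inequality $\nabla_B\circ f\leq(f\otimes f)\circ\nabla_A$ of $(\nabla,!)$ to $f:=\nabla_A'\colon A\to A\otimes A$, which gives
\[\nabla_{A\otimes A}\circ\nabla_A'\ \leq\ (\nabla_A'\otimes\nabla_A')\circ\nabla_A,\]
and then to post-compose both sides with the ``middle projection'' $\id_A\otimes!_A\otimes!_A\otimes\id_A\colon A\otimes A\otimes A\otimes A\to A\otimes A$, which preserves the inequality by monotonicity of composition. On the left, expanding $\nabla_{A\otimes A}$ through the monoidal multiplicativity axiom and then applying the two counitality axioms of $(\nabla,!)$ collapses the left-hand side exactly to $\nabla_A'$. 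On the right, $(\id_A\otimes!_A\otimes!_A\otimes\id_A)\circ(\nabla_A'\otimes\nabla_A')$ factors as $\bigl[(\id_A\otimes!_A)\circ\nabla_A'\bigr]\otimes\bigl[(!_A\otimes\id_A)\circ\nabla_A'\bigr]$; using ${}!_A\approx{}!_A'$ from the first part, monotonicity of $\otimes$, and the counitality axioms for $(\nabla',!')$, each factor is preorder equivalent to $\id_A$, so the whole right-hand side is preorder equivalent to $\nabla_A$. Combining these, $\nabla_A'\leq\nabla_A$; the symmetric argument gives $\nabla_A\leq\nabla_A'$, hence $\nabla_A\approx\nabla_A'$.

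I expect the only real obstacle to be the bookkeeping in the duplicator step: one must juggle the symmetry braiding together with the two monoidal multiplicativity equations and the two counitality axioms in order to verify that $(\id_A\otimes!_A\otimes!_A\otimes\id_A)\circ\nabla_{A\otimes A}=\id_{A\otimes A}$, and likewise --- now only up to $\approx$ --- for the primed structure. This is most transparently carried out with string diagrams, in the same spirit as the proof of Proposition~\ref{prop: ! and nabla are total and functional}; beyond the two applications of oplax naturality and the already-established ${}!_A\approx{}!_A'$, no genuinely new idea is needed.
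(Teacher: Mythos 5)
Your proposal is correct and follows essentially the same route as the paper: the discharger half is verbatim the paper's argument ($!'_A = {}!_I\,!'_A \leq {}!_A$ plus symmetry), and the paper dispatches the duplicator half with only the remark that a ``similar argument'' works. Your detailed duplicator step --- feeding $\nabla'_A$ into the oplax inequality for $\nabla$, post-composing with $\id_A\otimes{}!_A\otimes{}!_A\otimes\id_A$, and using the already-established $!_A\approx{}!_A'$ together with monoidal multiplicativity, naturality of the braiding, and counitality --- is a correct instantiation of that omitted argument.
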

\begin{proof}
	Let us consider the two operators $!$ and $!'$.
	For every object $A$, we have
	\[
		!'_A = {} !_I !'_A \leq {}!_A,
	\]
	where the first equation is simply by $!_I = \id_I$ and the inequality is by oplax cartesianity.
We then obtain $!'_A \approx  {}!_A$ by symmetry.

With a similar argument we can prove that $\nabla_A\approx \nabla'_A$.
\end{proof}

\begin{myexample}
	\label{lax_cartesian_generated}
	Given a (plain) gs-monoidal category $\mC$, one can consider the monoidal preorder enrichment \emph{generated} by the inequalities of Definition~\ref{def oplax cartesian cat}, i.e.~the smallest preorder on every hom-set which makes both composition and $\otimes$ monotone and satisfies those inequalities.
	In this way, $\mC$ becomes oplax cartesian in a canonical way.
	
	An interesting case of this construction
	is $\mathbf{FinStoch}$, the Markov category of finite sets and stochastic maps. 
	In this case, it was shown by Dario Stein (personal communication) that the preorder enrichment generated like this recovers the \emph{support} of a stochastic matrix: 
	for stochastic matrices $f, g : X \to Y$ we have $f \le g$ if and only if $f(y|x) > 0$ implies $g(y|x) > 0$ for all $x \in X$ and $y \in Y$.
	It thus follows that $\mathbf{FinStoch} / \!\!\!\approx$, the quotient of $\mathbf{FinStoch}$ by preorder equivalence, is a gs-monoidal category isomorphic to $\FinRel$.
\end{myexample}

We now conclude this section with a discussion of functors between preorder-enriched gs-monoidal categories and oplax cartesian categories.
The various notions of gs-monoidal functor introduced in Definition \ref{def gs monoidal functor} can be also used in the context of preorder-enriched gs-monoidal categories, with the only difference that $F$ is additionally required to be a preorder-enriched functor, which amounts to monotonicity on hom-sets.
On functors between oplax cartesian categories, one often has additional inequalities, which take the following form.

	\begin{mydefinition}\label{def:(op) oplax cartesian functor}
		Let $\mC$ and $\mD$
		%$(\mC,\otimes_{\mC},I_{\mC},\nabla^{\mC},!^{\mC})$ and $(\mD,\otimes_{\mD},I_{\mD},\nabla^{\mD},!^{\mD})$ 
		be oplax cartesian categories and  
		$\freccia{\mC}{F}{\mD}$ a preorder-enriched functor. Then
		\begin{enumerate}
			\item $F$ is \textbf{colax cartesian}\footnote{The use of ``colax'' here refers to the direction of the 2-cell, namely from $F(\nabla_A)$ to $\psi_{A,A}\circ \nabla_{FA}$. } if it is a lax symmetric monoidal functor with structure arrows $\psi, \psi_0$ 
			such that the following inequalities hold %for all $A$ in $\mC$;		
                \[\begin{tikzcd}[column sep=tiny]
                    {F(A)} && {F(A\otimes A)}   &&& FA && {F(I)} \\
                    & {F(A)\otimes F(A)} &&&&& I
                    \arrow[""{name=0, anchor=center, inner sep=0}, "{F(\nabla_A)}", from=1-1, to=1-3]
                    \arrow["{\nabla_{FA}}"', from=1-1, to=2-2]
                    \arrow["{\psi_{A,A}}"', from=2-2, to=1-3]
                    \arrow[""{name=1, anchor=center, inner sep=0}, "{F(!_A)}", from=1-6, to=1-8]
                    \arrow["{!_{FA}}"', from=1-6, to=2-7]
                   \arrow["{\psi_0}"', from=2-7, to=1-8]
                         \arrow["\leq"{marking}, draw=none, from=0, to=2-2]
		        \arrow["\leq"{marking, xshift=-2pt}, draw=none, from=1, to=2-7]
                \end{tikzcd}\] 
			\item $F$ is \textbf{colax opcartesian} if it is an oplax symmetric monoidal functor with structure arrows $\phi, \phi_0$ 
			such that the following inequalities hold %for all $A$ in $\mC$;	 			
                \[\begin{tikzcd}[column sep=tiny]
                 {F(A)} && {F(A\otimes A)}   &&& FA && {F(I)} \\
                 & {F(A)\otimes F(A)} &&&&& I
                    \arrow[""{name=0, anchor=center, inner sep=0}, "{F(\nabla_A)}", from=1-1, to=1-3]
                    \arrow["{\nabla_{FA}}"', from=1-1, to=2-2]
                    \arrow["{\phi_{A,A}}", from=1-3, to=2-2]
                    \arrow[""{name=1, anchor=center, inner sep=0}, "{F(!_A)}", from=1-6, to=1-8]
                    \arrow["{!_{FA}}"', from=1-6, to=2-7]
                    \arrow["{\phi_0}", from=1-8, to=2-7]
                         \arrow["\leq"{marking}, draw=none, from=0, to=2-2]
                    \arrow["\leq"{marking, xshift=-2pt}, draw=none, from=1, to=2-7]
                \end{tikzcd}\]
			\item $F$ is \textbf{colax bicartesian} if it is colax cartesian and colax opcartesian in such a way as to become bilax monoidal (Definition \ref{def:bilax monoidalfunctor}).
			%\item $F$ is \textbf{colax Frobenius cartesian} if it is colax cartesian and colax opcartesian in such as way as to become Frobenius monoidal (Definition \ref{def:frobenius monoidalfunctor}).
		\end{enumerate}
	\end{mydefinition}

\section{Kleisli categories are gs-monoidal}\label{sec:Kleisli and span}

In recent years, strong monads and Kleisli categories have been used to provide categorical models in several branches of computer science.  The leading example is Moggi's work~\cite{Moggi89,Moggi91} on an abstract approach to the notion of computation. We refer to \cite{HCA2,SGL,FOCL} for introductions to the theory of monads and to \cite{IHOCL,PHDTT} for more details. Appendix \ref{sec: strong and commutative monad} offers a short recap of the main definitions. We start by recalling some relevant and known examples of Kleisli categories.

\begin{myexample}\label{ex: powerset monad}
$\Rel$ is the Kleisli category of the powerset monad, where its underlying functor $\freccia{\Set}{P}{\Set}$ sends every set $X$ to its powerset $P(X)$.
\end{myexample}
\begin{myexample}
The category of sets and stochastic maps (with pointwise finite support)
%\footnote{Or to be more precise, stochastic maps with pointwise finite support.} 
is the Kleisli category of the finite distribution monad $\freccia{\Set}{\mathcal{D}}{\Set}$,
sending a set $X$ to the set $\mathcal{D}(X)$ of finitely supported probability measures on $X$.
\end{myexample}
\begin{myexample}\label{ex: Giry monad}
The category of measurable spaces and Markov kernels is the Kleisli category of the Giry monad $\freccia{\mathbf{Meas}}{\mathcal{G}}{\mathbf{Meas}}$, where $\mathbf{Meas}$ denotes the category of measurable spaces and
measurable functions, see \cite{Giry82,JACOBS2018}.
\end{myexample}

\begin{notation*}
	To avoid confusion between the arrows of $\mA$ and the arrows of a Kleisli category $\mA_T$, we adopt the notation $\freccia{A}{f^\sharp}{T(B)}$ for the representative in $\mA$ of an arrow $\freccia{A}{f}{B}$ in $\mA_T$~\cite{FritzGPR23}.
	We often define a Kleisli arrow $f$ by specifying its representative $f^\sharp$.
	The definition of Kleisli composition then amounts to the equation $(g \circ f)^\sharp = \mu \circ T(g^\sharp) \circ f^\sharp$.
\end{notation*}

%=======================0  working
%{\LARGE New text from here by Andrea}

Given a monad $(T,\mu,\eta)$ on a symmetric monoidal category $\mA$, it is well known (see e.g.~\cite{Kock70}) that the Kleisli category $\mA_T$ inherits a symmetric monoidal structure precisely when the monad is %strong and 
commutative. 
%see Appendix \ref{sec: strong and commutative monad}.
If the base category $\mA$ is cartesian monoidal, then the induced monoidal product in $\mA_T$ may not be cartesian.
A simple example is the powerset monad $\freccia{\Set}{P}{\Set}$ on the category of sets and functions of Example \ref{ex: powerset monad}, whose Kleisli category $\Set_{P}$ is exactly $\Rel$, and the categorical product on $\Set$ induces a monoidal product on $\Rel$ given by the cartesian product. The cartesian product of sets is not the categorical product in $\Rel$, but just a monoidal product. 

Thus a natural question is: what is the algebraic structure that is inherited from the base category 
by the Kleisli category of a commutative monad? Looking at the hierarchy of categories 
sketched in the introduction, spanning from symmetric monoidal to cartesian ones and including gs-monoidal, Markov 
and restriction categories (with restriction products), we answer that gs-monoidality is inherited, while the naturality 
of dischargers or duplicators is not.

%is exactly the structure that is induced on 
%the Kleisli category of a strong commutative monad. 
% this diagram:  
%{\small
%		\[
%			\begin{tikzcd}[row sep=6pt, column sep=0pt]
%				& \mbox{Symmetric Monoidal} \ar[d, dash] & \\
%				& \mbox{GS-Monoidal} \ar[dl, dash] \ar[dr, dash]&\\ 
%				\mbox{Markov}  \ar[dr, dash]& & \mbox{Restriction} \ar[dl, dash]\\
%				& \mbox{Cartesian}  &				
%			\end{tikzcd}
%		\]
%}

%Here Markov categories (Example~\ref{ex: Markov categories}) are gs-monoidal categories where the discharger is natural, 
%and dually \emph{GS-Monoidal Fun} categories are those where the duplicator is natural, i.e.~those gs-monoidal categories $\mC$ which coincide with their subcategory $\pfn{\mC}$ (and are therefore restriction categories).  
%
%The next result shows that gs-monoidality is exactly the structure that is induced on 
%the Kleisli category of a strong commutative monad. 
 
%
% structure is exactly 
%In this section we aim at proving that the Kleisli category of a commutative monad on a gs-monoidal category has a canonical gs-monoidal structure.
%% generalizing an existing result for the Kleisli categories of affine commutative monads being Markov categories~\cite[Corollary~3.2]{Fritz_2020}.

\begin{myproposition}\label{prop: Kleisli su cartesiane sono gs}
Let $(T,\mu,\eta)$ be a commutative  monad on a gs-monoidal category $\mA$. Then the Kleisli category $\mA_T$ is a gs-monoidal category with duplicators and dischargers given for every object $A$ by
\[
	\nabla^\sharp_A:=\eta_{A\otimes A}\nabla_A, \qquad !^\sharp_A:=\eta_I!_A.
\]
\end{myproposition}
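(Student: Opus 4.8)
The plan is to reduce everything to a single structural observation about the canonical inclusion functor $J \colon \mathcal{A} \to \mathcal{A}_T$, which is the identity on objects and sends an arrow $f \colon A \to B$ of $\mathcal{A}$ to the Kleisli arrow represented by $(Jf)^\sharp := \eta_B \circ f$. First I would recall that, since $T$ is commutative, the Kleisli category $\mathcal{A}_T$ carries a symmetric monoidal structure (see~\cite{Kock70} and Appendix~\ref{sec: strong and commutative monad}) in which $A \otimes_T B := A \otimes B$ on objects, the tensor of Kleisli arrows $f \colon A \to B$ and $g \colon C \to D$ is represented by the double strength $\mathrm{dst}_{B,D} \circ (f^\sharp \otimes g^\sharp) \colon A \otimes C \to T(B \otimes D)$, and the braiding of $\mathcal{A}_T$ is $J(\gamma_{A,B})$. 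Invoking strictness without loss of generality, as discussed after Definition~\ref{def gs-monoidal cat}, we may assume that $\mathcal{A}$, and hence $\mathcal{A}_T$, is strict.

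The key step is to verify that $J$ is a \emph{strict} symmetric monoidal functor. It is strict on objects by construction; it preserves identities and composition because it is a functor; and it sends $\gamma_{A,B}$ to the braiding of $\mathcal{A}_T$ by the very definition of the latter. The only point requiring a computation is preservation of the tensor of arrows: $J(f \otimes g)$ is represented by $\eta_{B \otimes D} \circ (f \otimes g)$, while $Jf \otimes_T Jg$ is represented by $\mathrm{dst}_{B,D} \circ (\eta_B \otimes \eta_D) \circ (f \otimes g)$, and these coincide thanks to the standard compatibility $\mathrm{dst}_{B,D} \circ (\eta_B \otimes \eta_D) = \eta_{B \otimes D}$ of the (double) strength with the unit of the monad.

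Granting this, I would then note that the defining formulas $\nabla^\sharp_A = \eta_{A \otimes A} \circ \nabla_A$ and $!^\sharp_A = \eta_I \circ {}!_A$ say precisely that $\nabla^\sharp_A = J(\nabla_A)$ and $!^\sharp_A = J(!_A)$ as arrows of $\mathcal{A}_T$. Since the commutative comonoid axioms~\eqref{comm_comon} and the monoidal multiplicativity axioms~\eqref{monoidal_mult} hold for $\nabla$ and $!$ in $\mathcal{A}$, and since each of these axioms is an equality between composites built only from $\nabla$'s, $!$'s, identities and braidings (plus coherence isomorphisms in the non-strict reading), applying the strict symmetric monoidal functor $J$ — which preserves all of these on the nose — produces exactly the corresponding axioms for $\nabla^\sharp$ and $!^\sharp$ in $\mathcal{A}_T$. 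This shows $\mathcal{A}_T$ is gs-monoidal.

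The main obstacle is really just the bookkeeping in the key step: unwinding the definition of $\otimes_T$ via the double strength and checking the identity $\mathrm{dst} \circ (\eta \otimes \eta) = \eta$ (and, if one prefers not to pass to the strict case, also the compatibility of $J$ with the coherence isomorphisms). Once $J$ is known to be strict symmetric monoidal, the transfer of the gs-monoidal axioms is automatic. A more pedestrian alternative would bypass $J$ altogether and verify each equation in~\eqref{comm_comon} and~\eqref{monoidal_mult} directly, expanding Kleisli composition through $(g \circ f)^\sharp = \mu \circ T(g^\sharp) \circ f^\sharp$ and using the monad laws $\mu \circ T(\eta) = \mathrm{id} = \mu \circ \eta_T$ together with naturality of $\eta$ and the strength axioms; this works but is longer and hides the conceptual point.
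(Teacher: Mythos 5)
Your proposal is correct and follows essentially the same route as the paper: the paper likewise observes that $\nabla^\sharp$ and $!^\sharp$ are the images of $\nabla$ and $!$ under the inclusion $\mA \to \mA_T$, that this functor is strict symmetric monoidal, and that such a functor transfers the commutative comonoid and monoidal multiplicativity axioms on the nose. The only difference is that you spell out the verification that the inclusion is strict symmetric monoidal (via $\mathrm{dst}\circ(\eta\otimes\eta)=\eta$), which the paper takes as known.
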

\begin{proof}

It is known that, under the current assumptions, the Kleisli  category $\mA_T$ is a symmetric monoidal category. Now let us consider an object
 $A$ of the  Kleisli category $\mA_T$.
As mentioned in the statement, we define the arrow $\freccia{A}{\nabla_A}{A\otimes A}$ of $\mA_T$ as represented by the arrow
\[\xymatrix@+1.5pc{
A\ar[r]^{\nabla_A} & A\otimes A \ar[r]^{\eta_{A\otimes A}} & T(A\otimes A)
}\]
of $\mA$. Similarly, we define the arrow $\freccia{A}{!_A}{I}$ of $\mA_T$ as represented by the arrow
\[\xymatrix@+1.5pc{
A\ar[r]^{!_A}& I\ar[r]^{\eta_I} & TI.
}\]
Although it is possible now to verify directly that the axioms of a gs-monoidal category are satisfied, there is a more concise and more insightful argument that works as follows.\footnote{See~\cite[Corollary~3.2]{Fritz_2020}, where this was previously used for Markov categories.}
By definition, the duplicators and dischargers in $\mA_T$ are the images of those in $\mA$ under the inclusion functor $\mA \to \mA_T$, which is strict symmetric monoidal.
It now suffices to note that a strict symmetric monoidal functor maps commutative comonoids to commutative comonoids, and the monoidal multiplicativity conditions~\eqref{monoidal_mult} transfer from $\mA$ to $\mA_T$ for the same reason.
Thus $\mA_T$ is a gs-monoidal category.
\end{proof} 

\begin{myremark}
An equivalent choice for the arrow $\nabla_A^\sharp$ in Proposition~\ref{prop: Kleisli su cartesiane sono gs} is $\nabla_A^\sharp:=c_{A,A}\nabla_{T(A)}\eta_A$, for $\freccia{T(A)\otimes T(A)}{c_{A,A}}{T(A\otimes A)}$ the canonical arrow defined via the commutative structure of the monad $T$, meaning the diagonal of the diagram in Definition \ref{commutative_monad}. It is straightforward to check that $c_{A,A}\nabla_{T(A)}\eta_A=\eta_{A\otimes A}\nabla_A$.
%this alternative choice actually coincides with the original one.
\end{myremark}

\begin{myremark}\label{rem:affine monad}
The above discussed example of the powerset monad on $\mathbf{Set}$ shows that for a commutative monad, the naturality of the
discharger is not preserved in general by the Kleisli category construction (therefore $\mA_T$ is not a restriction category with restriction products in general, even if $\mA$ is), and likewise for the duplicator. So $\mA_T$ is not a Markov category in general either, even if $\mA$ is.

	In fact, if $\mA$ is a Markov category (Example~\ref{ex: Markov categories}) and  $(T,\mu,\eta)$ is a commutative monad on $\mA$, then the monoidal unit $I$ is terminal in $\mA_T$ if and only if $T(I)\cong I$ in $\mA$, a property known as $T$ being an \emph{affine} monad~\cite{Kock71,Jacobs1994}.
	In other words, if the monad $T$ preserves the terminal object, every arrow of $\mA_T$ is $\mA_T$-total and this makes $\mA_T$ into a Markov category~\cite[Corollary~3.2]{Fritz_2020}. As an example, consider the non-empty powerset monad $\freccia{\Set}{P^*}{\Set}$, associating to a set $X$ the family of its non-empty subsets $P(X) \setminus \emptyset$: the arrows of the Kleisli categories are total relations, thus $\total{\Rel}\cong \Set_{P^*}$, and indeed we have that $T(I) \cong I$.
\end{myremark}
\begin{myremark}
For representing functional relations in terms of Kleisli categories, it suffices to consider the lifting monad (also called maybe monad), associating with a set $X$ the pointed set $X_{\bot}:=X+1$.
%\tob{I haven't heard that term before and google doesn't turn up much. Isn't ``maybe monad'' a more standard term, with $T(X) = X + 1$? (I assume that that is what $X_\bot$ means, but I'm not sure.)}
Its Kleisli category is exactly $\pfn{\Rel}$, the category of sets and partial functions.
\end{myremark}

\begin{myexample}\label{ex:Stoch and QBS_P are gs-monoidal}
The category of measurable spaces and Markov kernels of Example \ref{ex: Giry monad} is gs-monoidal, and actually a Markov category, since the Giry monad $\freccia{\mathbf{Meas}}{\mathcal{G}}{\mathbf{Meas}}$ is an affine commutative monad with respect to the cartesian monoidal structure on $\mathbf{Meas}$.
This Kleisli category is often denoted by $\mathbf{Stoch}$.

Similarly, the category of quasi-Borel spaces $\mathbf{QBS}$ is cartesian and the monad $\freccia{\mathbf{QBS}}{P}{\mathbf{QBS}}$ of probability measures on it is an affine commutative monad, see \cite{Heunen2017} for all the details. Therefore the Kleisli category  $\mathbf{QBS}_P$ is a gs-monoidal category, and in fact a Markov category.
%\tob{I've deleted the references to Theorem \ref{thm: Kleisli su cartesiane sono gs} here, since these things already follow from the statement in my 2019 paper on Markov cats}
\end{myexample}
Now let us consider a commutative monad $\freccia{\mA}{T}{\mA}$ on a gs-monoidal category $\mA$, and let us denote by $\freccia{T(X)\otimes T(Y)}{c_{X,Y}}{T(X\otimes Y)}$ the canonical lax symmetric monoidal structure defined via the commutative structure  of $T$.

\begin{mydefinition}\label{def:gs-monoidal monad}
	A commutative monad $\freccia{\mA}{T}{\mA}$ is called \textbf{gs-monoidal monad} if $T(\nabla_A)=c_{A,A}\nabla_{T(A)}$ and $T(!_A)=\eta_I!_{T(A)}$ for every object $A$ of $\mA$.
\end{mydefinition}
%\dav{Examples}
\begin{myremark}\label{rem: gs-monoidal monad}
	Since $\psi_{X,Y} := c_{X,Y}$ together with $\psi_0 := \eta_I$ makes $T$ into a lax symmetric monoidal functor (Remark~\ref{rem:comm_vs_symmon}), 
	Definition~\ref{def:gs-monoidal monad} can be equivalently introduced as requiring $T$ to be a lax gs-monoidal functor.
\end{myremark}

We have seen in Proposition~\ref{prop: Kleisli su cartesiane sono gs} that Kleisli categories for commutative monads on gs-monoidal categories
inherit the gs-monoidal structure. In the case of gs-monoidal monads on cartesian categories, we obtain the following result.

\begin{mylemma}\label{lem: Kleisli are cartesian}
	Let $\mA$ be a cartesian monoidal category and $\freccia{\mA}{T}{\mA}$ a gs-monoidal monad. Then  $\mA_T$ is cartesian monoidal too.
\end{mylemma}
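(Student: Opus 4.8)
The plan is to leverage Proposition~\ref{prop: Kleisli su cartesiane sono gs}, which already gives that $\mA_T$ is gs-monoidal with $\nabla^\sharp_A = \eta_{A\otimes A}\nabla_A$ and $!^\sharp_A = \eta_I!_A$, and then to show that \emph{every} arrow of $\mA_T$ is both $\mA_T$-total and $\mA_T$-functional. Once this is done, $\fn{\mA_T} = \mA_T$, and Proposition~\ref{thm total functions have products}(2) tells us that $\fn{\mA_T}$ — hence $\mA_T$ — is cartesian monoidal, with $\otimes$ as the categorical product. (Equivalently one could verify directly that $\nabla^\sharp$ and $!^\sharp$ are natural in $\mA_T$ and invoke the folklore fact recorded around~\eqref{cats_diagram}; the computation is the same.) Throughout I would use the monad laws $\mu\circ T(\eta) = \id = \mu\circ\eta$, the naturality of $\eta$, the gs-monoidal monad identities $T(\nabla_A) = c_{A,A}\nabla_{T(A)}$ and $T(!_A) = \eta_I!_{T(A)}$ from Definition~\ref{def:gs-monoidal monad}, and — crucially — the naturality of the diagonal $\nabla$ and of $!$ in $\mA$ itself, which holds because $\mA$ is cartesian monoidal, so that $\nabla_A = \angbr{\id_A}{\id_A}$ and $!_A$ is the unique arrow to the terminal object $I$.

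\textbf{Totality.} Fix a Kleisli arrow $\freccia{A}{f}{B}$ with representative $f^\sharp\colon A\to T(B)$. By the Kleisli composition formula $(g\circ f)^\sharp = \mu\circ T(g^\sharp)\circ f^\sharp$ from the Notation, the representative of $!_B\circ f$ is $\mu_I\circ T(\eta_I!_B)\circ f^\sharp$. I would expand $T(\eta_I!_B) = T(\eta_I)\circ T(!_B) = T(\eta_I)\circ\eta_I\circ{}!_{T(B)}$ using the gs-monoidal monad condition, cancel $\mu_I\circ T(\eta_I) = \id_{T(I)}$, and use $!_{T(B)}\circ f^\sharp = {}!_A$ (naturality of $!$ in $\mA$, since $I$ is terminal) to land on $\eta_I!_A = {}!^\sharp_A$. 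Hence $!_B\circ f = {}!_A$ in $\mA_T$, i.e.\ $f$ is $\mA_T$-total.

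\textbf{Functionality.} Here I would compute both representatives and show they coincide. Recalling that the monoidal structure on $\mA_T$ for a commutative monad (cf.~\cite{Kock70}) sends Kleisli arrows $f,g$ to the arrow with representative $c_{-,-}\circ(f^\sharp\otimes g^\sharp)$, the left-hand side $\nabla_B\circ f$ has representative $\mu_{B\otimes B}\circ T(\eta_{B\otimes B}\nabla_B)\circ f^\sharp$, which — by cancelling $\mu_{B\otimes B}\circ T(\eta_{B\otimes B}) = \id$ and applying $T(\nabla_B) = c_{B,B}\nabla_{T(B)}$ — reduces to $c_{B,B}\circ\nabla_{T(B)}\circ f^\sharp$. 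The right-hand side $(f\otimes f)\circ\nabla_A$ has representative $\mu_{B\otimes B}\circ T\!\left(c_{B,B}(f^\sharp\otimes f^\sharp)\right)\circ\eta_{A\otimes A}\circ\nabla_A$; pushing $\eta$ to the front via its naturality and cancelling $\mu\circ\eta = \id$ reduces this to $c_{B,B}\circ(f^\sharp\otimes f^\sharp)\circ\nabla_A$. Finally $\nabla_{T(B)}\circ f^\sharp = (f^\sharp\otimes f^\sharp)\circ\nabla_A$ by naturality of the diagonal in the cartesian category $\mA$, so the two representatives agree and $f$ is $\mA_T$-functional.

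Combining the two steps, $\fn{\mA_T} = \mA_T$, and the lemma follows from Proposition~\ref{thm total functions have products}(2) (noting that the cartesian monoidal structure it produces is precisely the given $\otimes$). The only delicate point — requiring care rather than ingenuity — is keeping the laxator maps $c_{X,Y}$ of the Kleisli monoidal product in their correct positions while juggling the monad laws and the two gs-monoidal monad identities; there is no conceptual obstacle beyond this bookkeeping.
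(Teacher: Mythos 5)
Your proposal is correct and follows essentially the same route as the paper: compute the Kleisli representatives of $\nabla_B\circ f$ and $(f\otimes f)\circ\nabla_A$ (resp.\ $!_B\circ f$ and $!_A$), reduce both sides using the monad laws, naturality of $\eta$, the gs-monoidal monad identities, and the naturality of $\nabla$ and $!$ in the cartesian base, and conclude that every Kleisli arrow is total and functional. The only cosmetic difference is that you make explicit the totality computation and the final appeal to Proposition~\ref{thm total functions have products}(2), both of which the paper leaves as "similarly one can check" and an implicit conclusion.
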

\begin{proof}
	Let us consider an arrow $\freccia{A}{f}{B}$ of $\mA_T$. We first show that $f$ is functional, meaning $\nabla_B\circ f= (f\otimes f) \circ\nabla_A$.
Reasoning in terms of representing morphisms in $\mA$, observe that 
\[(\nabla_B\circ f)^{\sharp}=\mu_{B\otimes B}T(\eta_{B\otimes B}\nabla_B)f^{\sharp}=T(\nabla_B)f^{\sharp}.\]
By definition of the monoidal structure on $\mA_T$, it is straightforward to check that
\[((f\otimes f) \circ\nabla_A)^{\sharp}=c_{B,B}( f^{\sharp}\otimes f^{\sharp})\nabla_A.\]
Employing first the assumption that $T$ is a gs-monoidal monad and that $\mA$ is cartesian, we have that
\[ T(\nabla_B)f^{\sharp}= c_{B,B}\nabla_{T(B)}f^{\sharp}=  c_{B,B}( f^{\sharp}\otimes f^{\sharp})\nabla_A.\]
Therefore we can conclude that $\nabla_B\circ f=(f\otimes f) \circ\nabla_A$.
Similarly, one can check that $!_Bf={}!_A$ in $\mA_T$.
\end{proof}

\begin{myexample}
	\label{ex:abelian_group}
	Let $G$ be an abelian group.
	Then the functor $\freccia{\Set}{G \times -}{\Set}$ is a commutative monad in a canonical way, and it is easily seen to be a gs-monoidal monad.
	And indeed the resulting monoidal structure on its Kleisli category is cartesian.
\end{myexample}

Note that starting from a Markov category $\mA$ and a gs-monoidal monad $\freccia{\mA}{T}{\mA}$, 
$\mA_T$ is also Markov since the gs-monoidality implies that $!_I$ is inverse to $\eta_I$, making $T$ affine.
Restriction categories with restriction products are similarly preserved.

Now recall that given an arbitrary monad $\freccia{\mA}{T}{\mA}$, it induces a pair of functors $\freccia{\mA}{F_T}{\mA_T}$ and  $\freccia{\mA_T}{G_T}{\mA}$ such that $F_T\dashv G_T$ and $T=G_TF_T$, namely
\begin{itemize}
\item $F_T(X):=X$ and $F_T(f)^\sharp := \eta_Bf$ for $\freccia{A}{f}{B}$ in $\mA$;
\item $G_T(X):=TX$ and $G_T(f):= \mu_BT(f^\sharp) $ for $\freccia{A}{f}{B}$ in $\mA_T$.
\end{itemize}
When $\mA$ is gs-monoidal and $T$ is commutative, the functor $\freccia{\mA}{F_T}{\mA_T}$ is a strict gs-monoidal functor by definition of the gs-monoidal structure on $\mA_T$.
On the other hand, the functor $\freccia{\mA_T}{G_T}{\mA}$ is not strict monoidal in general, but just lax monoidal with structure morphisms $ 	\freccia{T(X)\times T(Y)}{\psi_{X,Y}:=c_{X,Y}}{T(X\times Y)} $ and $\freccia{I}{\psi_0:=\eta_I}{T(I)}$. The functor $G_T$ is lax gs-monoidal precisely when the monad $T$ is so.

\begin{myproposition}
Let $T$ be a commutative monad on a gs-monoidal category $\mA$. Then the Kleisli right adjoint $\freccia{\mA_T}{G_T}{\mA}$ is a lax gs-monoidal functor if and only if $T$ is a gs-monoidal monad.
\end{myproposition}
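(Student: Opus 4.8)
\emph{Plan.} Since it has already been observed that $G_T$ is a lax symmetric monoidal functor with structure morphisms $\psi_{X,Y} := c_{X,Y}$ and $\psi_0 := \eta_I$, being a \emph{lax gs-monoidal} functor amounts exactly to the commutativity of the two triangles of Definition~\ref{def gs monoidal functor}, i.e.\ to the equations
\[
	c_{A,A}\circ \nabla_{T(A)} = G_T(\nabla_A) \qquad\text{and}\qquad \eta_I\circ {}!_{T(A)} = G_T(!_A)
\]
for every object $A$ of $\mA$, where on the right $\nabla_A$ and $!_A$ denote the Kleisli duplicator and discharger of $\mA_T$ from Proposition~\ref{prop: Kleisli su cartesiane sono gs}. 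So the whole content is to evaluate these right-hand sides.

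First I would compute $G_T$ on the Kleisli duplicator and discharger back in $\mA$. Recalling that $\nabla_A$ in $\mA_T$ is represented by $\eta_{A\otimes A}\nabla_A$ and that $G_T$ sends a Kleisli arrow $f$ to $\mu\circ T(f^\sharp)$, the monad unit law $\mu\circ T\eta=\id$ gives
\[
	G_T(\nabla_A) = \mu_{A\otimes A}\circ T(\eta_{A\otimes A})\circ T(\nabla_A) = T(\nabla_A),
\]
and likewise $G_T(!_A) = \mu_I\circ T(\eta_I)\circ T(!_A) = T(!_A)$. Substituting these identities, the two triangle conditions become precisely $c_{A,A}\circ \nabla_{T(A)} = T(\nabla_A)$ and $\eta_I\circ {}!_{T(A)} = T(!_A)$, which is word for word the requirement of Definition~\ref{def:gs-monoidal monad} that $T$ be a gs-monoidal monad. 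Because this is an equivalence between two pairs of equations, both implications fall out simultaneously, with no need to treat the two directions separately.

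I do not expect a genuine obstacle: the argument is a pure unwinding of definitions. The only step that needs mild care is correctly identifying $G_T$ applied to the Kleisli duplicator and discharger — this is where one must jointly use the explicit representing morphisms of Proposition~\ref{prop: Kleisli su cartesiane sono gs} and the monad unit axiom — after which the lax gs-monoidality triangles collapse exactly onto the defining equations of a gs-monoidal monad.
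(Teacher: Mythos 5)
Your proposal is correct, and the decisive step is exactly the paper's: computing $G_T(\nabla_A)=\mu_{A\otimes A}T(\eta_{A\otimes A}\nabla_A)=T(\nabla_A)$ (and similarly for $!_A$) via the unit law, so that the two lax gs-monoidal triangles collapse onto the defining equations of a gs-monoidal monad, giving both directions at once. The only difference is one of emphasis: you take it as already observed that $\psi_{X,Y}=c_{X,Y}$, $\psi_0=\eta_I$ make $G_T$ lax symmetric monoidal, whereas the paper devotes the bulk of its proof to actually verifying this (in particular the naturality of $\psi$, which reduces to $\mu$ being a monoidal transformation); since the paper itself flags that it could not find a reference for that fact, a complete write-up should either prove it or cite it explicitly rather than treat it as given.
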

\begin{proof}
We show first that $\psi$ equips $G_T$ with a lax symmetric monoidal structure, which is true for any commutative monad $T$ on a symmetric monoidal category $\mA$.\footnote{We expect this to be known, but we have not found a precise reference. The closest that we know of is~\cite[Proposition~28]{Guitart80}, which shows the analogous statement for the forgetful functor from the Eilenberg-Moore category $\mA^T$, but under additional assumptions on $T$ needed to make $\mA^T$ monoidal in the first place.}
We start by checking the naturality of $\psi$, i.e.~that for every arrow $\freccia{A_1}{f_1}{B_1}$ and  $\freccia{A_2}{f_2}{B_2}$ of $\mA_T$ the following square commutes
\[\begin{tikzcd}[column sep=7ex]
	{TA_1\otimes  TA_2} &&& {TB_1\otimes TB_2} \\
	\\
	{T(A_1\otimes A_2)} &&& {T(B_1\otimes B_2)}
	\arrow["{\mu_{B_1}T(f_1^{\sharp}) \otimes \,\mu_{B_2}T(f_2^ {\sharp})}", from=1-1, to=1-4]
	\arrow["{c_{A_1,A_2}}"', from=1-1, to=3-1]
	\arrow["{\mu_{B_1\otimes B_2}T(c_{B_1,B_2}(f_1^{\sharp} \otimes f_2^{\sharp}))}"', from=3-1, to=3-4]
	\arrow["{c_{B_1,B_2}}", from=1-4, to=3-4]
\end{tikzcd}\]
First notice that, by naturality of $c$, we have that
\[
	\mu_{B_1\otimes B_2}T(c_{B_1,B_2}(f_1^\sharp \otimes f_2^\sharp))c_{A_1,A_2} = \mu_{B_1\otimes B_2}T(c_{B_1,B_2})c_{TB_1,TB_2}(T(f_1^\sharp)\otimes T(f_2^\sharp)).
\]
The other composite consisting of the upper horizontal arrow followed by the right vertical arrow can also be written as
\[c_{B_1,B_2}(\mu_{B_1}\otimes \mu_{B_2})(T(f_1^\sharp)\otimes T(f_2^\sharp)).\]
Therefore, to conclude the proof of the naturality, it is enough to show that
\[c_{B_1,B_2}(\mu_{B_1}\otimes \mu_{B_2})=\mu_{B_1\otimes B_2}T(c_{B_1,B_2})c_{TB_1,TB_2}.\]
But this equation is exactly the statement that $\mu$ is a monoidal transformation, see e.g.~\cite[Definition~C.2]{FritzPR21}, which proves this statement from the standard equivalence between commutative monads and symmetric monoidal monads.
Hence $\psi$ provides a natural transformation.
Similarly, it is straightforward but tedious to check that the associativity and unitality axioms of a lax symmetric monoidal functor (see Definition~\ref{def:lax monoidal functor}) are satisfied.

Now, notice that, by definition,  $\freccia{\mA_T}{G_T}{\mA}$ is a lax gs-monoidal functor if and only if the following diagrams commute for all $A$ in $\mA_T$
\[\begin{tikzcd}[column sep=tiny]
            {T(A)} && {T(A\otimes A)}   &&& TA &&&& {T(I)} \\
            & {T(A)\otimes T(A)} &&&&&& I
            \arrow[ "{\mu_{A\otimes A}T(\eta_{A\otimes A}\nabla_A)}", from=1-1, to=1-3]
            \arrow["{\nabla_{TA}}"', from=1-1, to=2-2]
            \arrow["{c_{A,A}}"', from=2-2, to=1-3]
            \arrow[ "{\mu_IT(\eta_I!_A)}", from=1-6, to=1-10]
            \arrow["{!_{TA}}"', from=1-6, to=2-8]
            \arrow["{\eta_I}"', from=2-8, to=1-10]
\end{tikzcd}\]
 Since $\mu_{A\otimes A}T(\eta_{A\otimes A})=\id_{T(A \otimes A)}$ and $\mu_IT(\eta_I) = \id_{T(I)}$, these two diagrams commute exactly when the monad $T$ is a gs-monoidal monad.
\end{proof}

\begin{myremark}\label{rem: gs monads on cartesian categories}
	If $T$ is a gs-monoidal monad on a cartesian category $\mA$, then $\mA_T$ is cartesian monoidal by Lemma~\ref{lem: Kleisli are cartesian}, but $G_T:\mA_T\to\mA$ is generally just a lax gs-monoidal functor that is not strong gs-monoidal (see Example~\ref{ex:abelian_group}).
\end{myremark}
\subsection{The oplax cartesian structure of Kleisli categories}
Now we present an ``oplax cartesian version''  of Proposition~\ref{prop: Kleisli su cartesiane sono gs}.
To achieve this goal, we introduce the following generalization of Definition~\ref{def:gs-monoidal monad}.
\begin{mydefinition}\label{def:olax cartesian monad}
Let $\mA$ be an oplax cartesian category. A preorder-enriched commutative monad $\freccia{\mA}{T}{\mA}$ is called \textbf{colax cartesian monad} if $T(\nabla_A)\leq c_{A,A}\nabla_{TA}$ and $T(!_A)\leq \eta_I!_{TA}$ for every object $A$ of $\mA$.
\end{mydefinition}
%\tob{If we say ``colax cartesian functor'', then we probably also should say ``colax cartesian monad''? It would clearly be more intuitive to say ``oplax cartesian functor'' to begin with, but that would suggest ``oplax monoidal'', which is not what we mean}
So in terms of the lax symmetric monoidal structure of $T$ of Remark~\ref{rem: gs-monoidal monad}, a preorder-enriched commutative monad is a colax cartesian monad if and only if the underlying functor is colax cartesian (see Definition~\ref{def:(op) oplax cartesian functor}).

\begin{myproposition}\label{prop: Kleisli is oplax cartesian}
Let $\mA$ be an oplax cartesian category and $\freccia{\mA}{T}{\mA}$ a colax cartesian monad. Then the Kleisli category $\mA_T$ equipped with the preorder given by
\begin{equation}
	\label{eq:preorder on Kleisli}
	f\leq_{\mA_T} g \iff f^{\sharp}\leq_{\mA} g^{\sharp}
\end{equation}
is oplax cartesian as well, and the functors
\[
	\freccia{\mA}{F_T}{\mA_T}, \qquad
	\freccia{\mA_T}{G_T}{\mA}
\]
are colax cartesian (with $F_T$ even in the strict sense).
\end{myproposition}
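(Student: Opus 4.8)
The plan is to verify the defining conditions of Definition~\ref{def oplax cartesian cat} for $\mA_T$, and then those of Definition~\ref{def:(op) oplax cartesian functor} for $F_T$ and $G_T$, working throughout with representatives in $\mA$. First I would note that $\mA_T$ is gs-monoidal by Proposition~\ref{prop: Kleisli su cartesiane sono gs}, since a colax cartesian monad is in particular a commutative monad, and then check that \eqref{eq:preorder on Kleisli} makes $\mA_T$ a preorder-enriched monoidal category: reflexivity and transitivity of $\leq_{\mA_T}$ are inherited from $\mA$; monotonicity of Kleisli composition will follow from $(g\circ f)^\sharp=\mu\, T(g^\sharp)\, f^\sharp$ together with monotonicity of $T$ and of composition in $\mA$; monotonicity of $\otimes$ will follow from $(f\otimes g)^\sharp=c_{B_1,B_2}(f^\sharp\otimes g^\sharp)$ together with monotonicity of $T$ and of $\otimes$ in $\mA$; and the coherence isomorphisms require no further checks, as observed after Definition~\ref{def preorder-enriched gs-monoidal category}.

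The core of the argument is the verification of the two oplax cartesian inequalities in $\mA_T$. For an arrow $\freccia{A}{f}{B}$ of $\mA_T$, I would first simplify the relevant representatives exactly as in the proof of Lemma~\ref{lem: Kleisli are cartesian}, obtaining $(!_B\circ f)^\sharp=T(!_B)\, f^\sharp$, $(\nabla_B\circ f)^\sharp=T(\nabla_B)\, f^\sharp$ and $((f\otimes f)\circ\nabla_A)^\sharp=c_{B,B}(f^\sharp\otimes f^\sharp)\nabla_A$, using $\mu\, T(\eta)=\id$ and naturality of $\eta$. Then I would use that $T$ is a colax cartesian monad, so $T(!_B)\leq\eta_I\,{}!_{TB}$ and $T(\nabla_B)\leq c_{B,B}\nabla_{TB}$, precompose these with $f^\sharp$, and finally apply the oplax cartesianity of $\mA$ to the representative $\freccia{A}{f^\sharp}{TB}$, which gives ${}!_{TB}\, f^\sharp\leq{}!_A$ and $\nabla_{TB}\, f^\sharp\leq(f^\sharp\otimes f^\sharp)\nabla_A$. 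Chaining these yields $(!_B\circ f)^\sharp\leq({}!_A)^\sharp$ and $(\nabla_B\circ f)^\sharp\leq((f\otimes f)\circ\nabla_A)^\sharp$, which by \eqref{eq:preorder on Kleisli} are the required inequalities $!_B\circ f\leq{}!_A$ and $\nabla_B\circ f\leq(f\otimes f)\circ\nabla_A$.

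For the functors, $F_T$ is strict gs-monoidal by the definition of the gs-monoidal structure on $\mA_T$ and preorder-enriched because $F_T(f)^\sharp=\eta_B f$, so it is colax cartesian with identity structure morphisms, the inequalities of Definition~\ref{def:(op) oplax cartesian functor} holding as equalities. For $G_T$, I would take the lax symmetric monoidal structure $\psi_{X,Y}:=c_{X,Y}$, $\psi_0:=\eta_I$ (which exists for any commutative monad, as in the proof of the preceding proposition), note that $G_T$ is preorder-enriched since $G_T(f)=\mu_B T(f^\sharp)$ and $T$ is monotone, and observe that $\mu\, T(\eta)=\id$ gives $G_T(\nabla_A)=T(\nabla_A)$ and $G_T(!_A)=T(!_A)$; hence the colax cartesian inequalities for $G_T$ are literally $T(\nabla_A)\leq c_{A,A}\nabla_{TA}$ and $T(!_A)\leq\eta_I\,{}!_{TA}$, the defining conditions of a colax cartesian monad (Definition~\ref{def:olax cartesian monad}). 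I expect no genuine difficulty here: the only work is the bookkeeping with representatives, and conceptually each oplax cartesian inequality in $\mA_T$ just factors as a colax cartesian monad inequality followed by the oplax cartesianity of $\mA$ applied to $f^\sharp\colon A\to TB$, while the conditions on $G_T$ reduce verbatim to the monad conditions.
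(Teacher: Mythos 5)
Your proposal is correct and follows essentially the same route as the paper's proof: reduce each oplax cartesian inequality in $\mA_T$ to representatives via $(\nabla_B\circ f)^\sharp=T(\nabla_B)f^\sharp$ and $((f\otimes f)\circ\nabla_A)^\sharp=c_{B,B}(f^\sharp\otimes f^\sharp)\nabla_A$, then chain the colax cartesian monad inequality with the oplax cartesianity of $\mA$ applied to $f^\sharp$, and observe that $F_T$ is strict while the conditions on $G_T$ reduce to the monad conditions via $G_T(\nabla_A)=T(\nabla_A)$. The only difference is that you spell out the discharger case and the preorder-enrichment bookkeeping in slightly more detail than the paper, which dispatches these with "similarly" and "it is direct to check".
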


%\tob{In the proof, it should first of all be shown/explained that $\mA_T$ is preorder-enriched monoidal}

\begin{proof}
\begin{enumerate}
	\item It is direct to check that $\mA_T$ is preorder-enriched with the preorder given by~\eqref{eq:preorder on Kleisli}, based on the fact that $\mA$ is preorder-enriched and that $T$ is a preorder-enriched functor.
		Similarly for the monotonicity of $\otimes$.
	
	Checking the oplax cartesianity of $\mA_T$ is analogous to the proof of Lemma~\ref{lem: Kleisli are cartesian}.
	So consider an arrow $\freccia{A}{f}{B}$ of $\mA_T$. We first show $\nabla_B\circ f\leq_{\mA_T} (f\otimes f) \circ\nabla_A$. By definition of $\leq_{\mA_T}$, this holds if and only if $(\nabla_B\circ f)^{\sharp}\leq_{\mA} ((f\otimes f) \circ\nabla_A)^{\sharp}$. Note that 
\[(\nabla_B\circ f)^{\sharp}=\mu_{B\otimes B}T(\eta_{B\otimes B}\nabla_B)f^{\sharp}=T(\nabla_B)f^{\sharp}.\]
By definition of the monoidal structure on $\mA_T$, it is direct to check that
\[((f\otimes f) \circ\nabla_A)^{\sharp}=c_{B,B}( f^{\sharp}\otimes f^{\sharp})\nabla_A.\]
Employing first the assumption that $T$ is a colax cartesian functor and then that $\mA$ is oplax cartesian, we have that
\[ T(\nabla_B)f^{\sharp}\leq c_{B,B}\nabla_{T(B)}f^{\sharp}\leq c_{B,B}( f^{\sharp}\otimes f^{\sharp})\nabla_A.\]
Therefore we can conclude that $\nabla_B\circ f\leq_{\mA_T} (f\otimes f) \circ\nabla_A$.
Similarly, one can check that $!_Bf\leq {}!_A$ in $\mA_T$.
	
	\item $F_T$ is trivially colax cartesian since $F_T(\nabla_A) = \nabla_{F_T(A)}$ and $F_T(!_A) = {}!_{F(A)}$ by definition of the gs-monoidal structure on $\mA_T$.
	\item $G_T$ is colax cartesian by $G_T(\nabla_A) = \mu_{A \otimes A} \circ T(\nabla_A^\sharp) = T(\nabla_A)$ and the assumption that $T$ is a colax cartesian functor.
		\qedhere
\end{enumerate}
\end{proof}

\section{Span categories are oplax cartesian}
\label{sec:span}

Proposition~\ref{prop: Kleisli su cartesiane sono gs} shows that the Kleisli category $\mA_T$  of a commutative monad  $T$ on a 
gs-monoidal category $\mA$ is gs-monoidal.
Generalising previous results developed for specific categories (see e.g.~\cite{Bruni2003}),
we show now that for any category $\mA$ with finite limits, the category $\mathbf{PSpan}(\mA)$, 
obtained by taking the preorder reflection of the 2-cells of the bicategory of spans
and identifying two arrows when they are isomorphic spans, is oplax cartesian.
%	\item when $\mA$ has finite products and pushout, the category $\mathbf{CoSpan}(\mA)$ of \emph{cospans} is a  gs-monoidal category with the monoidal operator induced by the cartesian product of $\mA$.
%\end{itemize}

Furthermore, we have the following comparison with Kleisli categories.
By composing with the functor $\freccia{\mA_T}{G_T}{\mA}$, we obtain 
a lax gs-monoidal functor $\mA_T \to \mathbf{PSpan}(\mA)$,
%thus establishing a precise ``bridge'' between two natural mechanisms 
%for obtaining gs-monoidal categories.
which moreover in many examples is actually faithful.
%this functor 
%is faithful, which seems of interest insofar as it shows that the single category $\mathbf{PSpan}(\mA)$ 
%in some sense contains the Kleisli categories for potentially many different commutative monads $T$.

Let us start with the following standard definition.

\begin{mydefinition}
Let $\mA$ be a category with pullbacks. Then the \emph{bicategory of spans} $\mathbf{Span}(\mA)$ has the same objects 
as $\mA$, and its arrows are defined as
\begin{itemize}
	\item an arrow from $X$ to $Y$ is a \emph{span} $(X  \leftarrow A \to Y)$ of $\mA$;
	\item the identity of $X$ is the span $X \xleftarrow{\id_X}X  \xrightarrow{\id_X} X$;
	\item the composition of spans $X\leftarrow  A \xrightarrow{f} Y$ and $Y\xleftarrow{g}  B \rightarrow Z$ is given by the span $X\leftarrow A\times_{f,g}B \to Z$ obtained by taking the pullback of $f$ and $g$;
	\item a 2-cell $\alpha: (X \leftarrow A \to Y) \Rightarrow (X \leftarrow  B \to Y)$ is an arrow $\alpha: A \to B$ in ${\mA}$  such that the following diagram commutes
		\[
			\begin{tikzcd}[row sep=2pt, column sep=8pt]
				& A \ar[dl, bend right] \ar[rd, bend left]\ar[dd, "\alpha"] & \\
				X  & & Y  \\
				& B \ar[ul, bend left] \ar[ur,  bend right]
			\end{tikzcd}
		\]
	\item vertical composition of 2-cells is given by composition in $\mA$;
	\item horizontal composition of 2-cells as well as associators and unitors are induced by the universal property of pullbacks.
\end{itemize}
\end{mydefinition}

\begin{mydefinition}
Let $\mA$ be a category with pullbacks and $\mathbf{Span}(\mA)$ its bicategory of spans. 
Then the preorder-enriched category $\mathbf{PSpan}(\mA)$ has
\begin{itemize}
\item the same objects as $\mA$;
\item isomorphism classes of arrows of $\mathbf{Span}(\mA)$ as arrows: spans $(X \xleftarrow{f} A  \xrightarrow{g} Y)$ and  $(X \xleftarrow{f'} A'  \xrightarrow{g'} Y)$ are isomorphic if there is an iso $\freccia A i {A'}$ such that $f'\circ i = f$ and $g'\circ i = g$; 
\item a preorder enrichment defined as  $[(X \xleftarrow{f} A  \xrightarrow{g} Y)] \leq [(X \xleftarrow{f'} A'  \xrightarrow{g'} Y)]$ if there is a 2-cell
	\[
		\alpha:  (X \xleftarrow{f} A  \xrightarrow{g} Y) \Rightarrow (X \xleftarrow{f'} A'  \xrightarrow{g'} Y)
	\]
	in $\mathbf{Span}(\mA)$.
\end{itemize}
\end{mydefinition}

%$\mathbf{CoSpan}(\mA)$: its objects are the same of ${\mA}$, while an arrow from $X$ to $Y$ is a pair of arrows $(X \to A \leftarrow Y)$ of ${\mA}$.
%A 2-cell $\alpha: (X \to A \leftarrow Y) \Rightarrow (X \to B \leftarrow Y)$ is an arrow $\alpha: A \to B$ in ${\mA}$  such that the diagram below commutes
%%
%\[\begin{tikzcd}[row sep=2pt, column sep=8pt]
%	& A \ar[dd, "\alpha"] & \\
%	X \ar[dr,  bend right] \ar[ur, bend left] & & Y \ar[ul, bend right] \ar[dl,  bend left] \\
%	& B 
%\end{tikzcd}
%	\]
 %%
% For every object $X$ in ${\mA}$, the \emph{identity cospan} is $X\xrightarrow{\id_X} X \xleftarrow{\id_X} X$.
% The composition of $X\to  A \xleftarrow{f} Y$
%and $Y\xrightarrow{g} B \leftarrow Z$ is $X\to A+_{f,g}B \leftarrow Z$, obtained by taking the pushout of $f$ and $g$.

%From now on we will consider the preorder-enriched category $\mathbf{PSpan}(\mA)$ 
%obtained by taking the preorder reflection of the 2-cells of $\mathbf{Span}(\mA)$ and identifying two arrows whenever they are isomorphic as spans.
It is straightforward to see that $\mathbf{PSpan}(\mA)$ is indeed a preorder-enriched category.
Note that it is locally small as soon as $\mA$ is small.

Now, when $\mA$ is also a cartesian category, it is direct to check that the categorical product $\times$ of $\mA$ induces a monoidal product $\otimes$ on $\mathbf{PSpan}(\mA)$. However, we have actually more structure, as witnessed by the following result.

\begin{myproposition}\label{prop: Pspan(A) is oplax cartesian}
Let $\mA$ be a category with finite limits. Then $\mathbf{PSpan}(\mA)$ is an oplax cartesian category
with
\[
	\nabla^s_X = (X \xleftarrow{\id} X \xrightarrow{\nabla_X} X \times X), \qquad !^s_X = (X \xleftarrow{\id} X \xrightarrow{!} 1).
\]
\end{myproposition}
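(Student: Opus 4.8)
The plan is to exhibit $\mathbf{PSpan}(\mA)$ as a preorder-enriched gs-monoidal category satisfying the two oplax inequalities of Definition~\ref{def oplax cartesian cat}, and to do so by reducing everything to the universal property of products and pullbacks in $\mA$. First I would record that $\mathbf{PSpan}(\mA)$ is symmetric monoidal under the product $\otimes$ induced by $\times$, with unit the terminal object $1$ of $\mA$ (embedded as the identity span), and that the canonical embedding $\freccia{\mA}{J}{\mathbf{PSpan}(\mA)}$ sending $\freccia{X}{f}{Y}$ to the span $(X \xleftarrow{\id} X \xrightarrow{f} Y)$ is strict symmetric monoidal. This is the analogue of the inclusion $\mA \to \mA_T$ used in Proposition~\ref{prop: Kleisli su cartesiane sono gs}: since $\nabla^s_X$ and $!^s_X$ are precisely the images $J(\nabla_X)$ and $J(!_X)$ of the comonoid structure that every object carries in the cartesian category $\mA$, and a strict symmetric monoidal functor sends commutative comonoids to commutative comonoids and preserves the monoidal-multiplicativity equations~\eqref{monoidal_mult}, the gs-monoidal axioms of Definition~\ref{def gs-monoidal cat} hold in $\mathbf{PSpan}(\mA)$ for free. (Here one uses that in a category with finite products every object has a unique commutative comonoid structure, given by the diagonal and the terminal map, which is exactly $\nabla_X,!_X$.)

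Next I would check that $\mathbf{PSpan}(\mA)$ is preorder-enriched monoidal, i.e.\ that $\otimes$ is monotone with respect to the 2-cell preorder: given 2-cells $\alpha : (X \xleftarrow{f} A \xrightarrow{g} Y) \Rightarrow (X \xleftarrow{f'} A' \xrightarrow{g'} Y)$ and $\beta$ between spans $W \to Z$, the arrow $\alpha \times \beta : A \times B \to A' \times B'$ is again a 2-cell of the required shape, by functoriality of $\times$ and the defining commutations of $\alpha,\beta$; and composition is monotone because pullback functors preserve the existence of such mediating arrows. Together with Definition~\ref{def preorder-enriched gs-monoidal category} this makes $\mathbf{PSpan}(\mA)$ a preorder-enriched gs-monoidal category, so it only remains to verify the two inequalities of Definition~\ref{def oplax cartesian cat}.

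The heart of the argument is therefore these two inequalities, and this is the step I expect to require the most care. Fix an arrow $[(X \xleftarrow{p} A \xrightarrow{q} Y)]$ of $\mathbf{PSpan}(\mA)$. For the discharger inequality $!^s_Y \circ f \leq !^s_X$ one computes $!^s_Y \circ f$ by pulling back $q : A \to Y$ along $!_Y : Y \to 1$; since $1$ is terminal this pullback is just $A$ itself with left leg $p$, so $!^s_Y \circ f = (X \xleftarrow{p} A \xrightarrow{!} 1)$, and the needed 2-cell to $!^s_X = (X \xleftarrow{\id} X \xrightarrow{!} 1)$ is simply $p : A \to X$, which commutes by construction. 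For the duplicator inequality $\nabla^s_Y \circ f \leq (f \otimes f) \circ \nabla^s_X$, I would compute both sides explicitly: the left side, after pulling back $q$ along $\nabla_Y$, is $(X \xleftarrow{p} A \xrightarrow{\langle q,q\rangle} Y \times Y)$ (again because the pullback of $q$ along the diagonal is $A$), while the right side, formed as the pullback of $\nabla_X : X \to X\times X$ against $p \times p : A \times A \to X \times X$, is $(X \xleftarrow{} A\times_X A \xrightarrow{} Y\times Y)$ where $A\times_X A$ is the pullback of $p$ with itself. The diagonal $A \to A \times_X A$ induced by $(\id_A,\id_A)$ is then a 2-cell from the left span to the right one: it commutes with the left legs because both are $p$, and with the right legs because $\langle q,q\rangle$ factors as the composite $A \to A\times_X A \to Y\times Y$. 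Chasing these pullback diagrams and confirming that all mediating arrows are genuine 2-cells (hence that the inequalities hold after passing to isomorphism classes) is the only real work; once it is done, Definitions~\ref{def preorder-enriched gs-monoidal category} and~\ref{def oplax cartesian cat} give the claim.
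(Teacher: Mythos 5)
Your proposal is correct and follows essentially the same route as the paper's proof: the gs-monoidal structure is inherited via the strict monoidal inclusion $\mA \to \mathbf{PSpan}(\mA)$ (as in the Kleisli case), and the two oplax inequalities are verified by computing the composite spans and exhibiting the mediating 2-cells, with $p$ witnessing the discharger inequality and the diagonal $A \to A \times_X A$ witnessing the duplicator one. Your version merely spells out the pullback computation that the paper leaves as "direct to check".
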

\begin{proof}
	It is well-known~\cite[Section~3.1]{Bruni2003}\footnote{Even if that reference just considers spans in $\Set$, the proofs work for any category with finite limits.} and also easy to check in the same way as in Proposition~\ref{prop: Kleisli su cartesiane sono gs} that $\mathbf{PSpan}(\mA)$ is gs-monoidal with respect to the given duplicators and dischargers.
	So we just verify the axioms for oplax cartesianity in addition.
	Let us consider an arrow from $X$ to $Y$ in $\mathbf{PSpan}(\mA)$ represented by a span $(X \xleftarrow{f} A \xrightarrow{g} Y)$. We have to show the inequality
	\begin{align*}
		\nabla_Y^s & \circ(X \xleftarrow{f} A \xrightarrow{g} Y)  \le ((X \xleftarrow{f} A \xrightarrow{g} Y)\otimes (X \xleftarrow{f} A \xrightarrow{g} Y) )\circ\nabla_X^s.
	\end{align*}
First, note that by definition of composition in $\mathbf{PSpan}(\mA)$, we have that
$$\nabla_Y^s\circ(X \xleftarrow{f} A \xrightarrow{g} Y)=(X \xleftarrow{f} A \xrightarrow{\nabla_Y g} Y\times Y),
$$
and since $\mA$ is cartesian, and hence $\nabla_Yg=(g\times g)\nabla_A$, this evaluates further to
$$\nabla_Y^s\circ(X \xleftarrow{f} A \xrightarrow{g} Y)=(X \xleftarrow{f} A \xrightarrow{(g\times g)\nabla_A} Y\times Y).
$$
Now, employing the universal property of pullbacks and the naturality of $\nabla$ in $\mA$, it is direct to check that 
\begin{align*}
	(X \xleftarrow{f} & A \xrightarrow{(g\times g)\nabla_A} Y\times Y) \leq ((X \xleftarrow{f} A \xrightarrow{g} Y)\otimes(X \xleftarrow{f} A \xrightarrow{g} Y))\circ\nabla_X^s,
\end{align*}
as was to be shown.
Similarly we have the inequality
$$!_Y^s\circ(X \xleftarrow{f} A \xrightarrow{g} Y)\leq {}!_X^s$$
via the 2-cell obtained via $f$, since the left-hand side is equal to $(X \xleftarrow{f} A \xrightarrow{!_A} 1)$.
\end{proof}

We may characterise weak functionality and weak totality in terms of
properties of the components of a span.

\begin{myproposition}\label{prop_functional_and_total_spans}
	Let $ (X \xleftarrow{f} A \xrightarrow{g}Y )$ be an arrow in $\mathbf{PSpan}(\mA)$. Then
	\begin{enumerate}
	\item it is weakly $\mathbf{PSpan}(\mA)$-functional if and only if for every $\freccia{Z}{h_1,h_2}{A}$ we have that
	$fh_1=fh_2$ implies $gh_1=gh_2$;
	\item it is weakly $\mathbf{PSpan}(\mA)$-total if and only if $f$ is a split epimorphism.
	\end{enumerate}
\end{myproposition}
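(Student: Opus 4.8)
The plan is to split each $\approx$ into the two inequalities it abbreviates, note that one of them is already supplied by Proposition~\ref{prop: Pspan(A) is oplax cartesian}, and then read the remaining inequality off an explicit description of the relevant composite spans. Write $s = (X \xleftarrow{f} A \xrightarrow{g} Y)$ for the given arrow. By Definition~\ref{def_weakly_total_and_function}, $s$ is weakly $\mathbf{PSpan}(\mA)$-functional exactly when $\nabla^s_Y \circ s \approx (s \otimes s) \circ \nabla^s_X$, and weakly $\mathbf{PSpan}(\mA)$-total exactly when $!^s_Y \circ s \approx {}!^s_X$. Since Proposition~\ref{prop: Pspan(A) is oplax cartesian} already establishes $\nabla^s_Y \circ s \le (s \otimes s) \circ \nabla^s_X$ and $!^s_Y \circ s \le {}!^s_X$, in each case it remains only to analyse the reverse inequality: $(s \otimes s) \circ \nabla^s_X \le \nabla^s_Y \circ s$ for part~(1) and ${}!^s_X \le {}!^s_Y \circ s$ for part~(2).

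For~(1), recall that a category with finite limits is cartesian monoidal for the categorical product, so (as in the proof of Proposition~\ref{prop: Pspan(A) is oplax cartesian}) $\nabla^s_Y \circ s = (X \xleftarrow{f} A \xrightarrow{\langle g,g \rangle} Y \times Y)$. The other composite is obtained by pulling $\nabla_X \colon X \to X \times X$ back along $f \times f \colon A \times A \to X \times X$; a direct check of the universal property identifies this pullback with the kernel pair $A \times_X A$ of $f$, with its two projections $p_1, p_2 \colon A \times_X A \to A$, so that $(s \otimes s) \circ \nabla^s_X = (X \xleftarrow{f p_1} A \times_X A \xrightarrow{\langle g p_1, g p_2 \rangle} Y \times Y)$. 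A $2$-cell witnessing $(s \otimes s) \circ \nabla^s_X \le \nabla^s_Y \circ s$ is then a map $\beta \colon A \times_X A \to A$ with $f \beta = f p_1$ and $\langle g,g \rangle \circ \beta = \langle g p_1, g p_2 \rangle$, i.e.\ with $g \beta = g p_1 = g p_2$; since these equations force $g p_1 = g p_2$ and, conversely, $\beta = p_1$ does the job as soon as $g p_1 = g p_2$, such a $\beta$ exists if and only if $g p_1 = g p_2$. Finally, $g p_1 = g p_2$ is equivalent to the condition in the statement: given $h_1, h_2 \colon Z \to A$ with $f h_1 = f h_2$, the universal property of the kernel pair yields $\langle h_1, h_2 \rangle \colon Z \to A \times_X A$ with $p_i \langle h_1, h_2 \rangle = h_i$, so $g h_1 = g p_1 \langle h_1, h_2 \rangle = g p_2 \langle h_1, h_2 \rangle = g h_2$; conversely, applying the condition to $h_1 := p_1$ and $h_2 := p_2$, which satisfy $f p_1 = f p_2$ by definition of the kernel pair, gives $g p_1 = g p_2$.

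For~(2), the composite $!^s_Y \circ s$ equals $(X \xleftarrow{f} A \xrightarrow{!} 1)$, as already observed in the proof of Proposition~\ref{prop: Pspan(A) is oplax cartesian}, while $!^s_X = (X \xleftarrow{\id_X} X \xrightarrow{!} 1)$. A $2$-cell witnessing ${}!^s_X \le {}!^s_Y \circ s$ is a map $\sigma \colon X \to A$ with $f \sigma = \id_X$, the compatibility on the right legs being automatic since both target the terminal object $1$; such a $\sigma$ exists precisely when $f$ is a split epimorphism. This yields the claim.

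I do not expect a genuine obstacle. The main point requiring care is the identification of the pullback of $\nabla_X$ along $f \times f$ with the kernel pair $A \times_X A$, which I would verify through the universal property rather than element-wise; everything else is a routine unwinding of the definitions of composition and of $2$-cells in $\mathbf{PSpan}(\mA)$, using that the finite-limit hypothesis on $\mA$ provides both the terminal object and the kernel pair needed above.
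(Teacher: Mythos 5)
Your proposal is correct and follows essentially the same route as the paper's proof: both reduce each $\approx$ to the single non-automatic inequality using Proposition~\ref{prop: Pspan(A) is oplax cartesian}, compute the composite spans explicitly, and read the remaining inequality off the existence of a suitable $2$-cell. You merely fill in the step the paper leaves as ``direct to check by the universal property of pullbacks'', by identifying the pullback of $\nabla_X$ along $f\times f$ with the kernel pair of $f$ and exhibiting the witnessing $2$-cell $\beta=p_1$.
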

The first item generalises the intuition behind the use of spans with mono left leg
for modelling partial functions.
\begin{proof}
	\begin{enumerate}
	\item By the first axiom of oplax cartesian categories, it is enough to show that the inequality
		\begin{align*}
			 (X\times X \xleftarrow{f\times f} A \times A\xrightarrow{g\times g}Y\times Y )\circ \nabla_X^s \leq \nabla_Y^s & \circ(X \xleftarrow{f} A \xrightarrow{g}Y )	
		\end{align*}
		holds if and only if $fh_1=fh_2$ implies $gh_1=gh_2$. As in the previous proof, we have
	$$\nabla_Y^s\circ(X \xleftarrow{f} A \xrightarrow{g}Y )=(X \xleftarrow{f} A \xrightarrow{(g \times g)\nabla_A}Y\times Y ), $$
	while $(X\times X \xleftarrow{f\times f} A \times A\xrightarrow{g\times g}Y\times Y )\circ \nabla_X^s$ is given by the composite span in

\[\begin{tikzcd}[column sep=small]
	&& \bullet \\
	& X && {A\times A} \\
	X && {X\times X} && {Y\times Y}
	\arrow["\id"', from=2-2, to=3-1]
	\arrow["{\nabla_X}"', from=2-2, to=3-3]
	\arrow[from=1-3, to=2-2]
	\arrow[from=1-3, to=2-4]
	\arrow["{f\times f}", from=2-4, to=3-3]
	\arrow["{g\times g}", from=2-4, to=3-5]
	\arrow["\lrcorner"{anchor=center, pos=0.125, rotate=-45}, draw=none, from=1-3, to=3-3]
\end{tikzcd}\]

Thus, by the universal property of pullbacks and the definition of the preorder $\leq$ in $\mathbf{PSpan}(\mA)$, it is direct to check that $fh_1=fh_2$ implies $gh_1=gh_2$ if and only
\begin{align*}
			 (X\times X \xleftarrow{f\times f} A \times A\xrightarrow{g\times g}Y\times Y )\circ \nabla_X^s\leq \nabla_Y^s & \circ(X \xleftarrow{f} A \xrightarrow{g}Y )
		\end{align*}
i.e. if and only if $(X \xleftarrow{f} Z \xrightarrow{g}Y )$ is weakly $\mathbf{PSpan}(\mA)$-functional. 

%if and only if the commutative square
%% https://q.uiver.app/?q=WzAsNCxbMCwxLCIgWCJdLFsxLDEsIlhcXHRpbWVzIFgiXSxbMCwwLCJaIl0sWzEsMCwiWlxcdGltZXMgWiJdLFswLDEsIlxcbmFibGFfWCIsMl0sWzIsMCwiZiIsMl0sWzMsMSwiZlxcdGltZXMgZiJdLFsyLDMsIlxcbmFibGFfWiJdXQ==
%\begin{equation}
%	\label{eq_pullback mono}
%	\begin{tikzcd}
%	A & {A\times A} \\
%	{ X} & {X\times X}
%	\arrow["{\nabla_X}"', from=2-1, to=2-2]
%	\arrow["f"', from=1-1, to=2-1]
%	\arrow["{f\times f}", from=1-2, to=2-2]
%	\arrow["{\nabla_A}", from=1-1, to=1-2]
%\end{tikzcd}
%\end{equation}
%is a pullback in $\mA$. It is direct to check that \eqref{eq_pullback mono} is a pullback if and only if $f$ is a monomorphism, hence we can conclude that $(X \xleftarrow{f} A \xrightarrow{g}Y )$ is weakly $\mathbf{PSpan}(\mA)$-functional if and only if $f$ is a monomorphism.
\item Notice that $!_Y^s\circ(X \xleftarrow{f} A \xrightarrow{g}Y ) = (X \xleftarrow{f} A \xrightarrow{!_A}1 )$. Hence the relevant inequality
	\[
		!_X^s \leq !_Y^s\circ(X \xleftarrow{f} A \xrightarrow{g}Y ) 
	\]
	holds if and only if there exists an arrow $\freccia{X}{h}{A}$ such that $fh=\id_X$, i.e.~if and only if $f$ is a split epimorphism. 
	\qedhere
\end{enumerate}
\end{proof}

An immediate corollary of Proposition~\ref{prop_functional_and_total_spans} follows.
\begin{mycorollary}
Let $f$ be an isomorphism in $\mA$.
Then the span $ (X \xleftarrow{f} Z \xrightarrow{g}Y )$ of $\mathbf{PSpan}(\mA)$ is weakly $\mathbf{PSpan}(\mA)$-functional and  
weakly $\mathbf{PSpan}(\mA)$-total.
\end{mycorollary}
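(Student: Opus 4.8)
The plan is to read off both claims directly from Proposition~\ref{prop_functional_and_total_spans}, applied to the span $(X \xleftarrow{f} Z \xrightarrow{g} Y)$ (whose statement uses $A$ for the apex object, which here is $Z$). For weak $\mathbf{PSpan}(\mA)$-functionality I would invoke item~(1): one must check that for every pair $\freccia{W}{h_1,h_2}{Z}$, the equality $fh_1 = fh_2$ entails $gh_1 = gh_2$. Since $f$ is an isomorphism it is in particular a monomorphism, so $fh_1 = fh_2$ already forces $h_1 = h_2$, whence $gh_1 = gh_2$ holds trivially. For weak $\mathbf{PSpan}(\mA)$-totality I would invoke item~(2): an isomorphism is a split epimorphism, with splitting given by its inverse $f^{-1}$, since $ff^{-1} = \id_X$. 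Both characterising conditions are therefore met, which gives the corollary.

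There is essentially no obstacle here: the statement is a direct specialisation of the characterisations established above, and the only points worth spelling out are the two elementary facts that isomorphisms are monic and are split epic. If one preferred an argument not routed through Proposition~\ref{prop_functional_and_total_spans}, one could instead observe that, when $f$ is invertible, the composite spans occurring in Definition~\ref{def_weakly_total_and_function} admit $2$-cells in both directions — constructed from $f^{-1}$ together with the universal property of the pullbacks involved — but this merely reproduces the proof of Proposition~\ref{prop_functional_and_total_spans} in a special case, so citing it is the cleaner route.
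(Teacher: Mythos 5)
Your proof is correct and matches the paper's intent exactly: the paper presents this as an immediate consequence of Proposition~\ref{prop_functional_and_total_spans} without spelling out the details, and your argument (isomorphisms are monic, hence condition~(1) holds vacuously; isomorphisms are split epic, hence condition~(2) holds) is precisely the implicit reasoning.
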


These morphisms are those that are in the image of the canonical inclusion functor $\mA \to \mathbf{PSpan}(\mA)$.	

Finally, if we consider a gs-monoidal monad $T$ on a category with finite limits $\mA$, we can employ Remark~\ref{rem: gs monads on cartesian categories} to show the following result.
\begin{myproposition}
Let $\freccia{\mA}{T}{\mA}$ be a gs-monoidal monad on a category with finite limits $\mA$. Then the canonical functor $\mA_T\to \mathbf{PSpan}(\mA)$, obtained by composing the right adjoint $G_T:\mA_T\to \mA$ and the canonical inclusion $\mA\to \mathbf{PSpan}(\mA)$, is a lax gs-monoidal functor.
\end{myproposition}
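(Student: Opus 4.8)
The plan is to factor the functor in question as the composite of the Kleisli right adjoint $\freccia{\mA_T}{G_T}{\mA}$ with the canonical inclusion $\freccia{\mA}{\iota}{\mathbf{PSpan}(\mA)}$, and to check that each of these two factors is lax gs-monoidal; since lax gs-monoidal functors compose, this will suffice.

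For the first factor, I would recall that a category $\mA$ with finite limits is in particular cartesian monoidal, hence gs-monoidal with $\nabla_A$ the diagonal and $!_A$ the unique arrow $A\to 1$. Since $T$ is assumed to be a gs-monoidal monad (and in particular commutative), the Proposition characterising when the Kleisli right adjoint is lax gs-monoidal applies verbatim and yields that $\freccia{\mA_T}{G_T}{\mA}$ is lax gs-monoidal, with lax symmetric monoidal structure $\psi_{X,Y}=c_{X,Y}$ and $\psi_0=\eta_I$. This is exactly the content flagged in Remark~\ref{rem: gs monads on cartesian categories}.

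For the second factor, I would observe that the monoidal product $\otimes$ on $\mathbf{PSpan}(\mA)$ is by construction induced by the categorical product $\times$ of $\mA$, so that $\iota$, which sends an arrow $\freccia{A}{f}{B}$ to the span $(A \xleftarrow{\id} A \xrightarrow{f} B)$, strictly preserves the monoidal product on objects; its lax (and oplax) coherence maps are therefore identities and $\iota$ is strong symmetric monoidal. Moreover $\iota(\nabla_A) = (A \xleftarrow{\id} A \xrightarrow{\nabla_A} A \times A) = \nabla^s_A$ and $\iota(!_A) = {}!^s_A$ by the formulas in Proposition~\ref{prop: Pspan(A) is oplax cartesian}, so the triangles of Definition~\ref{def gs monoidal functor} commute trivially and $\iota$ is a gs-monoidal functor that is strict on the gs-part, in particular lax gs-monoidal.

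Finally I would assemble the two pieces: the composite $\iota\circ G_T$ carries the lax symmetric monoidal structure $\psi^{\iota G_T}_{X,Y}=\psi^\iota_{G_TX,G_TY}\circ \iota(\psi^{G_T}_{X,Y})$ and $\psi^{\iota G_T}_0=\psi^\iota_0\circ\iota(\psi^{G_T}_0)$, and since $\psi^\iota$ is the identity and $\iota$ strictly preserves $\nabla$ and $!$, the gs-triangles for $\iota\circ G_T$ reduce, upon applying $\iota$, to the ones already established for $G_T$; similarly for the dischargers. Hence $\iota\circ G_T$ is lax gs-monoidal, and since this composite is precisely the functor $\mA_T\to\mathbf{PSpan}(\mA)$ described in the statement, the proof is complete. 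I do not expect any genuine obstacle here: the only nontrivial input is the earlier characterisation of when $G_T$ is lax gs-monoidal, and everything else amounts to the routine observation that $\iota$ respects the gs-structure strictly, together with the standard fact that lax (gs-)monoidal functors are closed under composition.
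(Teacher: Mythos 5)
Your proof is correct and follows exactly the route the paper intends: the paper itself omits the argument and merely points to Remark~\ref{rem: gs monads on cartesian categories}, i.e.\ to the fact that $G_T$ is lax gs-monoidal for a gs-monoidal monad, combined with the (strict gs-monoidal) inclusion $\mA\to\mathbf{PSpan}(\mA)$ and closure of lax gs-monoidal functors under composition. The only cosmetic slip is the order of the two factors in your formula for $\psi^{\iota G_T}_{X,Y}$ (it should be $\iota(\psi^{G_T}_{X,Y})\circ\psi^{\iota}_{G_TX,G_TY}$), which is immaterial here since $\psi^{\iota}$ is an identity.
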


\section{On functorial completeness}
\label{sec:completeness}
In this section, we first present a gs-monoidal Yoneda embedding, and then 
%a key outcome of our work, namely 
a functorial completeness result for oplax cartesian categories.
%
%The latter theorem offers a general tool for using oplax cartesian categories in the setting 
%of functorial semantics of relational and partial algebras, in the spirit of~\cite{CorradiniGadducci02,Bonchi2017c,bonchi_seeber_sobocinski_18}. 

%\subsection{Functor categories as co-gs-monoidal categories}
\subsection{A gs-monoidal Yoneda embedding}

For $\mC$ a symmetric monoidal category, let us consider functors $F : \mC \to \Set$.
Assuming that $\mC$ is small, these functors form a symmetric monoidal category with respect to \textbf{Day convolution} $\boxtimes$~\cite{Day70b,Day70}, where for $X \in \mC$
\[
	(F \boxtimes G)(X) \coloneqq \int^{A,B \in \mC} \mC(A \otimes B, X) \times F(A) \times G(B),
\]
and $F \boxtimes G$ is defined on arrows in terms of the universal property of the coend.
$F \boxtimes G$ enjoys a universal property, which states that the natural transformations $F \boxtimes G \to H$ for any functor $H : \mC \to \Set$ are in natural bijection with the transformations
\[
	\begin{tikzcd}[column sep=2pc]
		F(X) \times G(Y) \ar{r}		& H(X \otimes Y)
	\end{tikzcd}
\]
natural in $X, Y \in \mC$.
Defined like this, Day convolution turns the category of functors $\mC \to \Set$ into a symmetric monoidal category.
The associator is obvious, and with the monoidal unit given by the hom-functor $\mC(I,-)$, the left unitor component $\mC(I,-) \boxtimes F \to F$ at any $F$ corresponds to the transformation with components
\[
	\begin{tikzcd}[row sep=1pt,column sep=2pc]
		\mC(I,X) \times F(Y) \ar{r}	& F(X \otimes Y)	\\
		(f, \alpha) \ar[mapsto]{r}	& F(f \otimes \id_Y)(\alpha)
	\end{tikzcd}
\]
and similarly for the right unitor.
The braidings are inherited from $\mC$.
We denote by $\mathbf{LaxSymMon}(\mC,\Set)$ the category of lax symmetric monoidal functors $\mC \to \Set$ and \emph{all} natural transformations.

\begin{mylemma}
	\label{lax_is_cogs}
	Let $\mC$ be a small monoidal category. Then the category of lax symmetric monoidal functors $\mathbf{LaxSymMon}(\mC,\Set)^{\op}$ is a gs-monoidal category in a canonical way.
	Its total and functional arrows are exactly the formal opposites of the monoidal natural transformations.
\end{mylemma}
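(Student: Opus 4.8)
The plan is to reduce the statement to Examples~\ref{comon_is_gs} and~\ref{comon2} by identifying $\mathbf{LaxSymMon}(\mC,\Set)$ with a category of commutative monoids. Write $\mV := ([\mC,\Set],\boxtimes)$ for the symmetric monoidal category of functors $\mC\to\Set$ under Day convolution described above. The key observation — essentially the defining universal property of $\boxtimes$ recalled above, see also \cite{Day70b,Day70} — is that equipping a functor $F:\mC\to\Set$ with a lax symmetric monoidal structure is the same as equipping $F$ with the structure of a commutative monoid in $\mV$: the multiplication $F\boxtimes F\to F$ corresponds to the family $F(X)\times F(Y)\to F(X\otimes Y)$, the unit $\mC(I,-)\to F$ corresponds by the Yoneda lemma to $\psi_0:1\to F(I)$, and the associativity, unitality and commutativity laws of a commutative monoid translate exactly into the coherence axioms of a lax symmetric monoidal functor. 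Since moreover the morphisms of $\mathbf{LaxSymMon}(\mC,\Set)$ are by definition \emph{all} natural transformations of the underlying functors, i.e.\ all morphisms of $\mV$ between the underlying objects, this gives an isomorphism between $\mathbf{LaxSymMon}(\mC,\Set)$ and the category of commutative monoids in $\mV$ with \emph{all} morphisms of $\mV$ as arrows, under which the monoidal structure (Day convolution of monoids) is carried to $\boxtimes$.

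First I would spell out this identification carefully, being careful that one really uses all natural transformations rather than only monoidal ones, since this is exactly the ``arrows given by the arrows of $\mA$ without any further conditions'' situation of Example~\ref{comon_is_gs}. Dualising, $\mathbf{LaxSymMon}(\mC,\Set)^{\op}$ is then isomorphic, as a symmetric monoidal category, to the category of commutative comonoids in the symmetric monoidal category $\mV^{\op}$ with all arrows of $\mV^{\op}$ as morphisms. Example~\ref{comon_is_gs} applied to $\mV^{\op}$ shows that this category is gs-monoidal in a canonical way, with duplicators the comultiplications and dischargers the counits; transporting back, $\mathbf{LaxSymMon}(\mC,\Set)^{\op}$ is gs-monoidal, the duplicator of $F$ being the formal opposite of its multiplication $\psi$ and the discharger the formal opposite of $\psi_0$.

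For the assertion about total and functional arrows I would invoke Example~\ref{comon2} for $\mV^{\op}$: in the category of commutative comonoids in $\mV^{\op}$ (with all arrows) the total arrows are the counital morphisms, the functional arrows the comultiplicative ones, and the arrows that are both form the category of commutative comonoids and comonoid homomorphisms in $\mV^{\op}$, i.e.\ the formal opposites of the monoid homomorphisms in $\mV$. It then remains to note that, under the identification of the first paragraph, a morphism of $\mV$ between the monoids attached to two lax symmetric monoidal functors $F$ and $G$ is a monoid homomorphism precisely when the corresponding natural transformation $F\Rightarrow G$ respects the families $\psi$ and $\psi_0$ — that is, precisely when it is a monoidal natural transformation — so that the total-and-functional arrows of $\mathbf{LaxSymMon}(\mC,\Set)^{\op}$ are exactly the formal opposites of the monoidal natural transformations. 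I expect the only real work to be the careful setup of the first paragraph together with the attendant bookkeeping with opposite categories (checking that the gs-monoidal structure, and the identifications of total and functional arrows, transport correctly); once that is in place the rest is a direct application of Examples~\ref{comon_is_gs} and~\ref{comon2}.
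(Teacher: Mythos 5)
Your proposal is correct and follows essentially the same route as the paper: both identify lax symmetric monoidal structures on $F$ with commutative monoid structures under Day convolution (with monoidal natural transformations corresponding to monoid homomorphisms), then apply Example~\ref{comon_is_gs} in the opposite category for gs-monoidality and Example~\ref{comon2} for the characterisation of total and functional arrows. The paper's proof is just a more compressed version of the same argument.
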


\begin{proof}
	For any $F : \mC \to \Set$, there is an equivalence between lax symmetric monoidal structures on $F$ and commutative monoid structures with respect to Day convolution, in such a way that the monoidal natural transformations are in natural bijection with the monoid homomorphisms\footnote{See e.g.~\cite[Example~3.2.2]{Day70} or~\cite[Proposition~22.1]{Mandell01} for a version of the statement for presheaves with values in topological spaces.}.
	Thus it suffices to show that the category of commutative monoids in the symmetric monoidal category of functors under Day convolution is 
	co-gs-monoidal in a canonical way.
	But this latter statement is an instance of the fact that the category of commutative monoids in \emph{any} symmetric monoidal category is a co-gs-monoidal category in a canonical way (Example~\ref{comon_is_gs}).

	In any such category, the co-total and co-functional arrows are exactly the monoid homomorphisms.
	This implies the claim that the total and functional arrows in $\mathbf{LaxSymMon}(\mC,\Set)^{\op}$ are exactly the formal opposites of monoidal natural transformations.
\end{proof}

We think of the categories $\mathbf{LaxSymMon}(\mC,\Set)^{\op}$ as gs-monoidal analogues of the functor categories in the usual Yoneda lemma.
The gs-monoidal Yoneda embedding then reads as follows.

% Japanese Hiragana character as per https://tex.stackexchange.com/a/171614/43126
%\newcommand{\yo}{\text{\usefont{U}{min}{m}{n}\symbol{'110}}}
%\DeclareFontFamily{U}{min}{}
%\DeclareFontShape{U}{min}{m}{n}{<-> dmjhira}{}
\newcommand{\yo}{\mathcal{Y}}
\begin{myproposition}\label{prop:Yoneda}
	Let $\mC$ be a small gs-monoidal category. Then there is a fully faithful oplax gs-monoidal functor
	\[
		\begin{tikzcd}[column sep=1.4pc]
			\yo : \mC \ar{r}	& \mathbf{LaxSymMon}(\mC,\Set)^{\op}
		\end{tikzcd}
	\]
\end{myproposition}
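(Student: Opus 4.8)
The plan is to take $\yo$ to be the covariant Yoneda embedding $A \mapsto \mC(A,-)$, equipped with the lax symmetric monoidal structure that the gs-monoidal structure of $\mC$ induces on each representable. Concretely, on objects I set $\yo(A) := \mC(A,-) : \mC \to \Set$ with lax structure maps $\psi^A_{X,Y} : \mC(A,X) \times \mC(A,Y) \to \mC(A, X \otimes Y)$, $(f,g) \mapsto (f \otimes g)\,\nabla_A$, and $\psi^A_0 : \{*\} \to \mC(A,I)$, $* \mapsto {}\,!_A$. The commutative comonoid axioms \eqref{comm_comon} for $\nabla_A$ and $!_A$ are exactly what is needed to check that $(\mC(A,-),\psi^A,\psi^A_0)$ is a lax symmetric monoidal functor; equivalently, through the equivalence recalled in the proof of Lemma~\ref{lax_is_cogs}, this exhibits $\mC(A,-)$ as a commutative monoid with respect to Day convolution $\boxtimes$, whose multiplication $\mC(A,-) \boxtimes \mC(A,-) \to \mC(A,-)$ is classified by $(f,g) \mapsto (f \otimes g)\nabla_A$ and whose unit $\mC(I,-) \to \mC(A,-)$ is classified (via Yoneda) by $!_A \in \mC(A,I)$. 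On arrows, $\yo(h \colon A \to B)$ is the formal opposite of the natural transformation $\mC(B,-) \Rightarrow \mC(A,-)$ given by precomposition with $h$; since $\mathbf{LaxSymMon}(\mC,\Set)$ has \emph{all} natural transformations as morphisms, this is a legitimate morphism $\yo(A) \to \yo(B)$ in $\mathbf{LaxSymMon}(\mC,\Set)^{\op}$, and functoriality of $\yo$ is just the ordinary Yoneda functoriality.

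Full faithfulness is then immediate from the ordinary Yoneda lemma: by definition, $\mathbf{LaxSymMon}(\mC,\Set)^{\op}(\yo A,\yo B)$ is the set of natural transformations $\mC(B,-) \Rightarrow \mC(A,-)$, which is in bijection with $\mC(A,B)$, and $\yo$ realises precisely this bijection.

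For the oplax gs-monoidal structure, I use the explicit description of the gs-monoidal structure of $\mathbf{LaxSymMon}(\mC,\Set)^{\op}$ coming from the proof of Lemma~\ref{lax_is_cogs}: the monoidal product is Day convolution, the monoidal unit is $\mC(I,-)$, and for a commutative $\boxtimes$-monoid $F$ the duplicator $\nabla_F \colon F \to F \boxtimes F$ and discharger $!_F \colon F \to \mC(I,-)$ are the formal opposites of, respectively, the multiplication $F \boxtimes F \to F$ and the unit $\mC(I,-) \to F$ of $F$. I define the oplax structure map $\phi_{A,B} \colon \yo(A \otimes B) \to \yo(A) \boxtimes \yo(B)$ to be the formal opposite of the natural transformation $\mC(A,-) \boxtimes \mC(B,-) \Rightarrow \mC(A \otimes B,-)$ classified, via the universal property of $\boxtimes$, by the family $\mC(A,X) \times \mC(B,Y) \to \mC(A \otimes B, X \otimes Y)$, $(f,g) \mapsto f \otimes g$ (this natural transformation is in fact the co-Yoneda isomorphism, but invertibility will not be needed), and I set $\phi_0 := \id_{\mC(I,-)}$, noting that $\yo(I) = \mC(I,-)$ coincides with the monoidal unit. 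Naturality of $\phi$ is immediate from the universal property of Day convolution, and the oplax associativity, unitality and symmetry axioms follow routinely by translating everything through that universal property down to coherence in $\mC$.

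It remains to verify the two triangles of Definition~\ref{def gs monoidal functor}(2). Unfolding the first one in $\mathbf{LaxSymMon}(\mC,\Set)$: the opposite of $\yo(\nabla_A)$ is precomposition with $\nabla_A$, sending $h \colon A \otimes A \to X$ to $h\,\nabla_A$; the opposite of $\phi_{A,A}$ is classified by $(f,g) \mapsto f \otimes g$; and the opposite of $\nabla_{\yo A}$ is the multiplication of the monoid $\yo(A)$, classified by $(f,g) \mapsto (f \otimes g)\nabla_A$. Composing the first two on a generator $(f,g)$ yields $(f \otimes g)\nabla_A$, which is exactly the value of the multiplication of $\yo(A)$; since both sides are natural transformations out of a Day convolution they agree, giving the first triangle. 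For the second triangle, with $\phi_0 = \id$ one must check $\yo(!_A) = {}!_{\yo A}$: the opposite of the left-hand side is precomposition with $!_A$, i.e. $h \mapsto h\,!_A$, while the opposite of the right-hand side is the unit of the monoid $\yo(A)$, which under Yoneda is the natural transformation attached to the element $!_A \in \mC(A,I)$, i.e. again $h \mapsto h\,!_A$. This finishes the construction. The only genuinely delicate point is bookkeeping: one must work consistently in the opposite category, systematically invoke the universal property of Day convolution, and keep straight the dictionary (lax symmetric monoidal structure $\leftrightarrow$ commutative $\boxtimes$-monoid $\leftrightarrow$ duplicator/discharger in $\mathbf{LaxSymMon}(\mC,\Set)^{\op}$) supplied by Lemma~\ref{lax_is_cogs}; once that dictionary is fixed, each verification above is a one-line unwinding, with the first triangle reducing to the very definition of the multiplication on $\yo(A)$.
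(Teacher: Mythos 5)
Your construction coincides with the paper's proof: the same representables $\mC(A,-)$ with laxator $(f,g)\mapsto (f\otimes g)\nabla_A$ and $!_A$ as unit, the same oplaxator obtained from $(f,g)\mapsto f\otimes g$ via the universal property of Day convolution with strict counitality from $\yo(I)=\mC(I,-)$, full faithfulness from the ordinary Yoneda lemma, and the two gs-monoidal triangles reducing to the definition of the lax structure on $\mC(A,-)$. You are somewhat more explicit in unwinding the triangles through the monoid-in-$(\boxtimes)$ dictionary of Lemma~\ref{lax_is_cogs}, but the argument is the same.
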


\begin{proof}

	On objects, we define $\yo(A) \coloneqq \mC(A,-)$, which is a lax monoidal functor $\mC \to \Set$; for the lax monoidal structure,
	  we refer forward to the proof of Theorem~\ref{thm:simple_complete_bi_lax}.
	The action of $\yo$ on arrows $f : B \to A$ is by precomposition, and it defines a natural transformation 
	$\mC(B,-) \to \mC(A,-)$.
	Full faithfulness of $\yo$ holds by the standard Yoneda embedding.

	It remains to equip $\yo$ with a oplax gs-monoidal structure, recalling that $\mathbf{LaxSymMon}(\mC,\Set)^{\op}$ carries the gs-monoidal structure introduced in Lemma~\ref{lax_is_cogs}.
	For the oplaxator, note that we have a transformation
	\[
		\begin{tikzcd}[row sep=1pt,column sep=1.4pc]
			\mC(A,X) \otimes \mC(B,Y) \ar{r}	& \mC(A \otimes B, X \otimes Y)	\\
			(f,g)			\ar{r}		& f \otimes g
		\end{tikzcd}
	\]
	that is natural in all four arguments $A,B,X,Y \in \mC$. By the universal property of Day convolution, this can be regarded as an arrow
	\[
		\begin{tikzcd}[column sep=1.4pc]
			\mC(A,-) \boxtimes \mC(B,-) \ar{r}	& \mC(A \otimes B,-)
		\end{tikzcd}
	\]
	in $\mathbf{LaxSymMon}(\mC,\Set)$.
	Its naturality in $A$ and $B$ also follows by the universal property.
	Let us show that considering these transformations in the opposite category defines the comultiplication of the claimed oplax monoidal structure on $\yo$, while by $\yo(I) = \mC(I,-)$ we have strict counitality.
	The coassociativity equation for the comultiplication amounts exactly to the associativity of the monoidal structure of $\mC$.
	The left unitality equation holds by the commutativity of the diagram
	\[
		\begin{tikzcd}[column sep=small]
			&	\mC(I,-) \boxtimes \mC(A,-) \ar{dr}{\text{\footnotesize{laxator}}}	\\
			\mC(A,-) \ar{ur}{\parbox{2.2cm}{\centering \footnotesize{unitor of\\ Day convolution}}} \ar[swap]{rr}{\text{\footnotesize{induced by unitor in }} \mC}	&& \mC(I \otimes A,-)
		\end{tikzcd}
	\]
	and similarly for right unitality.

	The preservation of duplicators amounts to the diagram
	\[
		\begin{tikzcd}[column sep=small]
			& \mC(A,-) \boxtimes \mC(A,-) \ar{dr}{\text{\footnotesize{laxator}}} \ar[swap]{dl}{\text{\footnotesize{lax structure on }} \mC(A,-)}	\\
			\mC(A,-) && \mC(A \otimes A, -) \ar[swap]{ll}{\mC(\nabla_A,-)}	
		\end{tikzcd}
	\]
	which holds by definition of the lax structure on $\mC(A,-)$, and similarly for the dischargers.
\end{proof}

\subsection{Functorial completeness for oplax cartesian categories}\label{sec:Bilax-completeness}

In this section, we consider the category $\Preord$ of preordered sets and monotone maps as a preorder-enriched cartesian monoidal category, and therefore in particular an oplax cartesian category.

Here we start considering the problem of extablishing a functorial completeness result for oplax cartesian categories with respect to the oplax cartesian category $\Preord$. The following theorem, in addition to demonstrating completeness for colax bicartesian functors, will be crucial for our subsequent results: in particular, we will use this theorem to prove Theorem~\ref{thm:rel_complete_bi_lax} where we show a completeness result (for suitable functors) in $\Rel$.
\begin{mytheorem}[bilax completeness to $\Preord$]
	\label{thm:simple_complete_bi_lax}
	Let $\mC$ be a locally small oplax cartesian category $\mC$ and $f, g : X \to Y$ arrows in $\mC$. Then we have
	\begin{enumerate}
		\item $f\leq g$ if and only if $F(f)\leq F(g)$ for every colax bicartesian functor $F : \mC \to \Preord$;
		\item $f \approx g$ if and only if $F(f) \approx F(g)$ for every colax bicartesian functor $F : \mC \to \Preord$.
	\end{enumerate}
\end{mytheorem}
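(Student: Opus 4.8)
The reverse implications are immediate: a colax bicartesian functor is in particular a preorder-enriched functor, hence monotone on hom-sets, so $f\leq g$ gives $F(f)\leq F(g)$, and item (2) follows from item (1) by symmetry. So the content is the forward direction of (1): given $f,g\colon X\to Y$ with $f\not\leq g$, I will exhibit a single colax bicartesian functor $F\colon\mC\to\Preord$ with $F(f)\not\leq F(g)$.

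The plan is to take $F$ to be the representable functor $\mC(X,-)$, where each hom-set $\mC(X,A)$ carries the preorder inherited from $\mC$ and $F$ acts on arrows by post-composition. Local smallness of $\mC$ makes this $\Preord$-valued, and $F$ is preorder-enriched because composition in $\mC$ is monotone. This functor already separates $f$ from $g$: the monotone maps $F(f),F(g)\colon\mC(X,X)\to\mC(X,Y)$ carry $\id_X$ to $f$ and $g$ respectively, so $F(f)\leq F(g)$ would force $f=F(f)(\id_X)\leq F(g)(\id_X)=g$. It thus remains to upgrade $F$ to a colax bicartesian functor. For the lax monoidal structure I take $\psi_{A,B}(k,l):=(k\otimes l)\nabla_X$ and $\psi_0$ the arrow $1\to\mC(X,I)$ selecting $!_X$; this is the lax monoidal structure on hom-functors that is also used in the proof of Proposition~\ref{prop:Yoneda}, and its naturality, associativity and unitality follow directly from \eqref{comm_comon}--\eqref{monoidal_mult}. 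For the oplax monoidal structure I use the gs-monoidal ``projections'' $\id_A\otimes\,!_B\colon A\otimes B\to A$ and $!_A\otimes\,\id_B\colon A\otimes B\to B$, setting $\phi_{A,B}(k):=\bigl((\id_A\otimes\,!_B)k,\,(!_A\otimes\,\id_B)k\bigr)$ and $\phi_0$ the unique arrow to the terminal unit of $\Preord$; the naturality of this oplaxator (up to the preorder, via the oplax cartesian axioms of $\mC$) and its coassociativity and counitality again reduce to the comonoid and multiplicativity equations, and the compatibility of $\psi$ with $\phi$ demanded by Definition~\ref{def:bilax monoidalfunctor} is a bookkeeping exercise with duplicators, dischargers and braidings, simplified by the terminality of the monoidal unit of $\Preord$.

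The place where oplax cartesianity of $\mC$ really enters is the verification of the inequalities of Definition~\ref{def:(op) oplax cartesian functor}. Unwinding the definitions, the colax cartesian duplicator inequality at $A$ reads $\nabla_A\circ k\leq (k\otimes k)\nabla_X$ for every $k\colon X\to A$, which is exactly the first axiom of an oplax cartesian category (Definition~\ref{def oplax cartesian cat}) applied to $k$; the colax cartesian discharger inequality reads $!_A\circ k\leq\,!_X$, which is the second axiom applied to $k$; on the opcartesian side the duplicator inequality holds with equality by the counitality axiom $(\id_A\otimes\,!_A)\nabla_A=\id_A$ of \eqref{comm_comon}, and the discharger inequality is automatic because the unit of $\Preord$ is terminal. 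Hence $F$ is colax bicartesian and separates $f$ from $g$, which proves (1); and (2) follows since $f\approx g$ iff $f\leq g$ and $g\leq f$, iff by (1) $F(f)\leq F(g)$ and $F(g)\leq F(f)$ for every colax bicartesian $F$, iff $F(f)\approx F(g)$ for every such $F$.

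The main obstacle is exactly the bilax coherence: constructing $\psi$ and $\phi$ is routine and the cartesian and opcartesian inequalities fall straight out of Definition~\ref{def oplax cartesian cat}, but checking that $\psi$ and $\phi$ fit together into a genuine bilax symmetric monoidal functor — notably the braiding compatibility of Definition~\ref{def:bilax monoidalfunctor} — requires careful manipulation of the gs-monoidal axioms, even though no genuinely new idea is needed beyond those axioms and the terminality of the unit in $\Preord$.
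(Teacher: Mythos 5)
Your proposal is correct and follows essentially the same route as the paper: both reduce the statement to equipping the representable functors $\mC(X,-)$ with the lax structure $(k,l)\mapsto (k\otimes l)\nabla_X$, $\psi_0=\,!_X$ and the oplax structure $k\mapsto ((\id\otimes\,!)k,(!\otimes\id)k)$, observe that the colax (op)cartesian inequalities are exactly the oplax cartesian axioms of $\mC$ (with equality on the opcartesian side), and separate $f$ from $g$ by evaluating at $\id_X$. The only difference is that the paper carries out the braiding-compatibility check explicitly, evaluating both legs of the hexagon to the element $((\id_W\otimes\,!_X\otimes\id_Y\otimes\,!_Z)(f\otimes g)\nabla_A,\ (!_W\otimes\id_X\otimes\,!_Y\otimes\id_Z)(f\otimes g)\nabla_A)$, whereas you assert it as bookkeeping; the assertion is correct, but that computation is the one piece of content you leave unwritten.
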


Since every colax bicartesian functor is in particular colax cartesian and colax opcartesian, the same statements hold for these classes of functors as well.

\begin{proof}
	By the usual Yoneda lemma, it is enough to show that every hom-functor $\mC(A,-) : \mC \to  \Preord$ has a canonical colax bicartesian structure.

	First we show that every hom-functor $\mC(A,-) : \mC \to  \Preord$ is colax cartesian.
	The lax symmetric monoidal structure, in components,
	is given by 
	\begin{equation}
		\label{eq:laxator_homfunctor}
		\psi_{X,Y} \: :
		\begin{tikzcd}[row sep=1pt,column sep=1.4pc]
			\mC(A,X) \times \mC(A,Y) \ar{r}		& \mC(A,X \otimes Y)	\\
			(f, g)\ar[mapsto]{r}			& (f \otimes g) \circ \nabla_A,
		\end{tikzcd}
	\end{equation}
	and
	\begin{equation}
		\label{eq:laxator_homfunctor2}
		\psi_0 \: :
		\begin{tikzcd}[row sep=1pt,column sep=1.4pc]
			1  \ar{r}		& \mC(A,I )	\\
			\bullet \ar[mapsto]{r}			& !_A.
		\end{tikzcd}
	\end{equation}
	It is straightforward to verify that $\psi_{X,Y}$ is natural in $X$ and $Y$ and satisfies the relevant coherences.
	The preservation of the braiding holds by the commutativity assumption on $\nabla_A$.
	This makes $\mC(A,-)$ lax symmetric monoidal.
	The claim that it is colax cartesian then amounts to the inequalities
	\begin{equation}
		\label{eq:gs_homfunctor}
                \begin{tikzcd}[column sep=tiny]
                 {\mC(A,X)} && {\mC(A,X \otimes X)} \\
                 & { \mC(A,X) \times \mC(A,X)}
                    \arrow[""{name=0, anchor=center, inner sep=0}, "{\mC(A,\nabla_X)}", from=1-1, to=1-3]
                    \arrow["{\nabla_{\mC(A,X)}}"', from=1-1, to=2-2]
                 \arrow["{\psi_{X,Y}}"', from=2-2, to=1-3]
                           \arrow["\leq"{marking}, draw=none, from=0, to=2-2]
                \end{tikzcd}
	\end{equation}
	and
	\begin{equation}
		\label{eq:gs_homfunctor2}
               \begin{tikzcd}[column sep=tiny]
                 {\mC(A,X)} &&&& {\mC(A,I)} \\
                 && {1}
                    \arrow[""{name=0, anchor=center, inner sep=0}, "{\mC(A,!_X)}", from=1-1, to=1-5]
                    \arrow["{!_{\mC (A,X)}}"', from=1-1, to=2-3]
                 \arrow["{\psi_0}"', from=2-3, to=1-5]
                           \arrow["\leq"{marking}, draw=none, from=0, to=2-3]
                \end{tikzcd}
	\end{equation}
	which hold since they are the defining inequalities of the colax cartesianity of $\mC$.

	Now we show that every hom-functor $\mC(A,-) : \mC \to  \Preord$ is colax opcartesian.
	The oplax symmetric monoidal structure, in components,
	is given by 
	\[
		\phi_{X,Y} \: :
		\begin{tikzcd}[row sep=1pt,column sep=1.4pc]
			\mC(A,X\otimes Y) \ar{r}		& \mC(A,X)\times \mC(A,Y)	\\
			f \ar[mapsto]{r}			& ((\id_X\otimes \,!_Y)\circ f, (!_X\otimes \id_Y) \circ f),
		\end{tikzcd}
	\]
	and
	\[
		\phi_0 \: :
		\begin{tikzcd}[row sep=1pt,column sep=1.4pc]
			\mC(A,I )  \ar{r}		& 1	\\
			f \ar[mapsto]{r}			& \bullet.
		\end{tikzcd}
	\]
	It is straightforward to check that $\phi_{X,Y}$ is natural in $X$ and $Y$ and that the coherence axioms for colax monoidal functors are satisfied.
	This makes $\mC(A,-)$ colax opcartesian since the inequalities
	\[
               \begin{tikzcd}[column sep=tiny]
                 {\mC(A,X)} && {\mC(A,X \otimes X)} \\
                 & { \mC(A,X) \times \mC(A,X)}
                    \arrow[""{name=0, anchor=center, inner sep=0}, "{\mC(A,\nabla_X)}", from=1-1, to=1-3]
                    \arrow["{\nabla_{\mC(A,X)}}"', from=1-1, to=2-2]
                    \arrow["{\phi_{X,Y}}", from=1-3, to=2-2]
                           \arrow["\leq"{marking}, draw=none, from=0, to=2-2]
                \end{tikzcd}
	\]
and
	\[
               \begin{tikzcd}[column sep=tiny]
                 {\mC(A,X)} &&&& {\mC(A,I)} \\
                 && {1}
                    \arrow[""{name=0, anchor=center, inner sep=0}, "{\mC(A,!_X)}", from=1-1, to=1-5]
                    \arrow["{!_{\mC(A,X)}}"', from=1-1, to=2-3]
                    \arrow["{\phi_0}", from=1-5, to=2-3]
                           \arrow["\leq"{marking}, draw=none, from=0, to=2-3]
                \end{tikzcd}
	\]
	hold even with equality.

	Finally, we show that $\mC(A,-)$ is a bilax monoidal functor in the sense of Definition \ref{def:bilax monoidalfunctor}.
	We first check the braiding axiom. Let us consider two arrows $\freccia{A}{f}{W\otimes X}$ and $\freccia{A}{g}{Y\otimes Z}$ of $\mC$. We have to show that 
	\begin{equation}
		\label{eq_left_leg_hex}
		\phi_{W\otimes Y,X\otimes Z}\circ \mC(A,\id_W\otimes \gamma \otimes \id_Z)\circ \psi_{W\otimes X,Y\otimes Z}(f,g)
	\end{equation}
	is equal to 
	\begin{equation}
		\label{eq_right_leg_hex}
		(\psi_{W,Y}\times \psi_{X,Z})\circ (\id_W\times \gamma \times \id_Z)\circ (\phi_{W,X}\times \phi_{Y,Z})(f,g).
	\end{equation}
	Straightforward evaluation of both sides shows that indeed both \eqref{eq_left_leg_hex} and \eqref{eq_right_leg_hex} are equal to the element 
	\begin{align*}
		( (\id_{W}\otimes \,!_X\otimes & {}\id_{Y}\otimes \,!_{Z})(f\otimes g)\nabla_A, \\
		& ({}!_{W} \otimes {}\id_{X}\otimes \,!_{Y} \otimes {}\id_Z)(f\otimes g)\nabla_A)
	\end{align*}
	of $\mC(A,W\otimes Y)\times \mC(A,X\otimes Z)$. Finally, we check the unitality axioms. The first axiom follows from the fact that both the arrows
	\[
		\freccia{1}{\phi_{I,I}\circ \mC(A,\lambda_I)\circ \psi_0}{\mC(A,I)\times \mC(A,I)}
	\]
	and 
	\[
		\freccia{1}{(\psi_0\otimes \psi_0)\circ \lambda_I}{\mC(A,I)\times \mC(A,I)}
	\]
	act as $\bullet\mapsto (!_A,!_A)$. Moreover, the second and third axiom hold because $1$ is terminal in $\Preord$.
\end{proof}

\begin{myremark}
	As we wrote in the proof above, by the usual Yoneda lemma, Theorem~\ref{thm:simple_complete_bi_lax} boils down to showing that every hom-functor $\mC(A,-) : \mC \to  \Preord$ has a canonical colax bicartesian structure. However, the same proof \emph{cannot} be adapted to a functorial completeness result for gs-monoidal categories and bilax gs-monoidal functors $\mC \to \Set$. Indeed, for a gs-monoidal category $\mC$, the hom-functor $\mC(A,-) : \mC \to  \mathbf{Set}$ is a lax gs-monoidal functor with respect to the laxator of~\eqref{eq:laxator_homfunctor} and~\eqref{eq:laxator_homfunctor2} if and only if $\mC$ is cartesian, since for gs-monoidality the diagrams~\eqref{eq:gs_homfunctor} and~\eqref{eq:gs_homfunctor2} would have to commute on the nose, implying that every arrow is total and functional.
\end{myremark}

We conclude by showing how the completeness result of Theorem~\ref{thm:simple_complete_bi_lax} for colax bicartesian functors into $\Preord$ can be transferred to a completeness result with respect to a suitable class of mappings into $\Rel$. 

To achieve this goal, we first define the new class of functors we want to consider: let us say that a mapping $\freccia{\mC}{F}{\mD}$ between preorder-enriched categories is a \textbf{lax-on-identities functor} if it is like a preorder-enriched functor with strict preservation of binary composition, but where identities are preserved \emph{only laxly} in the sense that
\[
	\id_{F(A)} \le F(\id_A)
\]
for all objects $A$ in $\mC$.
If $\freccia{\mC}{F}{\mD}$ is a mapping between preorder-enriched monoidal categories, then we say that it is a \textbf{lax monoidal lax-on-identities functor} if it is a lax-on-identities functor together with transformations $\psi_{X,Y}$ and $\psi_0$, as for a lax monoidal functor, such that the associativity and unitality diagrams in Definition~\ref{def:lax monoidal functor} commute upon replacing every occurrence of an identity $\id_{F(A)}$ by $F(\id_A)$. Note that this implies, for example, that the original unitality diagrams~\eqref{eq:lax_monoidal_unitality} commute only laxly, meaning with $\le$ from top to bottom.
%\tob{I've changed this to strict associativity, since lax associativity arguably doesn't make much sense given that we can use the braiding $\gamma$ to just swap the order of multiplication}
\begin{mylemma}\label{lem:compositin_lax_mon_and_lax_on_ident_mon}
	For preorder-enriched monoidal categories $\mA$, $\mC$ and $\mD$, let $\freccia{\mA}{G}{\mC}$ be a lax monoidal functor and $\freccia{\mC}{F}{\mD}$ a lax monoidal lax-on-identities functor.
	Then the composition $\freccia{\mA}{F\circ G}{\mD}$ is a lax monoidal lax-on-identities functor.
\end{mylemma}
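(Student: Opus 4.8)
The plan is to equip $F\circ G$ with the usual composite lax monoidal structure,
\[
	\psi^{F\circ G}_{X,Y} := F(\psi^G_{X,Y})\circ \psi^F_{G(X),G(Y)}, \qquad \psi^{F\circ G}_0 := F(\psi^G_0)\circ \psi^F_0,
\]
where $\psi^G,\psi^G_0$ and $\psi^F,\psi^F_0$ are the structure maps of $G$ and $F$. First I would observe that $F\circ G$ is a lax-on-identities functor: it preserves binary composition since $F$ and $G$ both do, it is monotone on hom-sets as a composite of monotone maps, and identities are preserved laxly because $\id_{(F\circ G)(A)}\le F(\id_{G(A)}) = F(G(\id_A)) = (F\circ G)(\id_A)$, the inequality being the lax identity preservation of $F$ and the first equality using that $G$ preserves identities strictly. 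Next I would check naturality of $\psi^{F\circ G}$: for $h_1\colon X\to X'$ and $h_2\colon Y\to Y'$ in $\mA$, one slides $F(G(h_1))\otimes F(G(h_2))$ past $\psi^F$ using naturality of $\psi^F$, collapses the two consecutive $F$-images into one via $F$ preserving binary composition, applies naturality of $\psi^G$ inside that single $F$, and re-expands.

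The core of the proof is the verification of the associativity and unitality diagrams for $F\circ G$ after replacing every identity $\id_{(F\circ G)(A)}$ by $(F\circ G)(\id_A) = F(\id_{G(A)})$, as required by the definition of a lax monoidal lax-on-identities functor. For each such diagram I would start from one of its two composites, expand the definition of $\psi^{F\circ G}$, use naturality of $\psi^F$ to move the $F$-images of $G$'s structure maps to the left, collapse consecutive $F$-images via functoriality of $F$, then invoke the \emph{genuine} associativity/unitality axiom of $G$ (which only involves ordinary identities, since $G$ is a genuine lax monoidal functor) inside a single $F$-image, and finally invoke the lax-on-identities associativity/unitality axiom of $F$ applied to the objects $G(X),G(Y),G(Z)$.

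The step I expect to be the main obstacle is the bookkeeping of the replaced identities. A naive distribution of $(F\circ G)(\id_A) = F(\id_{G(A)})$ across the two layers $F(\psi^G_{\bullet})$ and $\psi^F_{\bullet}$ of a composite structure map leaves a bare $\id_{F(G(A))}$ on one layer, which is \emph{not} the $F(\id_{G(A)})$ that the weakened axioms of $F$ and the naturality squares of $\psi^F$ require. The fix is to note that $F(\id_{G(A)})$ is idempotent, $F(\id_{G(A)})\circ F(\id_{G(A)}) = F(\id_{G(A)}\circ \id_{G(A)}) = F(\id_{G(A)})$, so by bifunctoriality of $\otimes$ one may place the replaced identity on \emph{both} layers simultaneously:
\[
	\bigl(F(a)\otimes F(\id_{G(A)})\bigr)\circ\bigl(\psi^F_{\bullet}\otimes F(\id_{G(A)})\bigr) = \bigl(F(a)\circ \psi^F_{\bullet}\bigr)\otimes F(\id_{G(A)}).
\]
This is precisely what makes the two families of axioms mesh, and it is the point at which the hypothesis that it is $G$ — and not $F$ — that is genuinely lax monoidal (so that $(F\circ G)(\id_A)$ is an idempotent $F$-image) enters in an essential way.
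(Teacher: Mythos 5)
Your proposal is correct and follows essentially the same route as the paper: the composite structure maps $\psi^{FG}_{X,Y} = F(\psi^G_{X,Y})\circ\psi^F_{GX,GY}$, lax identity preservation via $\id_{FG(A)}\le F(\id_{GA})=FG(\id_A)$, and the verification of the weakened associativity/unitality axioms by sliding $F$-images past $\psi^F$ via naturality, collapsing them by strict preservation of binary composition, and then invoking the genuine axioms of $G$ inside a single $F$-image together with the lax-on-identities axioms of $F$. The idempotency observation $F(\id_{GA})\circ F(\id_{GA})=F(\id_{GA})$, which you correctly flag as the point where strictness of $G$ on identities enters, is exactly the bookkeeping step the paper's argument relies on when distributing $FG(\id_A)$ across the two layers of the composite laxator.
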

\begin{proof}
Let us denote by $\psi^F$ and $\psi_0^F$ the lax monoidal structure of $F$ and by $\psi^G$ and $\psi_0^G$ the one of $G$. It is straightforward to check that $F\circ G$ is a lax monoidal lax-on identities functor with $\psi_{X,Y}^{FG}$ given by the composite
	% https://q.uiver.app/?q=WzAsMyxbMCwwLCJGRyhYKVxcb3RpbWVzIEZHKFkpIl0sWzIsMCwiRihHWFxcb3RpbWVzIEdZKSJdLFs0LDAsIkZHKFhcXG90aW1lcyBZKSJdLFswLDEsIlxccHNpX3tHWCxHWX1eRiJdLFsxLDIsIkYoXFxwc2leR197WCxZfSkiXV0=
\[\begin{tikzcd}[column sep=small]
	{FG(X)\otimes FG(Y)} && {F(G(X)\otimes G(Y))} && {FG(X\otimes Y)}
	\arrow["{\psi_{GX,GY}^F}", from=1-1, to=1-3]
	\arrow["{F(\psi^G_{X,Y})}", from=1-3, to=1-5]
\end{tikzcd}\]
and $\psi_0^{FG}:=F(\psi_0^G) \psi_0^F$.
In particular, it is easy to check that $F\circ G$ is a lax-on-identities functor since
\[
	\id_{FG(A)}\leq F(\id_{G(A)})=FG(\id_A),
\]
and $\psi^{FG}$ a natural transformation. Finally, it is straightforward to show that $\psi^{FG}$ satisfies the associativity axiom of lax monoidal lax-on-identity functors, i.e. for all objects $A,B,C$ in $\mC$
\begin{align*}
	\label{eq:FG_assoc}
	\psi_{A,B\otimes C}^{FG} (FG(\id_A)\otimes \psi_{B,C}^{FG})= 
	\psi_{A\otimes B,C}^{FG}(\psi_{A,B}^{FG}\otimes FG(\id_C)).
\end{align*}
%
%
\iffalse
\begin{equation}
	\label{eq:FG_assoc}
	\psi_{A,B\otimes C}^{FG} (FG(\id_A)\otimes \psi_{B,C}^{FG})= 
	\psi_{A\otimes B,C}^{FG}(\psi_{A,B}^{FG}\otimes FG(\id_C)).
\end{equation}

Now, by definition of $\psi^{FG}$, and since $G$ is strict on identities by hypothesis, we have
\begin{align*}
	\psi_{A,B\otimes C}^{FG} & (FG(\id_A)\otimes \psi_{B,C}^{FG})  = F(\psi_{A,B\otimes C}^G)\psi_{G(A),G(B\otimes C)}^F 
		 (F(\id_{G(A)})\otimes F(\psi^G_{B,C})) \\
		& \quad\,\quad\, (F(\id_{G(A)})\otimes \psi_{G(B),G(C)}^F)
\end{align*}
By naturality of $\psi^F$, and since $F$ preserves the composition strictly, we can further evaluate this to
\begin{align*}\label{eq:second_associativity_ax}
	F(\psi_{A,B\otimes C}^G & (\id_{G(A)}\otimes \psi^G_{B,C}))  \psi_{G(A),G(B)\otimes G(C)}^F(F(\id_{G(A)})\otimes \psi_{G(B),G(C)}^F)
\end{align*}
If we apply analogous steps to the right-hand side of~\eqref{eq:FG_assoc}, then we obtain 
\begin{align*}
	F(\psi_{A\otimes B, C}^G & (\psi^G_{A,B}\otimes \id_{G(C)})) \psi_{G(A)\otimes G(B), G(C)}^F(\psi_{G(A),G(B)}^F\otimes F(\id_{G(C)})).
\end{align*}
This differs from the previous expression by an application of the associativity axioms of $\psi^G$ and $\psi^F$ (with $F(\id)$ instead of $\id$ in the second case).
Therefore $\psi^{FG}$ satisfies the associativity as well.

Similarly, one can check that also the unitality axioms are satisfied.
\fi
and that the unitality axioms are satisfied.
\end{proof}

\iffalse
\begin{myremark}
It seems natural to consider laxity on the unitality diagrams, since this choice guarantees that the composition of lax monoidal lax-on-identities functors is still a lax monoidal lax-on-identities functor,
\tob{Did you check this? It seems tedious, so I haven't checked that this still holds if associativity is required strictly, but I'm hoping that it will}

while this is not the case if we require that both the associativity and unitality axioms commute strictly.
\tob{That's not obvious, so what would be an example? Or is this conjectural?}
\end{myremark}
\fi
As a further specialization, we also obtain a notion of \emph{colax cartesian lax-on-identities functor} as analogous to a colax cartesian functor (see Definition~\ref{def:(op) oplax cartesian functor}), but where the underling functor is lax monoidal lax-on-identities.

Now that we fixed the class of functors we deal with, we transfer the completeness result of Theorem~\ref{thm:simple_complete_bi_lax} to $\Rel$.
%However, in this case one has to consider completeness with respect to colax bicartesian functors $F : \mC \to \Rel$.
%
%We thus obtain the following.

\begin{mytheorem}
	\label{thm:rel_complete_bi_lax}
	Let $\mC$ be a locally small oplax cartesian category $\mC$ and $f, g : X \to Y$ arrows in $\mC$. Then we have
	\begin{enumerate}
		\item $f\leq g$ if and only if $F(f)\leq F(g)$ for every colax cartesian lax-on-identities functor $F : \mC \to \Rel$;
		\item $f \approx g$ if and only if $F(f) \approx F(g)$ for every colax cartesian lax-on-identities functor $F : \mC \to \Rel$.
	\end{enumerate}
\end{mytheorem}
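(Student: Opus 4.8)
The plan is to reduce to Theorem~\ref{thm:simple_complete_bi_lax} by postcomposing the colax bicartesian hom-functor $\mC(X,-)\colon\mC\to\Preord$ with a fixed auxiliary functor $\Preord\to\Rel$ that \emph{reflects} the preorder on hom-sets. The forward implications in both items are immediate: a colax cartesian lax-on-identities functor $F$ is in particular monotone on hom-sets, so $f\le g$ implies $F(f)\le F(g)$, and hence $f\approx g$ implies $F(f)\approx F(g)$. For the converse of item~(1) I would argue by contraposition: assuming $f\not\le g$, the proof of Theorem~\ref{thm:simple_complete_bi_lax} shows that $\mC(X,-)\colon\mC\to\Preord$ carries a canonical colax bicartesian structure, in particular it is colax cartesian (hence lax monoidal); and since $\mC(X,f)$ and $\mC(X,g)$ send $\id_X$ to $f$ and $g$, the pointwise order on $\Preord(\mC(X,X),\mC(X,Y))$ yields $\mC(X,f)\not\le\mC(X,g)$. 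Item~(2) then follows from~(1) together with the forward implications, applied to $f\not\le g$ or to $g\not\le f$.

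The next step is to build the functor $F_0\colon\Preord\to\Rel$ sending a preorder $(P,\le_P)$ to its underlying set, and a monotone map $h\colon(P,\le_P)\to(Q,\le_Q)$ to the relation $F_0(h)\coloneqq\{(p,q)\in P\times Q : q\le_Q h(p)\}$, that is, the relational composite ${\ge_Q}\circ h$ of (the graph of) $h$ with the reversed-order relation on $Q$. Monotonicity and transitivity give that $F_0$ preserves binary composition strictly; it is lax-on-identities because $F_0(\id_P)={\ge_P}$ contains the diagonal by reflexivity; it is monotone on hom-sets because $h\le h'$ and $q\le_Q h(p)$ force $q\le_Q h'(p)$; and it reflects the order, since evaluating $F_0(h)\subseteq F_0(h')$ at $q=h(p)$ yields $h(p)\le_Q h'(p)$. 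Moreover $F_0$ is colax cartesian: with lax monoidal structure $\psi^{F_0}_0\coloneqq\id_1$ and $\psi^{F_0}_{P,Q}\coloneqq\{((a,b),(a',b')) : a'\le_P a,\ b'\le_Q b\}$, which is natural and coherent by direct inspection, one checks using the description of duplicators and dischargers in $\Rel$ from Example~\ref{ex: rel} that $F_0(\nabla_P)=\psi^{F_0}_{P,P}\circ\nabla_{F_0 P}$ and $F_0(!_P)=\psi^{F_0}_0\circ{!_{F_0 P}}$ hold \emph{with equality}.

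Finally I would set $F\coloneqq F_0\circ\mC(X,-)\colon\mC\to\Rel$. By Lemma~\ref{lem:compositin_lax_mon_and_lax_on_ident_mon} this is a lax monoidal lax-on-identities functor; and chaining the colax cartesian inequalities of $\mC(X,-)$ (applied through $F_0$, which is monotone and strict on composition) with the colax cartesian equalities of $F_0$ gives $F(\nabla_A)\le\psi^{F}_{A,A}\circ\nabla_{FA}$ and $F(!_A)\le\psi^{F}_0\circ{!_{FA}}$, so $F$ is colax cartesian lax-on-identities in the sense of Definition~\ref{def:(op) oplax cartesian functor}. Since $F_0$ reflects the order, $\mC(X,f)\not\le\mC(X,g)$ yields $F(f)=F_0(\mC(X,f))\not\le F_0(\mC(X,g))=F(g)$, which finishes item~(1); item~(2) follows as above. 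The step needing the most care is this last one: the bookkeeping of the lax monoidal structure on the composite $F_0\circ\mC(X,-)$ and checking that the colax cartesian inequalities survive the composition, where one must combine the strict (indeed equality) behaviour of $F_0$ on the gs-monoidal generators with the genuinely lax (inequality) behaviour of $\mC(X,-)$.
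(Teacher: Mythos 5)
Your proposal is correct and follows essentially the same route as the paper: your $F_0$ is exactly the paper's hypograph functor $R\colon\Preord\to\Rel$, your structure map $\psi^{F_0}_{P,Q}$ coincides with the paper's choice $R(\id_{P\times Q})$, and the reduction to Theorem~\ref{thm:simple_complete_bi_lax} via composition with the hom-functors and Lemma~\ref{lem:compositin_lax_mon_and_lax_on_ident_mon} is identical.
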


 \begin{proof}
 
 	To achieve this goal, we define a colax cartesian lax-on-identities functor
		$R : \Preord \rightarrow \Rel$
	that preserves and reflects the preorder on hom-sets, in the sense that for $f, g : X \to Y$ in $\Preord$, we have
	\begin{equation}
		\label{order_pres_refl}
		f \le g \quad \Longleftrightarrow \quad R(f) \subseteq R(g).
	\end{equation}
	Then post-composing any $F : \mC \to \Preord$ as in Theorem~\ref{thm:simple_complete_bi_lax} with $R$ results in a 
	colax cartesian lax-on-identities functor $R \circ F : \mC \to \Rel$ by Lemma~\ref{lem:compositin_lax_mon_and_lax_on_ident_mon}, and the completeness follows.
	
 	One way to construct such an $R : \Preord \rightarrow \Rel$ with the properties given above
	is to assign to every preordered set $(X,\leq_X)$ its underlying set $R(X,\leq_X):=X$ and to every monotone map $f$ its hypograph
	\[
		R(f) := \{ (x,y) \in X \times Y \mid y \le f(x) \}.
	\]
	Then condition~\eqref{order_pres_refl} is immediate.
	For $\freccia{X}{f}{Y}$ and $\freccia{Y}{g}{Z}$, strict functoriality
	\[
		R(g \circ f) = R(g) \circ R(f)
	\]
	is proved by showing both containments as follows. If $x \in X$ and $z \in Z$ are such that $x (R(g) \circ R(f)) z$, then this means that there is $y$ such that
	\[
		y \le f(x) \qquad z \le g(y)
	\]
	But then applying monotonicity of $g$ and transitivity of $\le$ implies $z \le g(f(x))$, which gives the desired $x R(g \circ f) z$.
	On the other hand, if $x R(g \circ f) z$ holds, then we can simply take $y \coloneqq f(x)$ as our witness of $x (R(g) \circ R(f)) z$, since then both of the above inequalities are satisfied.

	The lax preservation of identities $\id_X \subseteq R(\id_X)$ is easy to see, and it is worth noting that equality does not hold unless $X$ is discrete.

	Now we focus on the lax monoidal structure of $R$. A natural choice would be that of considering the strict monoidality of $R$ given by the trivial components $\psi_{A,B}=\id_{A\times B}$ and $\psi_I=\id_I$. However, in this case we would obtain that $R(!_A) = {}!_A$ but, in general, we would have just a strict inclusion $\nabla_A \subset R(\nabla_A)$, i.e.~$R$ would not be colax cartesian with the lax monoidal structure given by the trivial components. So we need to consider another lax monoidal structure for $R$. To this end we observe that, by definition of $R$, we have
	\begin{equation}\label{eq R(nabla)=R(id)nabla}
	 R(\nabla_A)=R(\id_{A\times A})\nabla_{RA}.
	\end{equation}
    Hence, we consider the lax monoidal structure of $R$ given by $\psi_{A,B} := R(\id_{A\times B})$ and $\psi_I := R(\id_I)=\id_I$. It is direct to see that such a choice of $\psi$ provides a natural transformation $\psi:\otimes \circ (R\times R)\to R\circ \otimes $, and that the associativity and unitality axioms of a lax monoidal lax-on-identities functor are satisfied.
    %\tob{the latter only laxly?}
    Thus, $R$ is a lax monoidal lax-on-identities functor.
    It is also colax cartesian since, by \eqref{eq R(nabla)=R(id)nabla}, we have
    \[
	    R(\nabla_A)=R(\id_{A\times A})\nabla_{R(A)}=\psi_{A,A}\nabla_{R(A)}
    \]
    and $R(!) =\psi_0 {}!$ holds since $\psi_0=\id_I$.
%
\iffalse
The proof of Theorem~\ref{thm:simple_complete_bi_lax} amounts to considering hom-functors of $\mC$ as a category enriched in $\Preord$.
As it turns out, one can also consider $\mC$ as a category ``almost enriched'' in $\Rel$, where the composition relation
\[
	\mC(X, Y) \otimes \mC(Y, Z) \longrightarrow \mC(X, Z)
\]
is defined to be the relation that relates $(f,g)$ on the left to $h$ on the right if and only if $h \le gf$, and with identity arrow
\[
	1 \to \mC(X, X)
\]
the relation that picks out $\id_X$.
We say that this makes $\mC$ ``almost'' into a category enriched in $\Rel$, since although the associativity of composition holds on the nose, the unitality only holds laxly: with the obvious arrows, we have
\[
	\begin{tikzcd}[row sep=2.0pc]
		& \mC(X, X) \otimes \mC(X, Y) \ar{d}[xshift=-2pc,yshift=-0.7pc,rotate=60]{\le}	\\
		\mC(X, Y) \ar[equal]{r} \ar{dr} \ar{ur}	& \mC(X, Y)				\\
		& \mC(X, Y) \otimes \mC(Y, Y) \ar{u}[xshift=-1.5pc,yshift=0.0pc,rotate=-60]{\le}
	\end{tikzcd}
\]
So although a hom-functor $\freccia{\mC}{\mC(A,-)}{\Rel}$ preserves binary composition strictly (by strict associativity), it preserves identities only laxly (by the lower triangle),
In general, let us say that $F : \mC \to \Rel$ is \emph{identity-lax}, if it is a lax functor that preserves binary composition on the nose, and identities at least laxly,
\[
	\id_{F(X)} \subseteq F(\id_X).
\]
\fi
\end{proof}

\section{Conclusion and future works}

After a string-diagrammatic presentation of gs-monoidal categories and a few related structures.
our paper introduces their preorder-enriched extensions, and in particular oplax cartesian categories. 
We show that these categories represent a core language for relations and partial functions,
hence fitting in the current interest for the use of visual languages in modelling computational 
formalisms (see e.g.~\cite{Bonchi0Z21,fabio2022} and the references therein).
Such an interest in confirmed by our proof that such categories naturally arise in terms of Kleisli categories
and span categories, thus providing a large number of potential case studies.
% and systematising part of a large thread of current research. 
%
As for the former, we also show that the canonical functor from the Kleisli category of a given monad
back to the original category has a gs-monoidal structure.
%
%More importantly, we show how the two approaches can be related by means of a 
%suitable lax monoidal functor, putting their connection on a firm ground.
%\tob{dubious}
%
%Aware of the spread of work on related notions in mathematics and computer science, our first purpose has been to try 
%to provide a systematized presentation of the literature regarding gs-monoidal categories.
%
%
%The characterisation we propose via Kleisli and span categories shows that the 
%gs-monoidal structure naturally arises in several situations, In particular, the result we present in 
%Theorem~\ref{thm: S is colax Frobenious opcartesian lax-functor} shows how the canonical lax functor 
%from a Kelisli category in the category of spans suitably preserves the oplax cartesian structure.
%
Finally, we turn to functorial completeness, showing it for oplax cartesian categories 
with respect to certain mappings into $\Rel$, thus generalising~\cite{CorradiniGadducci02}.
%

%Also on the technical side, f
Future work will focus on completeness with respect to $\Rel$, in order to strengthen our result 
to genuine functors instead of lax-on-identities ones, following ideas in~\cite{Gadducci08}. 
We will also attempt to establish a stronger connection between the constructions involving Kleisli and span categories.
Furthermore, it could be interesting to address the traced monoidal case, as well as take into account the whole 2-categorical structure of $\mathbf{Span}(\mC)$,
for a presentation of graph rewriting and of inequational deduction for relational algebras, investigated in a 
set-theoretical flavour in~\cite{CorradiniGadducci99b,CorradiniGKK07},
and recently (at least for the diagrammatic presentation of the monoidal closed case) in~\cite{ghica}.

From a practical standpoint, we believe that our completeness results will find application in rewriting theory. The idea is that a completeness result makes it possible to derive new properties of a rewriting system in a straightforward manner, simply by analysing rewriting in a standard category like the one of preordered sets as in Theorem~\ref{thm:simple_complete_bi_lax}. Indeed, in rewriting the oplax structure amounts to changing the topology of the underlying graph \cite{CorradiniGadducci99b} so that the analysis of confluence could be performed simply by looking at the associated preorder in the model. Given that term graph rewriting is frequently used in functional programming language implementations, where the sharing of sub-terms is a key issue, our result may be concretely useful in improving such language implementations.
%Concerning the case studies arising from such characterisations, we will put our preorder machinery 
%developed for gs-monoidal categories to work.
%Collegamento term graphs
%Sviluppi futuri: studiare altre strutture preservate dalle Kleisli e applicazioni Markov
% use section* for acknowledgment
%\ifCLASSOPTIONcompsoc
%  % The Computer Society usually uses the plural form
%  \section*{Acknowledgments}
%\else
%  % regular IEEE prefers the singular form
%  \section*{Acknowledgment}
%\fi

%%===========================================================================================%%
%% If you are submitting to one of the Nature Portfolio journals, using the eJP submission   %%
%% system, please include the references within the manuscript file itself. You may do this  %%
%% by copying the reference list from your .bbl file, paste it into the main manuscript .tex %%
%% file, and delete the associated \verb+\bibliography+ commands.                            %%
%%===========================================================================================%%

\section*{Declarations}

\bmhead{Author's Contribution} Tobias Fritz, Fabio Gadducci, Davide Trotta and Andrea Corradini contributed equally to this work.
\bmhead{Conflict of interest}The authors declare that they have no conflict of interest.
\bmhead{Availability of Data and Materials}Not applicable
\bmhead{Funding}Tobias Fritz acknowledges funding
by the Austrian Science Fund (FWF) through the project ``P 35992-N''. 
Andrea Corradini, Fabio Gadducci and Davide Trotta acknowledge funding
by the Italian Ministry of Education, University and Research (MIUR) through the project PRIN 20228KXFN2 ``STENDHAL''.

\bibliography{biblio_davide}

\begin{appendices}

\section{Lax/oplax/bilax monoidal functors}
\label{sec:lax_app}

This section recalls the definitions of lax, colax, and bilax monoidal functors, see e.g.~\cite{aguiar2010}.
Throughout, $\mC$ and $\mD$ are symmetric monoidal categories with tensor functor $\otimes$ and monoidal unit $I$, and we assume that 
$\otimes$ strictly associates without loss of generality in order to keep the diagrams simple.
Left and right unitors are denoted by $\lambda$ and $\rho$, respectively\footnote{Strict unitality could also be assumed, but that choice would make some diagrams potentially confusing.}, and braidings by $\gamma$.

\begin{mydefinition}\label{def:lax monoidal functor}
A functor $\freccia{\mC}{F}{\mD}$ is \textbf{lax monoidal} if it is equipped with a natural transformation 
\[\freccia{\otimes \circ \, (F\times F)}{\psi}{F\circ \otimes}\]
and an arrow $\freccia{I}{\psi_0}{F(I)}$ such that the associativity diagrams
% https://q.uiver.app/?q=WzAsNCxbMCwwLCJGQVxcb3RpbWVzX3tcXG1hdGhjYWx7RH19RkJcXG90aW1lc197XFxtYXRoY2Fse0R9fUZDIl0sWzMsMCwiRkFcXG90aW1lc197XFxtYXRoY2Fse0R9fUYoQlxcb3RpbWVzX3tcXG1hdGhjYWx7Q319QykiXSxbMCwyLCJGKEFcXG90aW1lc197XFxtYXRoY2Fse0N9fUIpXFxvdGltZXNfe1xcbWF0aGNhbHtEfX1GQyJdLFszLDIsIkYoQVxcb3RpbWVzX3tcXG1hdGhjYWx7Q319Qlxcb3RpbWVzX3tcXG1hdGhjYWx7Q319QykiXSxbMCwxLCJpZFxcb3RpbWVzX3tcXG1hdGhjYWx7RH19XFxwc2lfe0IsQ30iXSxbMCwyLCJcXHBzaV97QSxCfVxcb3RpbWVzX3tcXG1hdGhjYWx7RH19aWQiLDJdLFsyLDMsIlxccHNpX3tBXFxvdGltZXNfe1xcbWF0aGNhbHtDfX1CLEN9IiwyXSxbMSwzLCJcXHBzaV97QSxCXFxvdGltZXNfe1xcbWF0aGNhbHtDfX1DfSJdXQ==
\[\begin{tikzcd}[column sep=3ex]
	{F(A)\otimes F(B)\otimes F(C)} &&& {F(A)\otimes F(B\otimes C)} \\
	\\
	{F(A\otimes B)\otimes F(C)} &&& {F(A\otimes B\otimes C)}
	\arrow["{\id\otimes \,\psi_{B,C}}", from=1-1, to=1-4]
	\arrow["{\psi_{A,B}\otimes {}\id}"', from=1-1, to=3-1]
	\arrow["{\psi_{A\otimes B,C}}"', from=3-1, to=3-4]
	\arrow["{\psi_{A,B\otimes C}}", from=1-4, to=3-4]
\end{tikzcd}\]
and the unitality diagrams commute
\begin{equation}
\label{eq:lax_monoidal_unitality}
\begin{tikzcd}[column sep=1ex]
	{I\otimes F(A)} && F(A) && {F(A)\otimes  I} && F(A) \\
	\\
	{F(I)\otimes F(A)} && {F(I\otimes A)} && {F(A)\otimes F(I)} && {F(A\otimes  I).}
	\arrow["{\psi_{I,A}}"', from=3-1, to=3-3]
	\arrow["{F(\lambda_A)}", from=1-3, to=3-3]
	\arrow["{\psi_0\otimes {}\id}"', from=1-1, to=3-1]
	\arrow["{\lambda_{FA}}"', from=1-3, to=1-1]
	\arrow["{\rho_{FA}}"', from=1-7, to=1-5]
	\arrow["{\id\otimes \psi_0}"', from=1-5, to=3-5]
	\arrow["{\psi_{A,I}}"', from=3-5, to=3-7]
	\arrow["{F(\rho_A)}", from=1-7, to=3-7]
\end{tikzcd}
\end{equation}
%commute.

$F$ is said to be \textbf{lax symmetric monoidal} if also the following diagram commutes
	% https://q.uiver.app/?q=WzAsNCxbMCwwLCJGKEFcXG90aW1lcyBCKSJdLFsyLDAsIkYoQlxcb3RpbWVzIEEpIl0sWzAsMSwiRihBKVxcb3RpbWVzIEYoQikiXSxbMiwxLCJGKEIpXFxvdGltZXMgRihBKSJdLFswLDEsIkYoXFxnYW1tYV57XFxtYXRoY2Fse0N9fV97QSxCfSkiXSxbMCwyLCJcXHBzaV97QSxCfSIsMl0sWzIsMywiXFxnYW1tYV57XFxtYXRoY2Fse0R9fV97RkEsRkJ9IiwyXSxbMSwzLCJcXHBzaV8ge0IsQX0iXV0=
\[\begin{tikzcd}
	{F(A\otimes B)} && {F(B\otimes A)} \\
	{F(A)\otimes F(B)} && {F(B)\otimes F(A)}
	\arrow["{F(\gamma_{A,B})}", from=1-1, to=1-3]
	\arrow["{\psi_{A,B}}"', from=1-1, to=2-1]
	\arrow["{\gamma_{FA,FB}}"', from=2-1, to=2-3]
	\arrow["{\psi_ {B,A}}", from=1-3, to=2-3]
\end{tikzcd}\]
\end{mydefinition}

For example, if $\mC$ is the terminal monoidal category with only one object $I$ and $\id_I$ as the only arrow, then $F$ is simply a monoid in $\mD$.
We do not spell out the following dual version in full detail.

\begin{mydefinition}
	A functor $\freccia{\mC}{F}{\mD}$ is \textbf{oplax monoidal} if it is equipped with a
	natural transformation 
	\[\freccia{F\circ \otimes }{\phi}{ \otimes\circ (F\otimes F)}\]
	and a map
	$\freccia{F(I)}{\phi_0}{I}$
	satisfying axioms dual to those in Definition \ref{def:lax monoidal functor}. Similarly, an \textbf{oplax symmetric monoidal functor} is an oplax monoidal functor such that $\phi$ commutes with the braiding $\gamma$.
\end{mydefinition}

We also have the notion of \textbf{strong symmetric monoidal functor}, which is a lax symmetric monoidal functor with invertible structure arrows, or equivalently an oplax monoidal functor with invertible structure arrows; and that of \textbf{strict symmetric monoidal functor}, in which the structure arrows are identities.

A monoid and comonoid structure on an object in a symmetric monoidal category often interact in a nice way, either such that they form a \emph{bimonoid} or a \emph{Frobenius monoid} (and sometimes both).
The following definition (see~\cite{aguiar2010}) generalises the former notion to functors.

\begin{mydefinition}\label{def:bilax monoidalfunctor}
	A functor $\freccia{\mC}{F}{\mD}$ is \textbf{bilax monoidal} if it is equipped with a lax monoidal structure $\psi,\psi_0$ and an oplax monoidal structure $\phi,\phi_0$ such that the following compatibility conditions hold
\begin{itemize}
	\item 	\textbf{Braiding.} The following diagram commutes

	\[\begin{tikzcd}[column sep=-5ex]
		& {F(A\otimes B) \otimes F(C\otimes D)} \\
		{F(A\otimes B\otimes C\otimes D)} && {F(A)\otimes F(B)\otimes F(C)\otimes F(D)} \\
		{F(A\otimes C\otimes B\otimes D)} && {F(A)\otimes F(C)\otimes F(B)\otimes F(D)} \\
		& {F(A\otimes C)\otimes F(B\otimes D)}
		\arrow["{\psi_{A\otimes B,C\otimes D}}"', from=1-2, to=2-1]
		\arrow["{\phi_{A,B}\otimes \phi_{C,D}}", from=1-2, to=2-3]
		\arrow["{F(\id\otimes \gamma\otimes \id)}"', from=2-1, to=3-1]
		\arrow["{\id\otimes \gamma\otimes \id}", from=2-3, to=3-3]
		\arrow["{\phi_{A\otimes C,B\otimes D}}"', from=3-1, to=4-2]
		\arrow["{\psi_{A,C}\otimes \psi_{B,D}}", from=3-3, to=4-2]
	\end{tikzcd}\]
	\item \textbf{Unitality.} The following diagrams commute
\[\begin{tikzcd}
	I & {F(I)} & {F(I\otimes I)} & I & {F(I)} & {F(I\otimes I)} \\
	{I\otimes I} && {F(I)\otimes F(I)} & {I\otimes I} && {F(I) \otimes F(I)} \\
	&& {F(I)} \\
	& I && I
	\arrow["{\psi_0}", from=1-1, to=1-2]
	\arrow["{F(\lambda_I)}", from=1-2, to=1-3]
	\arrow["{\lambda_I}"', from=1-1, to=2-1]
	\arrow["{\phi_{I,I}}", from=1-3, to=2-3]
	\arrow["{\psi_{I,I}}"', from=2-6, to=1-6]
	\arrow["{F(\lambda_I^{-1})}"', from=1-6, to=1-5]
	\arrow["{\phi_0}"', from=1-5, to=1-4]
	\arrow["{\phi_0\otimes \phi_0}", from=2-6, to=2-4]
	\arrow["{\lambda_I^{-1}}", from=2-4, to=1-4]
	\arrow["{\psi_0}", from=4-2, to=3-3]
	\arrow["{\phi_0}", from=3-3, to=4-4]
	\arrow[Rightarrow, no head, from=4-2, to=4-4]
	\arrow["{\psi_0\otimes \psi_0}"', from=2-1, to=2-3]
\end{tikzcd}\]
\end{itemize}
\end{mydefinition}

We also say that $F$ is \textbf{bilax symmetric monoidal} if in addition both the lax and oplax structures are symmetric.

\iffalse
The following notion is due to Szlachányi~\cite[Definition~1.7]{Frobenius_functor_original}, see also~\cite{Frobenius_functor}.
\begin{mydefinition}[Frobenius monoidal functor]\label{def:frobenius monoidalfunctor}
	A functor $\freccia{\mC}{F}{\mD}$ is \textbf{Frobenius monoidal} if it is
	equipped with a monoidal structure $\psi,\psi_0$ and an oplax monoidal structure $\phi,\phi_0$ such that
	the following diagrams commute
\[\begin{tikzcd}
	{F(A\otimes B)\otimes F(C)} && {F(A\otimes B\otimes C)} \\
	{F(A)\otimes F(B)\otimes F(C)} && {F(A)\otimes F(B\otimes C)} \\
	{F(A)\otimes F(B\otimes C)} && {F(A\otimes B\otimes C)} \\
	{F(A)\otimes F(B)\otimes F(C)} && {F(A\otimes B)\otimes F(C)}
	\arrow["{\phi_{A,B}\otimes {}\id}"', from=1-1, to=2-1]
	\arrow["{\psi_{A\otimes B,C}}", from=1-1, to=1-3]
	\arrow["{\phi_{A,B\otimes C}}", from=1-3, to=2-3]
	\arrow["{\id\otimes \,\psi_{B,C}}"', from=2-1, to=2-3]
	\arrow["{\psi_{A,B\otimes C}}", from=3-1, to=3-3]
	\arrow["{\phi_{A\otimes B,C}}", from=3-3, to=4-3]
	\arrow["{\id\otimes \,\phi_{B,C}}"', from=3-1, to=4-1]
	\arrow["{\psi_{A\otimes B,C}\otimes {}\id}"', from=4-1, to=4-3]
\end{tikzcd}\]
\end{mydefinition}

There is again the obvious notion of \textbf{Frobenius symmetric monoidal} functor.
%
In the main text, we also use all of these definitions for lax functors, to which they apply without further modifications.

\fi

\section{Commutative monads}\label{sec: strong and commutative monad}
\iffalse
Here we start by recalling the definition of Kleisli category of a monad.

\begin{mydefinition}
The \textbf{Kleisli category} $\mC_{T}$ of a monad $(T,\mu,\eta)$ on a category $\mC$ has as objects those of $\mC$, and for every $X$ and $Y$ in $\mC_{T}$ the hom-set
\[
	\mC_{T}(X,Y) := \mC(X,T(Y)).
\]
Writing $\freccia{X}{f^\sharp}{T(Y)}$ for the representative in $\mC$ of an arrow $\freccia{X}{f}{Y}$ in $\mC_T$, composition of $f\in \mC_{T}(X,Y)$ and $g\in \mC_{T}(Y,Z)$ is defined as
\[
	(g\circ f)^\sharp := \mu_Z \circ T(g^\sharp) \circ f^\sharp.
\]
\end{mydefinition}

In particular, we have $\id_X^\sharp = \eta_X$, since the naturality of $\eta$ and the unitality equations for $T$ imply
\[
	\mu_Y \circ T(f^\sharp) \circ \eta_X = \mu_Y \circ T(\eta_Y) \circ f^\sharp = f^\sharp,
\]
and this shows the required equations $f \circ \id_X = \id_Y \circ f = f$.
%Given a monad $\freccia{\mC}{T}{\mC}$ on a symmetric monoidal category $(\mC,\otimes,I,\alpha,r,l,\gamma)$, in general  the monoidal product $\otimes$ is not preserved by the monad $T$. More generally, we haven't any canonical choice of arrow $T(X)\otimes T(Y)\rightarrow T(X\otimes Y) $.

\medskip
\fi
A \emph{strength} and a \emph{costrength} for a monad on a monoidal category are structures relating the monad with the tensor product of the category at least \emph{in one direction}. A monad equipped with a strength is called a strong monad.
This notion was introduced by Kock in \cite{Kock72,Kock78} as an alternative description of enriched monads.
Strong monads have been successfully used in computer science, playing a fundamental role in Moggi’s theory of computation \cite{Moggi89,Moggi91}.

We recall these concepts in the following definitions.
\begin{mydefinition}
A \textbf{strong monad} $(T,\mu,\eta,t)$ on a symmetric monoidal category $\mC$ is a monad $(T,\mu,\eta)$ on $\mC$ together with a natural transformation
\[
	\freccia{X\otimes T(Y)}{t_{X,Y}}{T(X\otimes Y)},
\]
called \textbf{strength}, such that the following diagrams commute for all objects $X$, $Y$, $Z$ of $\mC$

% https://q.uiver.app/?q=WzAsMTQsWzAsMiwiSVxcb3RpbWVzIFQoWCkiXSxbMSwyLCJUKElcXG90aW1lcyBYKSJdLFsxLDMsIlQoWCkiXSxbMCw0LCJYXFxvdGltZXMgWVxcb3RpbWVzIFQoWikiXSxbMSw1LCJYXFxvdGltZXMgVChZXFxvdGltZXMgWikiXSxbMCw2LCJYIFxcb3RpbWVzIFReMihZKSJdLFsxLDYsIlQoWFxcb3RpbWVzIFQoWSkpIl0sWzIsNiwiVF4yKFhcXG90aW1lcyBZKSJdLFswLDcsIlhcXG90aW1lcyBUKFkpIl0sWzIsNywiVChYXFxvdGltZXMgWSkiXSxbMCwwLCJYXFxvdGltZXMgWSJdLFsxLDAsIlhcXG90aW1lcyBUKFkpIl0sWzEsMSwiVChYXFxvdGltZXMgWSkiXSxbMiw0LCJUKFhcXG90aW1lcyBZXFxvdGltZXMgWikiXSxbMCwyLCJcXGxhbWJkYV57LTF9X3tUKFgpfSIsMl0sWzAsMSwidF97SSxYfSJdLFsxLDIsIlQoXFxsYW1iZGFeey0xfV9YKSJdLFs1LDYsInRfe1gsVChZKX0iXSxbNiw3LCJUKHRfe1gsWX0pIl0sWzUsOCwiaWRfWFxcb3RpbWVzIFxcbXVfWSIsMl0sWzgsOSwidF97WCxZfSIsMl0sWzcsOSwiXFxtdV97WFxcb3RpbWVzIFl9Il0sWzEwLDExLCJpZF9YXFxvdGltZXMgXFxldGFfWSJdLFsxMCwxMiwiXFxldGFfe1hcXG90aW1lcyBZfSIsMl0sWzExLDEyLCJ0X3tYLFl9Il0sWzQsMTMsInRfe1gsWVxcb3RpbWVzIFp9IiwyXSxbMyw0LCJpZF9YXFxvdGltZXMgdF97WSxafSIsMl0sWzMsMTMsInRfe1hcXG90aW1lcyBZLFp9Il1d

\[\begin{tikzcd}
	{X\otimes Y} & {X\otimes T(Y)}  & {I\otimes T(X)} & {T(I\otimes X)} \\
	& {T(X\otimes Y)} && {T(X)}\\
%	{I\otimes T(X)} & {T(I\otimes X)} \\
%	& {T(X)} \\
	{X\otimes Y\otimes T(Z)} && {T(X\otimes Y\otimes Z)} \\
	& {X\otimes T(Y\otimes Z)} \\
	{X \otimes T^2(Y)} & {T(X\otimes T(Y))} & {T^2(X\otimes Y)} \\
	{X\otimes T(Y)} && {T(X\otimes Y)}
	\arrow["{\lambda^{-1}_{T(X)}}"', from=1-3, to=2-4]
	\arrow["{t_{I,X}}", from=1-3, to=1-4]
	\arrow["{T(\lambda^{-1}_X)}", from=1-4, to=2-4]
	\arrow["{t_{X,T(Y)}}", from=5-1, to=5-2]
	\arrow["{T(t_{X,Y})}", from=5-2, to=5-3]
	\arrow["{\id_X\otimes \mu_Y}"', from=5-1, to=6-1]
	\arrow["{t_{X,Y}}"', from=6-1, to=6-3]
	\arrow["{\mu_{X\otimes Y}}", from=5-3, to=6-3]
	\arrow["{\id_X\otimes \eta_Y}", from=1-1, to=1-2]
	\arrow["{\eta_{X\otimes Y}}"', from=1-1, to=2-2]
	\arrow["{t_{X,Y}}", from=1-2, to=2-2]
	\arrow["{t_{X,Y\otimes Z}}"', from=4-2, to=3-3]
	\arrow["{\id_X\otimes t_{Y,Z}}"', from=3-1, to=4-2]
	\arrow["{t_{X\otimes Y,Z}}", from=3-1, to=3-3]
\end{tikzcd}\]
\end{mydefinition}

\begin{myexample}
	The list monad $\freccia{\Set}{T_{\mathrm{list}}}{\Set}$ is strong. Given two sets $X$ and $Y$, the strength component
	\[
		\freccia{X \times T_{\mathrm{list}}(Y)}{t_{X,Y}}{T_{\mathrm{list}}(X \times Y)}
	\]
	is given by the function assigning to an element $(x,[y_1,\dots,y_m])) $ of $ X\times T_{\mathrm{list}}(Y)$ the element $[(x,y_1),\dots,(x,y_m)]$ of $ T_{\mathrm{list}}(X\times Y)$.
	
	In fact, \emph{any} monad on the cartesian category $\Set$ is strong in a unique way, where the strength can be defined similarly to the strength of the list monad. We refer to  \cite{Kock72,Kock78} for more details.
\end{myexample}

\begin{myremark}
The braiding $\gamma$ of $\mC$ let us define a \textbf{costrength} with components
\[\freccia{T(X)\otimes Y}{t_{X,Y}'}{T(X\otimes Y)}\]
given by
\[t_{X,Y}':=T(\gamma_{Y,X}) \circ t_{Y,X} \circ \gamma_{T(X),Y}.\]
It satisfies axioms that are analogous to those of strength.
\end{myremark}

\begin{mydefinition}
\label{commutative_monad}
A strong monad $(T,\mu,\eta,t)$ on a symmetric monoidal category $\mC$ is said to be \textbf{commutative} if the following 
diagram commutes for every object $X$ and $Y$
\iffalse
\begin{equation}
	\begin{split}
		\label{commutative_monad_diag}
		\xymatrix@+1pc{
			T(X)\otimes T(Y) \ar[r]^{t_{T(X),Y}} \ar[d]_{t'_{X,T(Y)}} & T(T(X)\otimes Y) \ar[r]^{T(t'_{X,Y})} & T^2(X\otimes Y) \ar[d]^{\mu_{X\otimes Y}}\\
			T(X\otimes T(Y))\ar[r]_{T(t_{X,Y})} & T^2(X\otimes Y) \ar[r]_{\mu_{X\otimes Y}} & T(X\otimes Y)
		}
	\end{split}
\end{equation}
\fi
\begin{equation}
	\label{commutative_monad_diag}
% https://q.uiver.app/?q=WzAsNixbMCwwLCJUKFgpXFxvdGltZXMgVChZKSJdLFsxLDAsIlQoVChYKVxcb3RpbWVzIFkpIl0sWzIsMCwiVF4yKFhcXG90aW1lcyBZKSJdLFswLDEsIlQoWFxcb3RpbWVzIFQoWSkpIl0sWzIsMSwiVChYXFxvdGltZXMgWSkiXSxbMSwxLCJUXjIoWFxcb3RpbWVzIFkpIl0sWzAsMSwidF97VChYKSxZfSJdLFsxLDIsIlQodCdfe1gsWX0pIl0sWzAsMywidCdfe1gsVChZKX0iLDJdLFszLDUsIlQodF97WCxZfSkiLDJdLFs1LDQsIlxcbXVfe1hcXG90aW1lcyBZfSIsMl0sWzIsNCwiXFxtdV97WFxcb3RpbWVzIFl9Il1d
\begin{tikzcd}
	{T(X)\otimes T(Y)} & {T(T(X)\otimes Y)} & {T^2(X\otimes Y)} \\
	{T(X\otimes T(Y))} & {T^2(X\otimes Y)} & {T(X\otimes Y)}
	\arrow["{t_{T(X),Y}}", from=1-1, to=1-2]
	\arrow["{T(t'_{X,Y})}", from=1-2, to=1-3]
	\arrow["{t'_{X,T(Y)}}"', from=1-1, to=2-1]
	\arrow["{T(t_{X,Y})}"', from=2-1, to=2-2]
	\arrow["{\mu_{X\otimes Y}}"', from=2-2, to=2-3]
	\arrow["{\mu_{X\otimes Y}}", from=1-3, to=2-3]
\end{tikzcd}
\end{equation}

\end{mydefinition}

\begin{myremark}
	\label{rem:comm_vs_symmon}
	It is well-known that on a symmetric monoidal category, commutative monads are equivalent to symmetric monoidal monads~\cite[Theorem~2.3]{Kock72}.
	Indeed, the diagonal of~\eqref{commutative_monad_diag} equips the functor $T$ with a lax symmetric monoidal structure, whose components we denote by $c_{X,Y} : T(X) \otimes T(Y) \to T(X \otimes Y)$.
\end{myremark}

\end{appendices}

% common bib file
%% if required, the content of .bbl file can be included here once bbl is generated
%%\input sn-article.bbl

\end{document}